\tikzstyle{level 1}=[level distance=4cm, sibling distance=2.5cm]
\tikzstyle{level 2}=[level distance=8cm, sibling distance=0.6cm]
\tikzstyle{bag} = [text width=4em, text centered]
\tikzstyle{end} = [circle, minimum width=3pt,fill, inner sep=0pt]
\newcommand{\manuallabel}[2]{\def\@currentlabel{#2}\label{#1}}
\newcommand\numberthis{\addtocounter{equation}{1}\tag{\theequation}}
\algnewcommand{\Inputs}[1]{%
  \State \textbf{Inputs:}
  \Statex \hspace*{\algorithmicindent}\parbox[t]{.8\linewidth}{\raggedright #1}
}
\algnewcommand{\Initialize}[1]{%
  \State \textbf{Initialize:}
  \Statex \hspace*{\algorithmicindent}\parbox[t]{.8\linewidth}{\raggedright #1}
}
\newcommand*{\inlineequation}[2][]{%
  \begingroup
    % Put \refstepcounter at the beginning, because
    % package `hyperref' sets the anchor here.
    \refstepcounter{equation}%
    \ifx\\#1\\%
    \else
      \label{#1}%
    \fi
    % prevent line breaks inside equation
    \relpenalty=10000 %
    \binoppenalty=10000 %
    \ensuremath{%
      % \displaystyle % larger fractions, ...
      #2 %
    }%
    ~\@eqnnum
  \endgroup
}
\DeclareMathAlphabet{\mathpzc}{OT1}{pzc}{m}{it}
\def\Esp{\mathbb{E}}
\def\é{\'{e}}
\def\è{\`{e}}
\def\ê{\^{e}}
\def\à{\`{a}}
\def\ô{\^{o}}
\DeclareMathOperator*{\esssup}{ess\,sup}
\newcolumntype{C}[1]{>{\centering\arraybackslash}p{#1}}
\newcolumntype{L}[1]{>{\raggedleft\arraybackslash}p{#1}}
\newcolumntype{R}[1]{>{\raggedright\arraybackslash}p{#1}}
\newtheorem{theo}{Theorem}
\newtheorem{prop}{Proposition}
\newtheorem{lem}{Lemma}
\newtheorem{Assumption}{Assumption}
\newtheorem{cor}{Corollary}
\newtheorem{defi}{Definition}
\newtheorem{rem}{Remark}
\newtheorem{example}{Example}
\title{From asymptotic properties of general point processes\\
to the ranking of financial agents}
\author{Othmane Mounjid\thanks{\'{E}cole Polytechnique, CMAP}, 
  Mathieu Rosenbaum\footnotemark[1] ~%
  and  %
  Pamela Saliba \footnotemark[1] \thanks{Autorit\'{e} des March\'{e}s Financiers}{}}
\date{\today\\}
\begin{document}
\maketitle

\begin{abstract}
\noindent We propose a general non-linear order book model that is built from the individual behaviours of the agents. Our framework encompasses Markovian and Hawkes based models. Under mild assumptions, we prove original results on the ergodicity and diffusivity of such system. Then we provide closed form formulas for various quantities of interest: stationary distribution of the best bid and ask quantities, spread, liquidity fluctuations and price volatility. These formulas are expressed in terms of individual order flows of market participants. Our approach enables us to establish a ranking methodology for the market makers with respect to the quality of their trading. 
\end{abstract}

\textbf{Keywords:} Market microstructure, limit order book, high-frequency trading, market making, queuing model, Hawkes processes, ergodic properties, volatility, regulation.

%\tableofcontents

\section{Introduction}
In the last two decades, the development of electronic and fragmented markets has lead to a deep disruption in the landscape of market participants. In particular, traditional market making institutions have been largely replaced by high-frequency market makers. Market makers are intermediaries between buyers and sellers. In an electronic limit order book, they provide liquidity to market participants willing to trade immediately by simultaneously posting limit orders on both sides of the book. Market makers undergo different types of risk, mainly adverse selection and inventory risks. To avoid adverse selection risk, they must be able to update very frequently their quotes in response to other order submissions or cancellations. To minimise their inventory risk, they need to use smart algorithms enabling them to hold positions for very
short time periods only, see for example \cite{madhavan2000market}.\\

\noindent High-frequency traders (HFTs) are now the only market participants that are indeed able to play the role of market makers on liquid stocks, see \cite{jones2013we}. This is achieved thanks to an intense use of speed (co-location) and technology. They are supposedly capable to maintain a strong presence at best price limits and control adverse selection at the same time, see \cite{jovanovic2016middlemen}, while operating efficient inventory management in an increasingly fast-moving market, see \cite{ait2017high, baron2018risk}. This is to the extent that HFTs are described as the new market makers in \cite{menkveld2013high}.\\

\noindent Since the arrival of these new market makers, academics, regulators and practitioners aim at understanding whether their activity is harmful or beneficial for markets. On the one hand, some argue that HFTs have a positive impact on markets: the competition between market makers leads to an increase in market depth, to narrower bid-ask spreads which is equivalent to reduced trading costs for other investors, see \cite{hendershott2011does, jovanovic2016middlemen} and to better price discovery,
%because high frequency market makers are better able to adjust their quotes, 
see \cite{hendershott2011does, riordan2012latency}.  On the other hand, others assert that high-frequency market makers have toxic consequences. For example, they worsen market volatility during flash crashes by aggressively liquidating their long positions, see \cite{kirilenko2017flash, madhavan2012exchange}.\\

\noindent One important common point in most studies analysing the behaviour of HFTs is that they try to measure how HFTs impact the market as a group, without investigating individual behavioural disparities among them. The authors in \cite{megarbane2017behavior,saliba2019information} shed light on the fact that all HFTs do not behave similarly, showing for example that they have very different levels of aggressiveness and liquidity provision. In this paper, we wish to participate to the debate about the role of HFTs on market quality by bringing some new quantitative elements enabling regulators and exchanges to assess the individual effects of each high-frequency market maker operating on the market. In particular, we want to be able to rank market makers according to the quality of their trading.\\

\noindent We use several metrics for market quality such as spread and liquidity fluctuations, but a particular focus is given to the price volatility. This idea of disentangling market participants contribution to volatility is used in \cite{rambaldi2018disentangling}. In this work, the authors nicely model the interactions between the various orders of the different market participants using linear Hawkes processes. This model is very interpretable: an order of type $A$ of Agent $i$ raises the likelihood of an order of type $B$ of Agent $j$ by a certain amount. Consequently, the authors naturally define the contribution of Agent $A$ to the volatility by the weighted sum over all possible types of orders of Agent $A$ of the squared mean price jump triggered by each of these orders, the associated weight being the intensity of the corresponding order type.\\

\noindent Our focus here is on market makers. Thus one crucial element to take into account is the well-known fact that the main market driver of any market making strategy is the state of the limit order book (and not single individual orders of other market participants), see \cite{huang2015simulating,lehalle2017limit,moallemi2016model}. Therefore, in the spirit of the Queue-reactive model of \cite{huang2015simulating}, we assume that the state of the order book, which is a common component, affects the interactions between our high-frequency market participants. However, to get a really accurate modelling of the behaviour of the agents, we also let their individual actions depend on their own past ones and on those of other participants, in the spirit of \cite{rambaldi2018disentangling}. We allow for strong non-linearities in the dependences with the past, leading to a much generalised version of Hawkes-Queue-reactive type order book models, see \cite{morariu2018state,wu2019queue}.\\

\noindent In this extended and non-Markovian framework, we are able to prove the ergodicity and diffusivity of our system, see \cite{huang2017ergodicity} for inspiring ideas. Furthermore, we provide asymptotic expressions for market quantities such as spread, liquidity fluctuations or price volatility in terms of the individual order flows of market participants. This notably enables us to forecast the dynamics of the market in case one market makers leaves it. The idea is that we consider that market makers interact with the market through their algorithms which are specified for example in term of average event size or in term of relative quantities such as the imbalance. If we remove one market participant while the others do not modify their algorithms, we can for instance compute a new volatility. If it is larger (smaller) than the actual one, we can say that the considered market maker has a stabilising (destabilising) effect on the market. This eventually leads us to a ranking of market makers with respect to the quality of their trading.\\

\noindent Let us now give a brief description of our model. Let $n$ be a positive integer representing the index of the $n$-th order book event $e_n$. Each event $e_n$ happens at time $T_n$ and is characterised by a variable $X_n$ that encodes all the needed information to describe $e_n$. For example, $X_n$ contains the order size, the type of the order (limit order, liquidity consuming order such as market order or cancellation), the order posting price and the identity of the agent. A detailed description of the sequences $(T_n)_{n \geq 1}$ and $(X_n)_{n \geq 1}$ is given in Section \ref{subsec:lobdyn}. The order book state is modelled by the process  $U_n = \left(Q^{1}_n, Q^{2}_n, S_n\right)$ with $Q^{1}_n$ the available quantity at the best bid, $Q^{2}_n$ the available quantity at the best ask and $S_n$ the spread at time $T_n$. For a detailed description of the dynamic of $U_n$, see Equation (\ref{eq: dyna P Q}). Here we focus on the first limits to reduce the dimension of the state space and keep a tractable model\footnote{However, we can model deeper limits by enlarging the dimension of the state space.}. Finally, we use a general approach to infer the behaviour of the price process from that of $(U_n)$, in the spirit of \cite{huang2017ergodicity,lehalle2018optimal}, see Section \ref{sec:limitTh} for the detailed formulation. We define the non-linear Hawkes-Markovian arrival rate $\lambda_t(e)$ of an order book event $e$ ($e$ containing the identity of the involved agent) at time $t \in \mathbb{R}_+$  as follows:
\begin{align*}
\lambda_t(e) = \psi\big(e,U_{t^-},t, \sum_{T_i < t} \phi(e,U_{t^-},t-T_i,X_i) \big),
\end{align*}
where $\psi$ is a non-linear function, $U_{t^-}$ is the order book state relative to the last event before $t$ and $\phi$ is the Hawkes kernel representing the influence of past events. The functions $\phi$ and $\psi$ are both $\mathbb{R}_+$-valued. In absence of the kernel $\phi$, the function $\psi$ leads to a classical Markovian approach since the arrival rate of an event $e$ depends essentially on the order book state $U_{t^-}$. When $\phi$ is non-zero, $\psi$ controls the interactions between the past events and the current order book state. Note that we allow $\psi$ to have a polynomial growth while in the literature, it has at most a linear growth, see \cite{bremaud1996stability}. Additionally, we do not impose $\psi$ and $\phi$ to be continuous, which means that a sudden change of regime in the order book dynamic is also incorporated in our modelling. Finally, we propose an agent-based model since market participant identities are contained in the order book events $e$ through the variables $(X_i)_{i \geq 1}$.\\ 

\noindent Our framework is a generalised order book model where the arrival rate of the events follows a non-linear Hawkes-type dynamic that depends on the order book state. This approach covers most existing bid-ask order book models. It is a natural extension of the Poisson intensity models, see \cite{abergel2013mathematical,smith2003statistical}, the Markovian Queue-reactive model introduced in \cite{huang2015simulating} and the Hawkes based models such as  \cite{LOBModHawkes,morariu2018state,rambaldi2018disentangling}. In this setting, under mild assumptions, we provide new ergodic results and limit theorems, expressing all the limiting quantities in terms of the individual flows of market participants. Furthermore, we build an estimation methodology for the intensity functions which turns out to be similar to the one used in the Queue-reactive case, see \cite{huang2017ergodicity}, although the model here is much more general and non-Markovian. These theoretical results for our point processes, which largely extend classical ergodicity properties limited to the Markov case, are the basis for the assessment of the role of the different market participants on market quality as explained above.\\
 
\noindent The paper is organised as follows. First, we introduce in Section \ref{sec:Mktmodel} our order book model and describe how to recover market dynamics from the individual behaviours of each agent. Then, we prove the ergodicity of our system in Section \ref{sec:Ergodicity_father} and its diffusivity in Section \ref{sec:limitTh}. In Section \ref{sec:Formulas}, we provide the needed formulas to compute the order book stationary distribution, the price volatility and the liquidity fluctuations. Finally, numerical results and ranking of market makers on several assets are provided in Section \ref{sec:Application}. Proofs and additional results are relegated to an appendix. 

\section{Market modelling}
\label{sec:Mktmodel}
In this section, we describe the order book model and show how to recover the market dynamics given the agents individual behaviours.
\subsection{Introduction to the  model}
In the order book mechanism buyers and sellers send their orders to a continuous-time double auction system. Market participants orders have a specific size that is measured in average event size (AES)\footnote{AES is the average size of events observed in the limit order book.} and the orders can be sent to different price levels that are separated by a minimum distance which is the tick size. In our model, we only consider the price levels between the best bid and ask prices to reduce the dimension of the state space. Additionally, we assume that the agents can take three elementary decisions:
\begin{itemize}
\item Insert a limit order of a specific size at the best bid or ask price, hoping to get an execution. 
\item Insert a buying or selling limit order of a specific size within the spread. 
\item Send a liquidity consuming order of a specific size at the best bid or ask price. Cancellation and market orders have the same effect on liquidity. Thus, they are aggregated to constitute the liquidity consumption orders.
\end{itemize} 
The size of the orders is not constant in the model. Finally, the mid price moves in a fixed grid separated by the tick. A simple example is to consider the case where the mid price decreases (resp. increases) by one tick when the best bid (resp. ask) is totally depleted. Here, the mid price jumps size may be larger than one tick. In the rest of the article, we take the mid price as our reference price for simplification. The dynamic of the model is illustrated in Figure \ref{fig:lob:flows}.

\begin{figure}[ht!]
\begin{center}
\begin{tikzpicture}[scale = 0.85]
\draw [double][->](2.5,0) -- ++(0,-0.5);
\draw [double][->](2.5,-5.5) -- ++(0,-0.5);
\draw [double][->](7.5,-1) -- ++(0,-0.5);
\draw [double][->](7.5,-5.5) -- ++(0,-0.5);
\draw [double][->](5,-4) -- ++(0,-0.5);
%\draw [double][->](6,-4) -- ++(0,-0.5);
\draw [black,fill=blue!60](2,-1) rectangle (3,-5);
\draw [black,fill=red!60](7,-2) rectangle (8,-5);
\draw [dotted][->](0,-5) -- ++(10,0);
%\draw (6,-5) node[sloped]{$|$};
%\draw (4,-5) node[sloped]{$|$};
\draw (5,-5) node[sloped]{$|$};
\draw (2.5,-5) node[below]{\small $ Bid $} ;
\draw (7.5,-5) node[below]{\small $ Ask $} ;
\draw (2,-3.15) node[left]{$Q^{1}_t$} ;
\draw (8,-4) node[right]{$Q^{2}_t$} ;
\footnotesize
\draw (5,-5) node[below]{$ P_t $} ;
\normalsize
\draw (10,-5) node[right]{$Price$} ;
%\draw (2.5,-0.25) node[right]{$\lambda^{1,+ }$} ;
%\draw (2.5,-5.75) node[right]{$\lambda^{1,- }$} ;
%\draw (7.5,-1.25) node[right]{$\lambda^{2,+}$} ;
%\draw (7.5,-5.75) node[right]{$\lambda^{2,-}$} ;
%\draw (3.5,-3.5) node[right]{$\lambda^{\frac{1}{2},+ }$} ;
%\draw (5.5,-3.5) node[right]{$\lambda^{-\frac{1}{2},+ }$} ;
\draw (2.5,-0.25) node[right]{$i^1$} ;
\draw (2.5,-5.75) node[right]{$c^1$} ;
\draw (7.5,-1.25) node[right]{$i^2$} ;
\draw (7.5,-5.75) node[right]{$c^2$} ;
\draw (5,-3.5) node{$i^{1(2)}_{\frac{1}{2}}$} ;
%\draw (5.5,-3.5) node[right]{$i^{1(2)}_{\frac{2}{2}}$} ;
\end{tikzpicture}
\end{center}  
  \caption{Diagram of flows affecting our order book model. The quantity $i^1$ (resp. $i^2$) represents the insertion of limit orders at the best bid (resp. ask). The quantity $i^1_{\frac{1}{2}}$ (resp. $i^2_{\frac{1}{2}}$) is associated to buying (resp. selling) limit orders within the spread. The quantities $c^1$ and $c^2$ refer to the orders that consume respectively the liquidity at the best bid and ask.}
  \label{fig:lob:flows}
\end{figure}
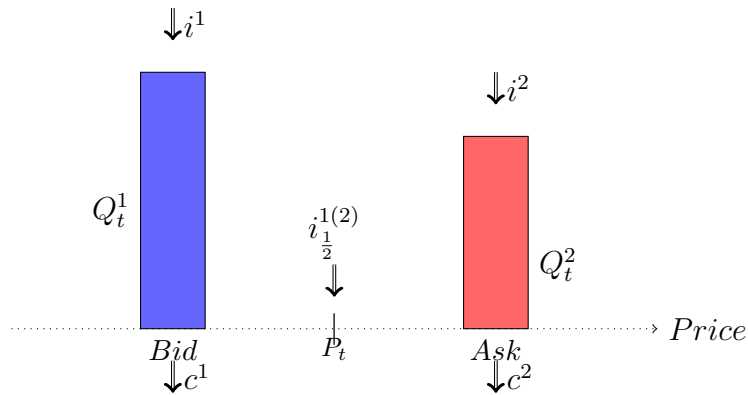
\paragraph{Notations.} We consider the following notations:
\begin{itemize}
\item The current physical time is $t$.
\item The mid price is $P_t$, the best best bid price is $P^1_t$ and the best ask price is $P^2_t$.
\item The spread is $S_t = \cfrac{P^2_t - P^1_t}{2}$ and $\alpha_0$ is the tick size.
\item The available quantity at the best bid (resp. ask) is $Q^1_t$ (resp. $Q^2_t$).
\end{itemize}
\subsection{Order book dynamic}
\manuallabel{subsec:lobdyn}{2.2}
Let $\left(\Omega, \mathcal{F} \right)$ be a measurable space and $(T_n)_{n \geq 1}$ a non-decreasing sequence of random variables such that $T_n < T_{n+1} $ on the event $ \{T_n < \infty \}$. We associate to each $T_n$ a random variable $X_n$ taking its value on a measurable space $(E,\mathcal{E})$. In our case, $T_n$ are the times when events happen in the order book and $X_n$ are variables describing each event. We endow $\Omega$ with the filtration $(\mathcal{F}_t)_{t \geq 0} $ defined such that $\mathcal{F}_{t} =  \sigma ( \{T_n \in C\} \times \{X_n \in B\}, \, C \in \mathcal{B}(\mathbb{R}) \cap (-\infty,t],\, B \in \mathcal{E})$. Each event is characterised by:
\begin{itemize}
\item \textbf{The size of the order:} is an integer representing the minimum quantity that can be inserted in the order book\footnote{In practice, the minimum quantity can be taken as a quarter of the the average event size (AES).}.
\item \textbf{The price of the order:} equals to $k \in \mathbb{N}$ when the order is inserted at the price $P^1 + k \alpha_0 $. 
\item \textbf{The direction of the order:} equals to $ +1$ if it provides liquidity and $-1$ when liquidity is removed.
\item \textbf{The type of the order:} equals to $1$ (resp. $2$) when it modifies the bid (resp. ask)\footnote{A buy (sell) limit order within the spread, a liquidity consumption at the bid (resp. ask) or a limit order at the bid (resp. ask) modify the bid side first.} side.
\item \textbf{The identity of the agent:} is valued in $\{1,\ldots,N\}$ since the market consists in N agents. 
\end{itemize}
Since we track only the first limits, we add the following variables to describe the new order book state when one of these limits is depleted: $\tilde{Q}^{1}$ (resp. $\tilde{Q}^2$) the new bid (resp. ask) queue and $\tilde{S}$ the new spread after a depletion. Note that when there is no depletion, the random vector $(\tilde{Q}^1,\tilde{Q}^2,\tilde{S})$ is arbitrary\footnote{To fix the ideas we can take $(\tilde{Q}^1,\tilde{Q}^2,\tilde{S}) = c $ with $c$ a fixed constant when there is no depletion.} and its values are not used. Finally, we record the order book state after an event to add a dependence between the arrival rate of the events and the past order book states. The order book dynamic is described below. Hence, we consider the following form for $E = \bar{\mathbb{N}} \times \mathbb{T} \times \mathbb{S} \times \mathbb{B} \times \tilde{\mathbb{U}} \times \mathbb{U} \times \mathbb{A}$ with:
\begin{itemize}
\item $\bar{\mathbb{N}} = \mathbb{N}^*$: the set where the orders size is valued.
\item $\mathbb{T} = \mathbb{N}$: the set where the price levels are valued.
\item $\mathbb{S} = \{+1,-1\} $: the set where the orders direction is valued.
\item $\mathbb{B} = \{1,2\}$: the set where the orders type is valued.
\item $\tilde{\mathbb{U}} = \left\{\mathbb{N}^2 \times \alpha_0 \mathbb{N}\right\} \setminus \mathbb{U}^0 $: the set where the order book states after a depletion are valued \footnote{The state where the best bid or ask size is zero is fictitious state that allow us to model the price changes, see Remark \ref{rem:Deterministic_f_0}.}.
\item $\mathbb{U} = \left\{\mathbb{N}^2 \times \alpha_0 \mathbb{N}\right\} \setminus \mathbb{U}^{0}  $: the set where the order book states after an event are valued.
\item $\mathbb{A} = \{1,\ldots,N\}$: the set where the agents identity is valued.
\item $\mathbb{U}^{0} = \left\{0\right\}^2 \times \alpha_0 \mathbb{N} $: the set of unreachable order book states.
\end{itemize} 
\begin{example} We place ourselves in the case where the minimum order size is a quarter of the AES and $(\tilde{Q}^1,\tilde{Q}^2,\tilde{S}) = c$ when there is no depletion with $c$ is a fixed constant. Thus, a buy limit order of size $0.5$ AES inserted at the best bid price $+ 1$ tick by the agent $5$ when the best bid size is $Q^1_i = 1$ AES, the best ask size is $Q^2_i = 3$ AES and the spread $S = 2$ ticks is represented by the event $e = (2,1,+1,1,c,u,5)$ with $u = (2,12,1)$.
\end{example}
\paragraph{Order book dynamic.} The order book state is modelled by the process  $U_t = \left(Q^{1}_t, Q^{2}_t, S_t \right)$  where $Q^{1}_t$ (resp. $Q^{2}_t$) is the best bid (resp. ask) quantity and $S_t$ is the spread. The dynamic of the reference price is going to be deduced from the one of the process $(U_t)_{t \geq 0}$, see Section \ref{sec:limitTh}. The process $U_t$ is defined in the following way: 
\begin{align*}
U_t = \sum_{T_{i} < t } \Delta U_{i}, \quad \Delta U_i = U_i - U_{i-1},
\end{align*}
with $U_{i} =(Q^{1}_{i},Q^{2}_{i},S_{i}) \in \mathbb{U}$ the order book state after the $i$-th event (we write $U_i$ for $U_{T_i}$ when no confusion is possible). Thus, we only need to describe the variables $(U_{i})_{i \geq 1}$. Let $i \geq 1$ and  $X_i = (n_i,t_i,s_i,b_i,\tilde{U}_i ,U_{i},a_i) \in E $ with $n_i \in \bar{\mathbb{N}}$, $t_i \in \mathbb{T} $, $s_i \in \mathbb{S} $, $b_i \in \mathbb{B} $, $\tilde{U}_i = (\tilde{Q}^{1}_{i},\tilde{Q}^{2}_{i},\tilde{S}_{i}) \in \mathbb{U} $, $U_{i} = (Q^{1}_{i},Q^{2}_{i},S_{i})\in \mathbb{U}$ and $a_i \in \mathbb{A}$. The variable $U_{i}$ satisfies 
	\begin{align}\label{eq: dyna P Q}
	\begin{array}{rl}
			S_{i}   = & \mathbf{1}_{\epsilon_i = 0 }S_{i-1} -  (t^1_i + t^2_i) + \mathbf{1}_{\epsilon_i = 1}\tilde{S}_{i}, \\
			Q^1_{i} = & \mathbf{1}_{\epsilon_i = 0 }Q^1_{i-1} +  (n^{1,+}_i - n^{1,-}_i + n^{1,\frac{1}{2}}_i) + \mathbf{1}_{\epsilon_i = 1} \tilde{Q}^{1}_{i},\\
			Q^2_{i} = &  \mathbf{1}_{\epsilon_i = 0 }Q^2_{i-1} + (n^{2,+}_i - n^{2,-}_i + n^{2,\frac{1}{2}}_i) + \mathbf{1}_{\epsilon_i = 1}\tilde{Q}^{2}_{i},
	\end{array}
	\end{align}
where $\epsilon_i$ is a price move indicator (i.e $ \epsilon =  0$ when there is no depletion and $\epsilon = 1$ otherwise), the variable $t^1_i$ (resp. $t^2_i$) is the spread variation when a buy (resp. sell) limit order is inserted within the spread. The variables $n^{1,+}_i$ (resp. $n^{2,+}_i$), $n^{1,-}_i$ (resp. $n^{2,+}_i$) and $n^{1,\frac{1}{2}}_i$ (resp. $n^{2,\frac{1}{2}}_i$) are respectively the best bid (resp. ask) increments when a buy limit order is inserted at the best bid (resp. ask), when a consumption order is sent at the best bid (resp. ask) and when a buy (resp. sell) limit order is inserted within the spread. We now explain how the previous quantities can be written in terms of the state variables:
\begin{equation*}
\begin{array}{lcl}
\epsilon_i &= &\mathbf{1}_{\{s_i = -1\}\cap \big(\{b_i = 1, n_i\geq Q^1_{i-1}\}\cup\{b_i = 2, n_i\geq Q^2_{i-1}\} \big)}, \\
t^1_i & = & \min(t_i \alpha_0, S_{i-1}-\alpha_0)\mathbf{1}_{\{b_i = 1,\,t_i \ne 0\}}, \\
t^2_i & = & (S_{i-1} - t_i \alpha_0)_{+}\mathbf{1}_{\{b_i = 2,\,t_i \ne \frac{S_{i-1}}{\alpha_0}\}}, \\
n^{1(2),+}_i & = & n_i\mathbf{1}_{\{s_i = +1,t_i = 0(\frac{S_{i-1}}{\alpha_0}),b_i = 1(2)\}}, \\
n^{1(2),-}_i & = &  n_i\mathbf{1}_{\{s_i = -1,t_i = 0(\frac{S_{i-1}}{\alpha_0}),b_i = 1(2), n_i < Q^{1(2)}_{i-1}\}}, \\
n^{1(2),1/2}_i & = & n_i\mathbf{1}_{\{s_i = +1,t_i \notin \{ 0,\frac{S_{i-1}}{\alpha_0}\},b_i = 1(2)\}}.
\end{array}
\end{equation*}
We denote by $\lambda_t$ the intensity of the point process $(T_n,X_n)$. For $e \in E $, $\lambda_t(e)$ corresponds to the arrival rate of an event of type $e$ conditional on the past history of the process and it is defined as
$$
\lambda_t(e) = \underset{\delta t \rightarrow 0}{\lim} \cfrac{\mathbb{P} \big[ \#\{T_n \in (t,t+\delta t] , X_n = e \} \geq 1 |  \mathcal{F}_t\big]}{\delta t },
$$
with $\# A$ is the cardinality of the set $A$. We consider the following expression for the intensity:
\begin{equation}
\lambda_t(e) = \psi\big(e,U_{t^-},t, \sum_{T_i < t} \phi(e,U_{t^-},t-T_i,X_i) \big),
\label{Eq:IntensityExpressionHawkes}
\end{equation}
where $\psi$ and $\phi$ are $\mathbb{R}_{+}$-valued functions.
 The individual behaviour of each agent is encoded in the functions $\psi$ and $\phi$ through $e$ and $(X_i)_{i \geq 1}$, see Equation (\ref{Eq:IntensityExpressionHawkes}).\\
 
Note that we can recover the full definition of the intensity of the process $N = (T_n,X_n)$ using the following proposition:
\begin{prop} For any $B \in \mathcal{E}$ and $t \in \mathbb{R}_+$, we have
\begin{equation}
\begin{array}{l}
\underset{\delta t \rightarrow 0}{\lim} \cfrac{\mathbb{P} \big[ \#\{T_n \in (t,t+\delta t] , X_n \in B \} \geq 1 |  \mathcal{F}_t\big]}{\delta t } =  \sum_{e \in B}  \lambda_t(e).
\end{array}
\label{Eq:IntensityExpressionAll}
\end{equation}
\label{Prop1:IntensityExpressionAll}
\end{prop}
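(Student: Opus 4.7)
The plan is to exploit the countability of the mark space. Since $E = \bar{\mathbb{N}} \times \mathbb{T} \times \mathbb{S} \times \mathbb{B} \times \tilde{\mathbb{U}} \times \mathbb{U} \times \mathbb{A}$ is a product of countable sets, $E$ is itself countable and any $B \in \mathcal{E}$ is at most countable. The strategy has two stages: first establish the identity for finite $B$ using additivity and a negligibility estimate on multi-jump probabilities, then extend to countable $B$ by a monotone convergence argument.

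For finite $B = \{e_1, \dots, e_K\}$, the events $\{X_n = e_j\}$ are pairwise disjoint at any given index $n$, so
$$\{\#\{T_n \in (t, t + \delta t], X_n \in B\} \geq 1\} \;=\; \bigcup_{j=1}^K \{\#\{T_n \in (t, t + \delta t], X_n = e_j\} \geq 1\}.$$
I would then apply inclusion-exclusion: the first-order term is $\sum_{j=1}^K \mathbb{P}[\#\{T_n \in (t, t + \delta t], X_n = e_j\} \geq 1 \mid \mathcal{F}_t]$, and the remainder $R(\delta t)$ is bounded by the probability that at least two points of $N$ fall in $(t, t+\delta t]$. Since the intensity $\lambda_t$ defines a well-defined non-explosive point process (an implicit standing hypothesis of the section), a standard compensator estimate gives $R(\delta t) = o(\delta t)$. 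Dividing by $\delta t$ and applying the definition of $\lambda_t(e)$ mark by mark yields the identity in the finite case.

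To treat general $B$, I would write $B = \bigcup_k B_k$ with $B_k$ finite and increasing. The indicators of the events $\{\#\{T_n \in (t, t+\delta t], X_n \in B_k\} \geq 1\}$ are monotone in $k$, so by monotone convergence the conditional probabilities on the left-hand side converge upward to that for $B$, and similarly the partial sums $\sum_{e \in B_k} \lambda_t(e)$ converge to $\sum_{e \in B} \lambda_t(e)$. The delicate point is the interchange of the $\delta t \to 0$ and $k \to \infty$ limits. I would control this by showing that for every $\varepsilon > 0$ one can pick $k$ so that both the conditional probability of events in $B \setminus B_k$ and the tail $\sum_{e \in B \setminus B_k} \lambda_t(e)$ are bounded by $\varepsilon \delta t$ uniformly for small $\delta t$, using that $\sum_{e \in B} \lambda_t(e) < \infty$ under the non-explosion assumption.

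The main obstacle is precisely this uniform-in-$\delta t$ control of the tail contribution from $B \setminus B_k$; once in place, the remainder of the argument is routine bookkeeping around the classical definition of the stochastic intensity of a marked point process.
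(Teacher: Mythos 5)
Your argument is correct in outline but takes a genuinely different route from the paper's. The paper uses neither inclusion--exclusion nor an exhaustion of $B$: it reduces the event $\{\#\{T_n\in(t,t+\delta t],\,X_n\in B\}\ge 1\}$ to $\{T^{t,B}\in(t,t+\delta t]\}$, where $T^{t,B}=\min_{e\in B}T^{t,e}$ is the first post-$t$ arrival with mark in $B$, decomposes this probability according to which mark arrives first, writes it as $\sum_{e\in B}\int_t^{t+\delta t}f^{t,e}(s)F^{t,e}_B(s)\,ds$ with $f^{t,e}$ the conditional density of $T^{t,e}$ and $F^{t,e}_B$ the probability that no other mark of $B$ comes earlier, interchanges limit and sum by monotone convergence, and evaluates each term as $f^{t,e}(t)F^{t,e}_B(t)=\lambda_t(e)$. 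Because the events ``mark $e$ arrives first'' are disjoint, this handles finite and countable $B$ in one stroke and sidesteps both of your deferred estimates. Your route is more elementary (no first-arrival densities are invoked) and makes explicit where non-explosion enters, but the two steps you postpone are precisely where the analytic content sits: the $o(\delta t)$ bound on two or more jumps, and the uniform-in-$\delta t$ control of the contribution of $B\setminus B_k$. Both can be closed by the first-jump computation that the paper's representation performs implicitly: conditionally on $\mathcal{F}_t$ and on no intervening jump, $s\mapsto\lambda_s(e)$ is an $\mathcal{F}_t$-measurable function of $s$ (c\`adl\`ag under the regularity later imposed on $\psi$ and $\phi$), so the probability of a first jump in $(t,t+\delta t]$ with mark in $B\setminus B_k$ is bounded by $\int_t^{t+\delta t}\sum_{e\in B\setminus B_k}\lambda^{(0)}_s(e)\,ds$ with $\lambda^{(0)}$ this no-new-jump continuation, and the local integrability of $\sum_{e\in E}\lambda_t(e)$ established in Appendix C gives the required tail smallness, multiple jumps being second order. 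One small imprecision: the tail $\sum_{e\in B\setminus B_k}\lambda_t(e)$ does not depend on $\delta t$, so it should be made smaller than $\varepsilon$, while only the conditional probability of seeing a mark in $B\setminus B_k$ needs the bound $\varepsilon\,\delta t$.
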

The proof of Proposition \ref{Prop1:IntensityExpressionAll} is given in Appendix \ref{sec:MktReconstitution}. The existence and the uniqueness of a probability measure $\mathbb{P}$ on the filtered probability space $(\Omega, \mathcal{F}, \mathcal{F}_t)$ such that (\ref{Eq:IntensityExpressionAll}) is satisfied and $\lambda_t$ verifies Equation (\ref{Eq:IntensityExpressionHawkes}) is ensured as soon as $\sum_{e \in E} \lambda_t(e)$ is locally integrable, see \cite{jacod1975multivariate}. We prove that $\sum_{e \in E}\lambda_t(e)$ is locally integrable in Appendix \ref{sec:ErgValFctHks}.
\subsection{Market reconstitution}
\manuallabel{subsec:MarketReconstit}{2.3}
We can recover the market intensity $\lambda^M_t$ using the corollary below.
\begin{cor} When $\lambda_t$ verifies Equation (\ref{Eq:IntensityExpressionHawkes}), the market intensity $\lambda^M_t(e')$ of an event $e'$ ($e'$ does not contain the identity of the agent) in the exchange is given by  
\begin{equation}
\lambda^M_t (e') = \underset{\delta t \rightarrow 0}{\lim} \cfrac{\mathbb{P} \big[ \#\{T_n \in (t,t+\delta t] , X_n \in (e',\mathbb{A}) \} \geq 1 |  \mathcal{F}_t\big]}{\delta t } = \sum_{a \in \mathbb{A}} \lambda_t((e',a)),
\label{Eq:ReconstituteMkt}
\end{equation}
for any $e' \in E' = \bar{\mathbb{N}} \times \mathbb{T}  \times  \mathbb{S} \times \mathbb{B} \times \tilde{\mathbb{U}} \times \mathbb{U}$.
\label{Prop1:MktReconstit}
\end{cor}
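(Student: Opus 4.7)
The plan is to derive Corollary \ref{Prop1:MktReconstit} as an essentially immediate consequence of Proposition \ref{Prop1:IntensityExpressionAll}, which has already been established (or is reduced to a computation performed in Appendix \ref{sec:MktReconstitution}). The key observation is that the event ``an order of type $e'$ hits the exchange, regardless of which agent sent it'' is exactly the disjoint union, over $a \in \mathbb{A}$, of the events ``an order of type $(e',a)$ occurs''. Since $\mathbb{A} = \{1,\dots,N\}$ is finite and discrete, the set $B = \{e'\} \times \mathbb{A} = \{(e',a) : a \in \mathbb{A}\}$ belongs to the product $\sigma$-algebra $\mathcal{E}$, so Proposition \ref{Prop1:IntensityExpressionAll} applies with this choice of $B$.

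First I would spell out the left-hand side of \eqref{Eq:ReconstituteMkt}: by definition of $\lambda^M_t$ as a conditional arrival rate, and by the fact that $X_n \in (e',\mathbb{A})$ is the same measurable event as $X_n \in B$ with $B$ as above, the limit expression coincides with the one in \eqref{Eq:IntensityExpressionAll}. Applying Proposition \ref{Prop1:IntensityExpressionAll} then yields
\begin{align*}
\lambda^M_t(e') \;=\; \sum_{e \in B} \lambda_t(e) \;=\; \sum_{a \in \mathbb{A}} \lambda_t((e',a)),
\end{align*}
which is the claimed identity. The finiteness of $\mathbb{A}$ guarantees that the sum is finite and that no issue of interchanging limit and sum arises; in particular the almost sure non-multiplicity of the jump times of the point process $N$ (which underlies Proposition \ref{Prop1:IntensityExpressionAll}) ensures that the probability of two distinct agents acting simultaneously in $(t,t+\delta t]$ is $o(\delta t)$, so that the sum over $a$ correctly accounts for the total arrival rate without double counting.

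There is no real obstacle here: the entire content of the corollary is the bookkeeping observation that $\{e'\} \times \mathbb{A}$ is a measurable partition into single-agent event types, and the genuine work, namely passing from a single event $e$ to an arbitrary measurable set $B$, has already been carried out in the proof of Proposition \ref{Prop1:IntensityExpressionAll}. Thus the proposed proof reduces to one line of verification that $B \in \mathcal{E}$ and one line invoking \eqref{Eq:IntensityExpressionAll}.
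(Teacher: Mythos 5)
Your proof is correct and follows exactly the paper's route: the paper also derives Corollary \ref{Prop1:MktReconstit} directly from Proposition \ref{Prop1:IntensityExpressionAll} by taking $B = \{e'\} \times \mathbb{A}$, which is measurable since $\mathbb{A}$ is finite and discrete. Your additional remarks on finiteness of the sum and non-multiplicity of jump times are consistent with the argument already carried out in Appendix \ref{sec:MktReconstitution}.
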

The proof of Corollary \ref{Prop1:IntensityExpressionAll} is a consequence of Proposition \ref{Prop1:IntensityExpressionAll}.
\subsection{Some specific models}
\paragraph{Poisson intensity.} We introduce here a simple version of the Poisson intensity model where 
the variable $X_{n} = (n_n,t^o_n,s_n,b_n,\tilde{U}_n, U_n, a_n)$ with $U_n = (Q^{1}_{n},Q^{2}_{n},S_{n})$ satisfies 
\begin{itemize}
\item the order size \textbf{$n_n = 1$}: all the events have the same size $1$ AES.
\item the price level \textbf{$t^o_n \in \{0,\frac{S_n}{\alpha_0}\}$}: orders are inserted at the best bid or ask.
\item the law of $\tilde{U}_n$ is unchanged: when one limit is depleted, the new state is drawn from the stationary distribution of the order book.
\end{itemize}
For any $e = (n,t^o,s,b,\tilde{u},u,a)\in E$ with $u = (Q^{1},Q^{2},S)$, we can recover Poisson models by taking the following choice of the parameters:
$$
\psi(e,u,t,z) = \tilde{h}(s,b,a)\mathbf{1}_{n =1, t^o \in \{0,\frac{S}{\alpha_0}\}},  \quad  \forall z,t \in \mathbb{R}_{+},
$$
with $\tilde{h}$ a deterministic function valued on $\mathbb{R}_+$. Thus, the expression of the intensity becomes
$$
\lambda_t(e) = \tilde{h}(s,b,a)\mathbf{1}_{n =1, t^o \in \{0,\frac{S}{\alpha_0}\}}.
$$
Such modelling was introduced in \cite{abergel2013mathematical,cont2013price,smith2003statistical}.
\paragraph{Queue-reactive intensity.} In the Queue-reactive model, the arrival rate of the events depends only on the current order book state. For any $e \in E$ and $u\in U$, we take 
$$
\psi(e,u,t,z) = \tilde{h}(e,u), \quad \forall z,t \in \mathbb{R}_{+},
$$
to reproduce the Queue-reactive dynamic with $\tilde{h}$ a deterministic function valued on $\mathbb{R}_+$. Hence, the intensity reads
$$
\lambda_t(e) = \tilde{h}(e,u).
$$
Such modelling was studied in \cite{huang2015simulating,huang2017ergodicity}.
\paragraph{Hawkes Queue-reactive intensity.} In the Hawkes framework, the arrival rate of each event depends fully on all the past market events. For any $e \in E$ and $u\in U$, we generate the Hawkes Queue-reactive dynamic by taking 
$$
\psi(e,u,t,z) = h(e,u,t)  + z,  \quad \forall z,t \in \mathbb{R}_{+}.
$$
Thus intensity has the following expression
$$
\lambda_t(e) =  h(e,U_{t^-},t) + \sum_{ T_i < t} \phi(e,U_{t^-},t-T_i,X_i).
$$
Close modelling was used \cite{LOBModHawkes,citeulike:13497373,jaisson2015market,morariu2018state,rambaldi2018disentangling}.
\paragraph{Quadratic Hawkes process.} The quadratic Hawkes processes generalise the linear Hawkes processes by adding an interaction term between the pairs of past events. In the classical one-dimensional case, the intensity function of a quadratic Hawkes process reads
$$
\lambda_t(e) =  h(t) + \sum_{ T_i < t} \phi(t-T_i) + \sum_{ T_i,T'_i < t}K(t-T_i,t-T'_i),
$$
with $K:\mathbb{R}_+ \times \mathbb{R}_+ \rightarrow \mathbb{R}_+$ the quadratic kernel. We can recover a simple case of the quadratic Hawkes models when $K$ is separable (i.e $K(t,s) = k(t)k(s)$ with $k$ a non negative function) by taking $\psi$ of the following form:
$$
\psi(e,u,t,z) = h(e,u,t) + z^2,  \quad \forall z,t \in \mathbb{R}_{+}.
$$
Hence, the expression of the intensity becomes
$$
\lambda_t(e) =  h(e,U_{t^-},t) +  \sum_{ T_i < t} \phi^2(e,U_{t^-},t-T_i,X_i) + \sum_{ T_i,T'_i < t} \phi(e,U_{t^-},t-T_i,X_i)\phi(e,U_{t^-},t-T'_i,X'_i).
$$
Quadratic Hawkes models were introduced in \cite{blanc2017quadratic,ogata1981lewis}.
\begin{rem}
In our modelling, the linear term is necessarily $\phi^2$. However, to overcome this limitation we can add a new argument to the function $\psi$ which differentiates the linear kernel from the quadratic one. This will not modify the proofs.
\end{rem} 
\section{Ergodicity}
\label{sec:Ergodicity_father}
\subsection{Notations and definitions}
Let $Z_t$ be a process defined on the probability space $(\Omega, \mathcal{F},\mathcal{F}_t,\mathbb{P})$ and valued in $(W_0,\mathcal{W}_0)$. We consider another process $V_t$ defined on $(W_0,\mathcal{W}_0)$ and valued in $(X,\mathcal{X})$ and we denote by $P_t(x,.)$ the probability distribution of $V^{0,x}_t$ starting at $0$ with the initial condition $x \in W_0$. For any measure $\mu$ defined on $(W_0, \mathcal{W}_0)$ viewed as a random starting condition, we denote by $P_t(\mu,.) = \int_{W_0}  P_t(x,.) \mu(dx)$. 
\begin{defi}[Invariant distribution] The measure $\mu$ is invariant if the probability distribution $P_{t}(\mu,.)$ does not depend on the time $t$. 
\label{defi:InvDistrib}
\end{defi}
This definition is consistent with the one given in \cite{bremaud1996stability,hairer2009ergodic,meyn2012markov}. The process $V_t$ starting with the initial distribution $\mu$ is stationary if and only if $\mu$ is invariant. We define the total variation norm between two measures $\pi$ and $\pi'$ such that $||\pi - \pi'||_{TV} = \sup_{A \in \mathcal{X}} | \pi(A) - \pi'(A)  |$.
%\begin{defi}[Stability] Let $C \in \mathcal{W}$. The process $Z_t$ is $C$-stable if for any $x \in C$ there exists an invariant measure $P_{\infty}(x,.)$ such that $P_t(x,.) \underset{t \rightarrow \infty}{\rightarrow} P_{\infty}(x,.)$ in total variation.
%\end{defi}
\begin{defi}[Ergodicity] Let $C \in \mathcal{W}_0$. The process $V_t$ is $C$-ergodic if for any $x \in C$ there exists an invariant measure $\mu$ such that $P_t(x,.) \underset{t \rightarrow \infty}{\rightarrow} P_{0}(\mu,.)$ in total variation.
\label{defi:Ergod}
\end{defi}
\begin{rem}
This definition is consistent with the one given in \cite{meyn2012markov}. Ergodicity is interesting since it ensures the convergence of the order book process $U_t$ towards an invariant probability distribution. Thus the stylized facts observed on market data can be explained by a law of large numbers type phenomenon for this invariant distribution. 
\end{rem}
\begin{rem}
In this Section, we work with a continuous time processes $Z_t$ and $V_t$ with $t \in \mathbb{R}_+$. However, all the definitions are similar for a discrete time processes $Z_n$ and $V_n$ with $n \in \mathbb{N}$. We just have to replace $t$ by $n$ in the definitions above.
\end{rem}
The space $\Omega$ and the filtration $\mathcal{F}_t $ considered here are defined in Section \ref{subsec:lobdyn}, $\mathcal{F} = \mathcal{F}_{\infty}$, the filtered space $W_0$ is the space of sequences indexed by $\mathbb{N}^-$ and valued on $\mathbb{R}_{+}\times E$ , $ X = \mathbb{U} \times (\mathbb{R}_+)^{E}$ and $ \mathcal{X} = \mathcal{U} \times \mathcal{B}(\mathbb{R}_+)^{\otimes E}$ with $\mathcal{U}$ the $\sigma$-algebra generated by the discrete topology on $\mathbb{U}$, $\mathcal{B}(\mathbb{R}_+)^{\otimes E}$ the cylinder $\sigma$-algebra for $(\mathbb{R}_+)^{E}$, $\mathcal{B}(\mathbb{R}_+)$ the borel $\sigma$-algebra of $\mathbb{R}_+$ and $\mathcal{W}_0 = \big(\mathcal{B}(\mathbb{R}_+) \times \mathcal{E} \big)^{\otimes \mathbb{N}^-}$ with $\mathcal{E}$ the $\sigma$-algebra generated by the discrete topology on $E$. We need to work on the functional space $W_0$ since the dynamic of the process depend on its whole past.
\subsection{Ergodicity}
\label{sec:Ergodicity}
In this section, we provide under general assumptions a theoretical result on the ergodicity of the process $\bar{U}_t =(Q^{1}_t,Q^{2}_t,S_t, \lambda_t)$ with $\lambda_t$ the intensity defined by (\ref{Eq:IntensityExpressionHawkes}).\\

We denote by $\lambda^{i,+}_{Q}$ (resp. $\lambda^{i,-}_{Q}$) and $\lambda^{+}_{S}$  (resp. $\lambda^{-}_{S}$) the arrival rate of the events that respectively increase (resp. decrease) the limit $Q^i$ and the spread $S$ for any $i \in \mathbb{B} $. Let $U_t = (Q^1_t,Q^2_t,S_t)$ be the order book process and $ e  \in E$ be a market event, the quantities $\lambda^{i,\pm}_{Q}$ and $\lambda^{\pm}_{S}$ are defined by the following formulas:
\begin{equation}
\lambda^{i,\pm}_{Q}(U_{t^-},n) = \sum_{e \in E^{i,\pm}_{Q}(U_{t^-},n)}\lambda_t(e),  \quad \lambda^{\pm}_{S}(U_{t^-},k) = \sum_{e' \in E^{\pm}_{S}(U_{t^-},k)}\lambda_t(e),
\label{Eq:NumIncreaDecreaQtySpread}
\end{equation}
with $n \in \mathbb{N}$, $k \in \mathbb{N}$ and 
\begin{equation}
\begin{array}{lcl}
E^{i,\pm}_{Q}(U_{t^-},n) & = & \{e \in E; \, s.t \quad \Delta Q^{i}_t = \pm n\}, \\
E^{\pm}_{S}(U_{t^-},k) & = & \{e \in E; \, s.t \quad \Delta S_t = \pm k\}, 
\end{array}
\label{Eq:SetEvtModifyLOB}
\end{equation}
with $\Delta X_t = X_t - X_{t^-}$ for any process $X_t$. For simplicity and since there is no ambiguity, we do not write the dependence of $\lambda^{i,\pm}_Q$ and $\lambda^{\pm}_S$ on the current time $t$. For any $n \in \mathbb{N}^*$, we write 
$$
\mathcal{P}(n) = \{ \mathbf{k}_m = \{k_1,\ldots,k_m\} \in (\mathbb{N}^*)^m; \quad s.t \quad k_1 + \ldots + k_m = n, \quad m \in \mathbb{N}^*\},
$$
for the set containing all the partitions of $n$.
\begin{Assumption}[$\psi$ growth] We assume that there exist $c \geq 0$, $d \geq 0$ and $n_{\psi}\in \mathbb{N}$ such that
\begin{equation*}
\begin{array}{l}
\tilde{\psi}(e,z) \leq c(e) + d(e) z^{n_{\psi}},\\
 \sup_{e \in E}\left\{d(e)\sum_{\mathbf{k}_m \in \mathcal{P}(n_{\psi})} {{n_{\psi}}\choose{\mathbf{k}_m}} \int_{\mathbb{R}_{+}^m} \prod_{i=1}^m {\phi^*}^{k_i}(e,s_i)\,ds_i \right\} < 1,
\end{array}
\label{Eq:BoundednessHksintensity1}
\end{equation*}
with $\tilde{\psi}(e,z) = \sup_{(u,t) \in \mathbb{U} \times \mathbb{R}_+} \psi(e,u,t,z)$, $\phi^*(e,s)  = \sup_{u \in \mathbb{U}}\sum_{x \in E}\phi(e,u,s,x)$ and $ {{n_{\psi}}\choose{\mathbf{k}_m}} = {{n_{\psi}}\choose{k_1,\ldots,k_m}} = \frac{n_{\psi}!}{k_1!\,\ldots\,k_m!} $.\\
\label{Assump:psigrowth}
\end{Assumption}
Assumption \ref{Assump:psigrowth} is natural. To see this, we take a 1-d stationary non-linear Hawkes process $N_t$ with an intensity $\lambda_t$ that verifies
\begin{align*}
\lambda_t = c + d (\sum_{T_i < t} \phi(t-T_i))^{n_{\psi}} = c + d (\int_{- \infty}^{t}\phi(t-s)dN_s)^{n_{\psi}}, \quad \forall t \in \mathbb{R}_+.
\end{align*}
By stationarity, we have 
\begin{align*}
\bar{\lambda} = \Esp[\lambda_t] & = c + d\Esp[(\int_{- \infty}^{t}\phi(t-s)dN_s)^{n_{\psi}}]\\
			    & = c + d \left\{\sum_{\mathbf{k}_m \in \mathcal{P}(n_{\psi})} {{n_{\psi}}\choose{\mathbf{k}_m}} \int_{(-\infty,t)^m} \prod_{i=1}^m {\phi}^{k_i}(t - s_i)\, \Esp[dN_{s_1} \ldots  dN_{s_m} ]\right\},
\end{align*}
with ${{n_{\psi}}\choose{\mathbf{k}_m}}$ an enumeration factor. In fact, if we have $n_{\psi}$ possible events divided in $m$ groups such that the $j$-th group is composed of $k_j$ events, then the quantity ${n_{\psi}}\choose{\mathbf{k}_m}$ counts the number of possible groups. Here each group represents the jumps that happen at the same time. Since the jumps have a unit size, the Brascamp-Lieb inequality ensures that $\Esp[dN_{s_1} \ldots  dN_{s_m} ] \leq \prod_{i=1}^m \Esp[dN_{s_i}^m]^{1/m} = \prod_{i=1}^m \Esp[dN_{s_i}]^{1/m} = \prod_{i=1}^m \Esp[\lambda_{s_i}]^{1/m} = \bar{\lambda}$ which leads to
\begin{align*}
\bar{\lambda} & \leq  c + q \bar{\lambda},
\end{align*}
with $
q = d \sum_{\mathbf{k}_m \in \mathcal{P}(n_{\psi})} {{n_{\psi}}\choose{\mathbf{k}_m}} \int_{(\mathbb{R}_+)^m} \prod_{i=1}^m {\phi}^{k_i}(e,s_i)\, ds_i$. The condition $q < 1$ of Assumption \ref{Assump:psigrowth} guarantees that $\bar{\lambda}$ is finite. 
\begin{rem}	Non linear Hawkes process are studied mainly when the function $\psi$  admits at most a linear growth (i.e $n_{\psi} \leq 1$). When $n_{\psi} = 1$, we recover the classical condition 
$$
 \sup_{e \in E} d(e)\left\{\int_{\mathbb{R}_{+}} \phi^*(e,s)ds \right\} < 1.
$$
When $n_{\psi} = 2$, Assumption \ref{Assump:psigrowth} becomes 
$$
\sup_{e \in E} \, d(e) \left\{ \big(\int_{\mathbb{R}_{+}} \phi^*(e,s)ds\big)^2 + \int_{\mathbb{R}_{+}} {\phi^*(e,s)}^2 ds \right\} < 1
.$$
\end{rem}
\begin{Assumption}[Negative drift] There exist positive constants $C_{bound}$, $z_{0} > 1$ and $\delta$ such that 
\begin{equation}
\begin{array}{lcr}
\sum_{n \geq 0} (z_0^{n} -1) \big(\lambda^{i,+}_{Q}(U_{t^{-}},n) - \lambda^{i,-}_{Q}(U_{t^{-}},n)\frac{1}{z_0^{n}}\big) \leq -\delta, \quad & a.s & \text{when } Q^{i}_{t^{-}} \geq C_{bound},\\
\sum_{k \geq 0} (z_0^{\alpha_0 k} -1)\big(\lambda^{+}_{S}(U_{t^{-}},k) - \lambda^{-}_{S}(U_{t^{-}},k)\frac{1}{z_0^{ \alpha_0 k}}\big) \leq -\delta,  \quad & a.s & \text{when } S_{t^{-}} \geq C_{bound},
\end{array}
\label{Eq:NegDriftQty}
\end{equation}
for any $ i\in \mathbb{B}$ and $U_{t} = (Q^{1}_{t},Q^{2}_{t},S_{t}) \in \mathbb{U}$ where $\alpha_0$ is the tick size.
\label{Assump:Negativeindividualdrift}
\end{Assumption}
Assumption \ref{Assump:Negativeindividualdrift} ensures that both the size of the first limits and the spread tend to decrease when they become too large. Same kind of hypothesis are used in \cite{huang2015simulating,lehalle2018optimal} but when the order book dynamic is Markov.
\begin{rem} In practice, Assumption \ref{Assump:Negativeindividualdrift} is verified when the following conditions are satisfied:
\begin{equation}
\begin{array}{lr}
\sum_{n \geq 0} (z_0^{n} -1)\big(\psi^{i,+}_{Q}(u,n,t,z) - \psi^{i,-}_{Q}(u,n,t,z)\frac{1}{z_0^{n}}) \leq -\delta, & \text{when } q^{i} \geq C_{bound},\\
\sum_{n \geq 0} (z_0^{\alpha_0 k} - 1) \big(\psi^{+}_{S}(u,k,t,z) - \psi^{-}_{S}(u,k,t,z)\frac{1}{z_0^{\alpha_0 k}}) \leq -\delta, & \text{when } s^{i} \geq C_{bound},\\
\phi^{i,+}_{Q}(u,n,t,x) \leq \phi^{i,-}_{Q}(u,n,t,x)  , & \text{when } q^{i} \geq C_{bound},\\
\phi^{+}_{S}(u,k,t,x) \leq \phi^{-}_{S}(u,k,t,x)  , & \text{when } s^{i} \geq C_{bound},\\
\psi(e,u,t,z), \quad  \text{ is non-decreasing in } z, & \text{when } q^{i} \geq C_{bound},\\
\psi(e,u,t,z), \quad \text{ is non-decreasing in } z, & \text{when } s^{i} \geq C_{bound},\\
\end{array}
\label{Eq:NegDriftQtyExample}
\end{equation}
where $u = (q^1,q^2,s) \in \mathbb{U} $, $i \in \mathbb{B} $ and $\psi^{i,\pm}_{Q}$, $\psi^{\pm}_{S}$, $\phi^{i,\pm}_{Q}$ and $\phi^{\pm}_{S}$ are functions defined such that
$$
\begin{array}{ll}
\psi^{i,\pm}_{Q}(u,n,t,z) = \sum_{e \in E^{i,\pm}_{Q}(u,n)} \psi(e,u,t,z), & \phi^{i,+}_{Q(S)}(u,n,t,x) = \sup_{e \in E^{i,+}_{Q(S)}(u,n)} \phi(e,u,t,x),\\
\psi^{\pm}_{S}(u,k,t,z) = \sum_{e \in E^{\pm}_{S}(u,k)} \psi(e,u,t,z), & \phi^{-}_{Q(S)}(u,k,t,x) = \inf_{e \in E^{-}_{Q(S)}(u,k)} \phi(e,u,t,x),
\end{array}
$$
with $(n,k,t,z) \in \mathbb{N}^2 \times \mathbb{R}_+^2$. Although Inequalities (\ref{Eq:NegDriftQty}) and (\ref{Eq:NegDriftQtyExample}) are not equivalent, there is a large panel of functions that satisfy (\ref{Eq:NegDriftQtyExample}). A proof of this result is given Appendix \ref{sec:ProofRemNegativeCondDrift}.
\label{Rem:NegativeCondDrift}
\end{rem}
\begin{Assumption}[Bound on the overall flow] We assume that there exist $z_1 > 1$, $M$ and $ \underline{\psi}> 0$ satisfying
\begin{equation*}
\begin{array}{lcll}
c^* & = & \sum_{e \in E} c(e) < \infty,\\
\lambda^* & = & \sum_{e \in E, \, \mathbf{k}_m \in \mathcal{P}(n_{\psi})} d(e){{n_{\psi}}\choose{\mathbf{k}_m}} \int_{\mathbb{R}_+^m} \prod_{j=1}^m {\phi^*}^{k_j}(e,s_j)\,ds_j < \infty,\\
Q^i_{\infty} & = & \sum_{n \in\mathbb{N}} (z_1^n-1)\Esp_{\mathbf{x}}\big[\lambda^{i,+}_{Q}(u,n) - \frac{\lambda^{i,-}_{Q}(u,n)}{z_1^n}\big]  < M, & \text{ when } q^i \leq C_{bound},\\
S_{\infty} & = & \sum_{k \in\mathbb{N}} (z_1^k-1)\Esp_{\mathbf{x}}\big[\lambda^{+}_{S}(u) - \frac{\lambda^{-}_{S}(u,n)}{z_1^k}\big]  < M, & \text{ when } s \leq C_{bound},\\
\underline{\lambda}_t(e) & = & \sum_{e \in E} \lambda_t(e) \geq  \underline{\psi}, & \, a.s.
\end{array}
\label{Eq:BoundednessHksintensity}
\end{equation*}
with $c(e)$, $d(e)$ and $\phi^*$ defined in Assumption \ref{Assump:psigrowth}, $ i \in \mathbb{B}$, $\mathbf{x} \in W_0$ and $C_{bound}$ defined in Assumption (\ref{Assump:Negativeindividualdrift}). Similar assumptions are considered in \cite{huang2015simulating,lehalle2018optimal} in the Markov case.
\label{Assump:Boundedness}
\end{Assumption}
Assumption \ref{Assump:Boundedness} ensures no explosion in the system since it forces the arrival rate of orders, the size of the limits and the spread to stay bounded.
\begin{rem} In practice, we can find path-wise conditions similar to those used in Remark \ref{Rem:NegativeCondDrift} such that the inequalities $Q^i_{\infty} < M $, $S_{\infty} < M $ and $\underline{\lambda}_t(e) \geq \bar{\psi},\, a.s$ are satisfied.
\label{Rem:Boundedness}
\end{rem}
\begin{theo}[Existence] Under Assumptions \ref{Assump:psigrowth}, \ref{Assump:Negativeindividualdrift} and \ref{Assump:Boundedness}, the process $\bar{U}_t = (Q^1_t,Q^2_t,S_t,\lambda_t)$ admits an invariant distribution.
\label{lem:ExistenceProcHks}
\end{theo}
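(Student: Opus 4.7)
The plan is to build an exponential Foster--Lyapunov drift estimate for the queue and spread coordinates and combine it with a Krylov--Bogoliubov averaging argument, working on the path-space Markov representation of Section~3.1 where $\bar{U}_t$ together with its full past history in $W_0$ is genuinely Markov. As a preliminary step, Assumption~1 and the Brascamp--Lieb bootstrap sketched right after that assumption yield $\sup_{t\le T}\mathbb{E}\bigl[\sum_{e\in E}\lambda_t(e)\bigr]<\infty$, which gives local integrability of the total intensity; by \cite{jacod1975multivariate} this makes $(T_n,X_n)$ well-defined and non-exploding, so the law $\mathbb{P}$ of Section~2.2 is unambiguously specified.

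The core of the proof is the exponential Lyapunov function
\[
V(q^1,q^2,s)\;=\;z_0^{q^1}+z_0^{q^2}+z_0^{s/\alpha_0},
\]
with $z_0>1$ and $\delta>0$ from Assumption~2. The algebraic identity $z_0^{-n}-1=-(z_0^n-1)/z_0^n$ lets me rewrite the expected jump rate of the $Q^1$--part of $V$ as
\[
z_0^{Q^1_{t^-}}\sum_{n\ge 0}(z_0^n-1)\Bigl(\lambda^{1,+}_{Q}(U_{t^-},n)-\tfrac{1}{z_0^{n}}\lambda^{1,-}_{Q}(U_{t^-},n)\Bigr),
\]
and symmetrically for $Q^2$ and $S$ (with $z_0^{\alpha_0 k}$ in place of $z_0^n$ for the spread). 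Assumption~2 forces each such block to be $\le -\delta\,z_0^{Q^i_{t^-}}$ (resp.\ $-\delta\,z_0^{S_{t^-}/\alpha_0}$) once the corresponding coordinate exceeds $C_{bound}$, while Assumption~3 bounds the same block by a constant on the complementary compact set. Combining these gives a global drift inequality
\[
\mathcal{L}V(\bar{U}_{t^-})\;\le\;-\delta'\,V(\bar{U}_{t^-})+K'
\]
for some $\delta'>0$, $K'<\infty$. Applying Dynkin's formula for marked point processes (after a standard localisation using Assumption~1 to truncate the intensity) together with Gronwall yields $\sup_{t\ge 0}\mathbb{E}[V(\bar{U}_t)]<\infty$; since $z_0>1$ this translates into tightness of the marginals of $(Q^1_t,Q^2_t,S_t)$, and the uniform bound on $\mathbb{E}[\sum_e\lambda_t(e)]$ from the preliminary step propagates tightness to the $\lambda_t$ coordinate of $\bar{U}_t$ (viewed in the Polish space $(\mathbb{R}_+)^E$ with the product topology).

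With uniform-in-time tightness in hand I would conclude by Krylov--Bogoliubov: any weak subsequential limit $\mu$ of the Cesàro-averaged marginal laws $T^{-1}\int_0^T \mathrm{Law}(\bar{U}_s)\,ds$ is invariant in the sense of Definition~1, provided the transition kernel $P_t$ satisfies a Feller-type continuity, which I would derive from the representation of $\bar{U}_t$ as a deterministic functional of a driving Poisson random measure, passing to the limit by dominated convergence using the moment bounds above. The main obstacle is the non-Markovian Hawkes structure: Dynkin's formula, the intensity-truncation step, and the Feller step all have to be carried out on the infinite-history space $W_0$, and the polynomial growth $z^{n_\psi}$ in $\psi$ together with the explicit absence of continuity assumptions on $\psi$ and $\phi$ force these steps to be formulated with some care; the combinatorial Brascamp--Lieb estimate implicit in Assumption~1 is the technical engine that lifts pointwise kernel bounds to $L^1$ bounds on the total intensity at each truncation level, which is precisely what makes the Dynkin/Gronwall step rigorous.
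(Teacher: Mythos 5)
Your proposal is correct in substance and shares the paper's skeleton -- local integrability of the total intensity via Assumption \ref{Assump:psigrowth} and the Brascamp--Lieb combinatorics, an exponential Lyapunov function in $(Q^1,Q^2,S)$ driven by Assumptions \ref{Assump:Negativeindividualdrift} and \ref{Assump:Boundedness}, and a tightness-plus-Feller conclusion -- but the technical middle is genuinely different. The paper never writes a continuous-time drift inequality $\mathcal{L}V\le-\delta'V+K'$: it constructs $N$ as the increasing limit of the thinned processes $N^m$ of Remark \ref{Rem:Rem10}, proves the intensity and moment bounds by induction on $m$ (Steps (ii)--(iii)), and derives the drift at \emph{event times} through Lemma \ref{Lem:ComputeDeltaU} (explicit jump-probability formulas involving $Z_n(t)$) and Lemma \ref{Lem:negDriftCondition}; it then bounds the running essential supremum $\bar{U}^\infty$ via Doob decomposition, Doob's maximal inequality and Bichteler--Dellacherie, which is stronger than your $\sup_t\Esp[V(\bar{U}_t)]<\infty$ and is reused later in the ergodicity proof (e.g.\ $\Esp[\sup_{n,u,s}\lambda_n(u,s)]<\infty$ in Lemma \ref{lem:Weak_dependence_past_1}). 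Your route -- Dynkin's formula for the compensated jump process on the history space, Gronwall, then Krylov--Bogoliubov on Ces\`aro averages -- is more streamlined and perfectly adequate for existence alone, at roughly the paper's own level of rigour on the Feller step (the paper just asserts Feller from countability of $\mathbb{U}$, $E$ and the uniform $L^1$ bound). Three small points to fix when writing it out: (a) Assumption \ref{Assump:Boundedness} is stated for a possibly different base $z_1$, so take $z\le\min(z_0,z_1)$ in the Lyapunov function as the paper does, and note that the bound on the compact region $\{q^i\le C_{bound}\}$ is an expectation conditional on the past history, not a pathwise bound, which is harmless since you take expectations anyway; (b) the localisation before Dynkin should be by stopping times (e.g.\ jump times or exit times), with non-explosion from your preliminary step justifying the passage to the limit; (c) an invariant distribution in the sense of Definition \ref{defi:InvDistrib} is a measure on the history space $(W_0,\mathcal{W}_0)$, so the Krylov--Bogoliubov averaging must be performed for the shifted-history (Markov lift) process, with tightness there obtained coordinate-wise from your marginal bounds -- you flag this, and it does go through, but it is where the care you promise is actually needed.
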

The proof of this result is given in Appendix \ref{sec:ErgValFctHks}.
\begin{Assumption}[Regularity] We assume that $\psi$ is a c\`{a}dl\`{a}g function continuous with respect to $z$, $\phi$ is a positive c\`{a}dl\`{a}g function and 
there exist $\bar{\psi}: \mathbb{R}_{+} \rightarrow \mathbb{R}_+$ and $n_1 \in \mathbb{N}$ such that 
$$ 
|\psi(e,u,s,x) - \psi(e,u,s,y)|  \leq |\bar{\psi}(x) - \bar{\psi}(y)| , \quad \forall (e,u,s,x,y) \in E \times \mathbb{U} \times \mathbb{R}_+^3,
$$
and
$$
|\bar{\psi}(x) - \bar{\psi}(y)| \leq K|x - y||1 + x^{n_1} + y^{n_1}|, \quad \forall (x,y) \in \mathbb{R}_+^2,
$$
with $K$ a positive constant.
\label{Assump:Regularityparams2}
\end{Assumption}
\begin{rem} Assumption \ref{Assump:Regularityparams2} is satisfied in the special case where $\bar{\psi}$ is a polynomial.
\end{rem}
We have the following result. 
\begin{theo}
[Ergodicity] Under Assumptions \ref{Assump:psigrowth}, \ref{Assump:Negativeindividualdrift}, \ref{Assump:Boundedness} and \ref{Assump:Regularityparams2}, the process $\bar{U}_t $ is $\mathcal{W}_0$-ergodic, which means that there exists an invariant measure $\mu$, see Definition \ref{defi:InvDistrib}, that satisfies
$$
\underset{t \rightarrow \infty}{\lim} P_{t}(x,A) =  P_{0}(\mu,A), \quad \forall x \in W_0,\,A \in \mathcal{X},
$$
where $P_{t}(x,A)$ is the probability that $\bar{U}_t \in A$ starting from the initial condition $x$. Additionally, we have the following speed of convergence:
$$
||P_t(x,.) - P_{0}(\mu,.)||_{TV} \leq K_1 e^{- K_2 t}, \quad \forall x \in W_0,
$$
with  $K_1$, $K_2$ are positive constants and $||.||_{TV}$ the total variation norm.
\label{lem:ErgodicityProcHks}
\end{theo}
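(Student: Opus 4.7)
The existence of the invariant measure $\mu$ is already secured by Theorem \ref{lem:ExistenceProcHks}, so the whole task is to promote existence to geometric convergence in total variation. My plan is to view $\bar{U}_t$ as a Markov process on the extended path space $W_0$ (time-shifted histories plus current state), and then establish the classical couple ``geometric drift $+$ minorisation'' on this enlarged space, in the spirit of the Harris/Meyn--Tweedie framework but adapted to the Hawkes memory.

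First, I would build a synchronous coupling. Given two initial histories $x,y\in W_0$, I would run two versions $\bar{U}^x$ and $\bar{U}^y$ driven by the same Poisson random measure $\pi(de,d\theta,dt)$ on $E\times\mathbb{R}_+\times\mathbb{R}_+$ via the Ogata thinning representation: an event $e$ is accepted for the $x$-copy whenever $\theta\le\lambda^x_t(e)$ and likewise for the $y$-copy. Events with $\theta\le\lambda^x_t(e)\wedge\lambda^y_t(e)$ are \emph{common}; the mismatch rate is bounded by $\sum_e|\lambda^x_t(e)-\lambda^y_t(e)|$, which by Assumption \ref{Assump:Regularityparams2} is controlled by $|\bar{\psi}(Z^x_t)-\bar{\psi}(Z^y_t)|$ where $Z^{\cdot}_t=\sum_{T_i<t}\phi(\cdot,U^{\cdot}_{t^-},t-T_i,X^{\cdot}_i)$.

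Second, I would establish a Lyapunov/drift estimate. The function $V(\bar{u})=z_0^{Q^1+Q^2+S/\alpha_0}$ augmented with a term weighting the tail of the history against $\phi^*$ is the natural candidate: the geometric-exponential computation triggered by Assumption \ref{Assump:Negativeindividualdrift} gives an infinitesimal drift $\le -\delta V+C$ outside the compact set $K=\{Q^i\le C_{bound},\,S\le C_{bound}\}$. The exponential moments needed to make this rigorous on the non-Markov level are exactly what Assumptions \ref{Assump:psigrowth} and \ref{Assump:Boundedness} provide (finite $c^*,\lambda^*$, and the $z_1$-exponential bounds $Q^i_\infty,S_\infty<M$ inside $K$). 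Consequently both copies return to $K$ within an exponentially tight time and remain there with positive frequency.

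Third, the minorisation on $K$. Inside $K$ Assumption \ref{Assump:Boundedness} guarantees $\sum_e\lambda_t(e)\ge\underline\psi>0$ uniformly, so with positive probability common events occur on any fixed window $[0,\tau]$. Once the two coupled copies have shared the \emph{same} realisations of $(T_i,X_i)$ on a long enough window, the difference $|Z^x_t-Z^y_t|$ is driven by the memory of events older than $t-\tau$ and is uniformly dominated by $\int_\tau^\infty\phi^*(e,s)\,ds$, which tends to $0$ by Assumption \ref{Assump:psigrowth}. Combined with the Lipschitz/polynomial control of $\bar\psi$ in Assumption \ref{Assump:Regularityparams2}, this yields a mismatch probability $<1-\eta$ over the window; iterating produces a uniform lower bound on the coupling probability starting from any two points of $K$, i.e. a small-set condition.

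Finally, combining the geometric drift outside $K$ with the minorisation on $K$, a Meyn--Tweedie/Hairer--Mattingly style argument on the Markovianised process on $W_0$ yields existence of constants $K_1,K_2>0$ with $\|P_t(x,\cdot)-P_0(\mu,\cdot)\|_{TV}\le K_1 e^{-K_2 t}$; uniqueness of $\mu$ is a by-product. The main obstacle is the minorisation step: because of the Hawkes memory the transition kernel on $W_0$ is never truly dominated by a non-trivial reference measure at finite time, so one must really argue via coupling and the tail decay $\int_\tau^\infty\phi^*\to 0$ rather than via a direct density lower bound. The polynomial growth allowed for $\psi$ (general $n_\psi$) makes the bookkeeping of exponential moments in the Lyapunov step the second genuinely delicate point, and Assumption \ref{Assump:psigrowth}'s Brascamp--Lieb style partition sum is precisely what saves that computation.
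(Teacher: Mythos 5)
Your overall architecture (drift outside a compact set plus a coupling-based minorisation, Harris/Hairer--Mattingly on the path space) is a reasonable alternative to the paper, which instead goes through the Br\'emaud--Massouli\'e criterion of Lemma \ref{Lem:CondIntensity} (coupling follows from $\int_0^\infty \sup_e \Esp[|\lambda_s(e)-\lambda^\infty_s(e)|]\,ds<\infty$), the $f$-geometric ergodicity of the embedded chain $(U_n)$ proved by a first-entrance last-exit decomposition (Lemmas \ref{lem:FiniteSumU_n}, \ref{lem:Weak_dependence_past_1}, \ref{lem:BoundednesSum}), and a Bihari/Beesack-type nonlinear Gronwall inequality in Lemma \ref{Lemma:ErgodicityIncreasing}. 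However, as written your key minorisation step has a genuine gap. You claim the mismatch rate $\sum_e|\lambda^x_t(e)-\lambda^y_t(e)|$ is ``controlled by $|\bar{\psi}(Z^x_t)-\bar{\psi}(Z^y_t)|$'' via Assumption \ref{Assump:Regularityparams2}. That assumption only bounds the variation of $\psi$ in its \emph{last} argument, at fixed $(e,u,t)$; it gives no control in $u$, and none can be expected since $\mathbb{U}$ is discrete. The two copies sit at different order book states $U^x_{t^-}\neq U^y_{t^-}$, so the mismatch contains a state-driven term that your argument never addresses. The same issue undermines the claim that once the copies ``share the same realisations of $(T_i,X_i)$'' the residual mismatch is dominated by $\int_\tau^\infty\phi^*(e,s)\,ds$: both $\psi$ and $\phi$ are evaluated at the current state $U_{t^-}$, the acceptance regions in the thinning construction differ as long as the states differ, and in this model the mark $X_i$ itself encodes the post-event state $U_i$, so two copies in different states cannot literally accept identical marks. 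Making the copies coalesce in the order-book coordinate, with a quantitative (exponential) rate, is precisely the hard part, and it is what the paper spends its effort on: the drift of Lemma \ref{lem:FiniteSumU_n}, the vanishing influence of the remote past in Lemma \ref{lem:Weak_dependence_past_1}, and the resulting bound $\sum_n \Esp[\|f(U_n)-f(U^\infty_n)\|]r^n<\infty$ of Lemma \ref{lem:BoundednesSum}, which then feeds the term $c_3\Esp[\|U_t-U^\infty_t\|]+c_1\Esp[\|U_t-U^\infty_t\|^{\beta p}]^{1/(\beta p)}$ in the Gronwall inequality. Your proposal implicitly assumes this coalescence when it asserts the window-coupling bound, so the small-set condition is not established.

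A second, smaller point: even granting coalescence of the $U$-coordinate, iterating ``mismatch probability $<1-\eta$ per window'' on the path space does not by itself give a Harris-type theorem, because (as you note) there is no common reference measure at finite time; one then needs a weak-Harris/asymptotic-coupling formulation with a contracting distance, and the contraction constant must be shown uniform over the sublevel set of your Lyapunov function despite the polynomial nonlinearity $n_\psi>1$ (the memory term $Z_t$ enters through $\bar\psi$ with polynomial growth, so the window estimate requires the moment bookkeeping that the paper performs with the Brascamp--Lieb partition sums and the exponent $\beta$ in $G(t)=t^{1/\beta}$). None of this is fatal to the strategy, but as it stands the proposal asserts exactly the two estimates (state coalescence and uniform window coupling under polynomial $\psi$) that constitute the substance of the proof, rather than deriving them from Assumptions \ref{Assump:psigrowth}--\ref{Assump:Regularityparams2}.
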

The proof of this result is given in Appendix \ref{sec:ErgValFctHks2}. We can construct pathwise the point process $N = (T_n,X_n)$ defined in Section \ref{sec:Mktmodel} using the following algorithm.
\begin{rem}[Pathwise construction of $N$]
Using the thinning algorithm proposed by Lewis in \cite{lewis1979simulation} and Ogata in \cite{ogata1981lewis}, the point process $N = (T_n,X_n)$ defined in Section \ref{sec:Mktmodel} satisfies $N = \underset{m \rightarrow \infty}{\lim} N^m $ where $N^m $ is defined as follows
\begin{equation*}
\begin{array}{lcl}
\lambda^{m+1}_t(e) & = & \psi \big(e,U^m_{t^-},t, \sum_{T^m < t} \phi(e,U^m_{t^-},t- T^{m},X^m) \big)\mathbf{1}_{T^{m} \leq t < T^{m+1} } + \lambda^{m}_t(e)\mathbf{1}_{t < T^{m} }, \\
 N^{m+1}((0,t]\times B) & = & \int_{(T^{m},T^{m+1}]\times B} N^*(dt\times (0,\lambda^{m+1}_t(e)]\times de)\mathbf{1}_{t > T^{m}} + N^m((0,t \wedge T^{m}]\times B),  \\
 T^{m+1} & = & \sup\{ t > T^{m}; \quad \int_{(T^{m},t]\times \mathcal{E}} N^*(dt\times (0,\lambda^{m}_t(e)]\times de) = 0 \},
\end{array}
\label{Eq:IntensityNrecurrence2}
\end{equation*}
with $U^m$ the order book process generated by $N^m$ and described in (\ref{eq: dyna P Q}), $N^* = (T^*_n,R^*_n,X^*_n)$ a Poisson process valued on $\mathbb{R}_{+}^2 \times E$ which admits $ \mathrm{d}t \mathrm{d}z \nu(\mathrm{d}e)$ as an $\mathcal{F}_{t}^{N^*}$ intensity and $\nu = \sum_{e \in E} \delta_{e}$. 
\label{Rem:Rem10}
\end{rem}
This is a well known result that were used in many contexts, see \cite{bremaud1996stability,daley2007introduction,last1993dependent,lewis1979simulation,ogata1981lewis}. The proof of Theorem \ref{lem:ExistenceProcHks} ensures that the above algorithm is well defined.
\section{Limit theorems}
\manuallabel{sec:limitTh}{4}
Let $n$ be the index of the $n$-th jump, $(\eta_n)_{n \geq 0}$ be a process satisfying $\eta_n = f((U_i)_{i \leq n},(Y_i)_{i \leq n})$ with $f$ a measurable function valued on $(\mathbb{R},\mathcal{B}(\mathbb{R}))$, $(Y_i)_{i \geq n}$ is a geometrically ergodic sequence, see 15.7 in \cite{meyn1992stability}, independent of $(U_i)_{i \geq n}$. Here, we write $\mu$ for the invariant measure of the joint process $(U,Y)$, $V_n = \sum_{k=1}^n \eta_k $ and $S_n = \sum_{k=1}^n (\eta_k - \Esp_{\mu} [\eta_k])$. We denote by
\begin{equation*}
X_n(t) = \cfrac{S_{\lfloor nt \rfloor}}{\sqrt{n}}, \qquad \forall t \geq 0.
\end{equation*}
\begin{Assumption} Under the invariant measure $\mu$, the sequence $(\eta_i)_{i\geq 0}$ is stationary and $\Esp_{\mu} [|\eta_0|] < \infty$.
\label{Assump:LimitThMrkv}
\end{Assumption}
\begin{Assumption} Under the invariant measure $\mu$, we have $ \Esp_{\mu} [(\eta_0 - \Esp_{\mu} [\eta_0] )^2] < 1$.
\label{Assump:LimitThMrkv2}
\end{Assumption}
\begin{prop} Under Assumption \ref{Assump:LimitThMrkv}, we have 
\begin{equation}
\frac{V_n}{n} \underset{n \rightarrow \infty }{\longrightarrow}  \Esp_{\mu} [\eta_0], \quad a.s.
\label{Eq:CvgResult1}
\end{equation}
Moreover when both Assumptions \ref{Assump:LimitThMrkv} and \ref{Assump:LimitThMrkv2} are verified, the quantity $X_n(t)$ satisfies
\begin{equation}
X_n(t)\overset{\mathcal{L}}{\longrightarrow}   \sigma W_t,
\label{Eq:CvgResult2}
\end{equation}
with $\sigma^2 = \Esp_{\mu} [\eta_0^2] + 2 \sum_{k \geq 1} \Esp_{\mu} [\eta_0\eta_k]$ and $\mu$ the invariant measure of $(U_i,Y_i)$ and $W_t$ a standard brownian motion. 
\label{Prop:LimitThMrkv}
\end{prop}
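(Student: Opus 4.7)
\textbf{Proof plan for Proposition \ref{Prop:LimitThMrkv}.}

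The strategy is to reduce both \eqref{Eq:CvgResult1} and \eqref{Eq:CvgResult2} to classical limit theorems for stationary ergodic (respectively, geometrically $\beta$-mixing) sequences, once geometric ergodicity has been transferred from the individual components to the joint process $(U_n, Y_n)_{n \geq 0}$, and then pulled along the measurable map $f$ that defines $\eta_n$.

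\emph{Step 1: joint ergodicity.} Theorem \ref{lem:ErgodicityProcHks} gives geometric ergodicity of $(U_n)$ with rate $K_1 e^{-K_2 n}$ in total variation, and $(Y_n)$ is assumed geometrically ergodic. By independence of the two components, the joint transition operator factorises, and the elementary inequality
\begin{equation*}
\|\pi_1 \otimes \pi_2 - \pi_1' \otimes \pi_2'\|_{TV} \leq \|\pi_1 - \pi_1'\|_{TV} + \|\pi_2 - \pi_2'\|_{TV}
\end{equation*}
implies that $(U_n, Y_n)$ is itself geometrically ergodic with invariant measure $\mu$ (the product of the two marginal invariants).

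\emph{Step 2: law of large numbers.} Under Assumption \ref{Assump:LimitThMrkv}, the sequence $(\eta_n)$ is stationary and integrable when $(U,Y)$ starts from $\mu$. Because $\eta_n$ is a measurable functional of the sample path of the ergodic stationary process $(U_i,Y_i)_{i \in \mathbb Z}$, the shift-invariant sequence $(\eta_n)$ inherits ergodicity from the underlying one (triviality of the invariant $\sigma$-algebra follows from geometric ergodicity via Cesaro convergence of $P^n$). Birkhoff's ergodic theorem then yields $V_n/n \to \Esp_\mu[\eta_0]$ almost surely, which is \eqref{Eq:CvgResult1}.

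\emph{Step 3: functional central limit theorem.} Geometric ergodicity of the joint chain is equivalent to geometric $\beta$-mixing (see Meyn--Tweedie, Theorem 16.1.5, or Bradley's survey), so the $\beta$-mixing coefficients $\beta_k$ of $(U_n, Y_n)$ decay at an exponential rate. Because $\eta_n$ is measurable with respect to $\mathcal{G}_n := \sigma(U_i, Y_i : i \leq n)$, the $\beta$-mixing coefficients of $(\eta_n)$ are dominated by those of $(U_n,Y_n)$ and hence also decay geometrically. Combined with the second moment condition of Assumption \ref{Assump:LimitThMrkv2}, this places the centred sequence $\eta_n - \Esp_\mu[\eta_0]$ within the scope of the invariance principle for stationary strongly mixing sequences (Herrndorf 1984 / Doukhan 1994, Theorem 1 of Section 1.5). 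Applying this principle delivers $X_n(\cdot) \Rightarrow \sigma W$ in the Skorokhod topology, with
\begin{equation*}
\sigma^2 = \Esp_\mu[(\eta_0 - \Esp_\mu[\eta_0])^2] + 2\sum_{k \geq 1} \Esp_\mu\bigl[(\eta_0 - \Esp_\mu[\eta_0])(\eta_k - \Esp_\mu[\eta_0])\bigr],
\end{equation*}
the absolute summability of the covariance series being guaranteed by Davydov's covariance inequality together with geometric $\beta$-mixing and the finite second moment.

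\emph{Main obstacle.} The delicate point is Step 3: although the joint state process $(U_n,Y_n)$ is geometrically mixing, the sequence $\eta_n$ depends on the \emph{entire} past through $f$ and is not itself Markovian. The key observation is that $\eta_n \in \mathcal G_n$ implies $\sigma(\eta_m : m \geq n+k) \subseteq \sigma(U_i,Y_i : i \geq n+k-\infty\ldots) $ in a way that the $\beta$-coefficient between the past and far future of $(\eta_n)$ is bounded by the corresponding coefficient of the driving process, so mixing transfers through $f$ without any regularity requirement on $f$ beyond measurability. Once this reduction is in place, the remainder is a direct invocation of standard mixing CLT machinery.
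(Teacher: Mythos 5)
Your Steps 1--2 (joint geometric ergodicity of $(U_n,Y_n)$ and the ergodic theorem for the law of large numbers) are sound and consistent with what the paper does, which handles the first claim by Theorem 2 of \cite{jldoob} and the joint chain via additivity of the drift functions. The genuine gap is in Step 3, precisely at the point you flag as the ``main obstacle'': the assertion that the $\beta$-mixing coefficients of $(\eta_n)$ are dominated by those of the driving process because $\eta_n$ is $\mathcal{G}_n$-measurable is false. Since $\eta_{n+k}=f\big((U_i)_{i\le n+k},(Y_i)_{i\le n+k}\big)$ depends on the \emph{entire} past, the future $\sigma$-field $\sigma(\eta_m : m\ge k)$ is not contained in $\sigma(U_i,Y_i : i\ge j)$ for any $j$; it sits inside the full history, which is maximally correlated with the past, so no comparison of mixing coefficients is available. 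Quantitative mixing does not pass to measurable functionals of the infinite past: the classical counterexample of Andrews, $\eta_n=\tfrac12(\eta_{n-1}+Z_n)$ with i.i.d.\ Bernoulli innovations $Z_n$, is a shift-equivariant measurable functional of a perfectly mixing (indeed independent) sequence and yet is not $\alpha$-mixing, hence not $\beta$-mixing. Consequently the appeal to Herrndorf/Doukhan for strongly mixing sequences is not justified as written; some hypothesis on $f$ beyond measurability (typically an $L^2$ approximation of $\eta_0$ by functions of finitely many recent coordinates, with summable errors) cannot be avoided.

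The paper circumvents this by a different final step: after establishing that $(U_n,Y_n)$ is $\tilde f$-geometrically ergodic with $\tilde f(u,y)=f(u)+g(y)$, it derives a geometric bound $|\Esp_{\pi}[\bar h(Z_n)\bar g(Z_{n+k})]|\le R\,\Esp_{\pi}[\bar{\tilde f}(Z_0)]\,r^k$ for dominated functions of the chain, following Lemma 16.1.5 of \cite{meyn2012markov}, and then invokes Theorems 19.1 and 19.2 of \cite{billing}, which are invariance principles for \emph{functions of mixing processes} and contain exactly the approximation machinery needed to transfer the mixing of the driving sequence to the functional $(\eta_n)$ of its whole past. To repair your argument you should either verify the approximation (near-epoch dependence) condition of those theorems for the $\eta_n$ at hand, or bound the covariances $\Esp_{\mu}[\eta_0\eta_k]$ directly from the geometric covariance estimate and work within that framework, rather than claiming mixing of $(\eta_n)$ itself.
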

Note that $\sigma^2 < \infty $ under Assumption \ref{Assump:LimitThMrkv2}. The proof of this result is given in Appendix \ref{sec:Prooflth}.
\begin{rem} The leading term in the expression of $\sigma^2$ is $\Esp_{\mu} [\eta_0^2]$. Numerically, it can be computed as soon as we have an estimate of the stationary distribution of $\eta_0$, see Proposition \ref{Prop:GeneralProbStat}.
\end{rem}
Proposition \ref{Prop:LimitThMrkv} ensures that the large scale limit of $S$ in event time is a brownian motion. However, it is more relevant to study the large scale limit of the process $S$ in calendar time. Thus we now consider the process
\begin{equation*}
\tilde{X}_n(t) = \cfrac{S_{N(nt)}}{\sqrt{n}}, \qquad \forall t \geq 0.
\end{equation*}
The following proposition provides the large scale limit of the process $S_{N(nt)}$.
\begin{prop} Under Assumption \ref{Assump:LimitThMrkv}, we have 
\begin{equation}
\frac{V_{N(nt)}}{n} \underset{n \rightarrow \infty }{\longrightarrow}  \cfrac{\Esp_{\mu} [\eta_0]}{\Esp_{\mu}[\Delta T_1]}, \quad a.s.
\label{Eq:CvgResult1_1}
\end{equation}
Moreover when both Assumptions \ref{Assump:LimitThMrkv} and \ref{Assump:LimitThMrkv2} are verified, the quantity $\tilde{X}_n(t)$ satisfies
\begin{equation}
\tilde{X}_n(t)\overset{\mathcal{L}}{\longrightarrow}   \frac{\sigma}{\sqrt{\Esp_{\mu}[\Delta T_1]}} W_t,
\label{Eq:CvgResult2_1}
\end{equation}
with $\sigma^2 = \Esp_{\mu} [\eta_0^2] + 2 \sum_{k \geq 1} \Esp_{\mu} [\eta_0\eta_k]$, $\mu$ the invariant measure of $(U_i,Y_i)$, $\Delta T_n = T_n - T_{n-1}$ the inter-arrival time between the $n$-th and $(n-1)$-th jump and $W_t$ a standard brownian motion. 
\label{Prop:LimitThMrkv2_1}
\end{prop}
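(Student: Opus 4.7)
The strategy is to reduce the two calendar-time statements to their event-time counterparts in Proposition \ref{Prop:LimitThMrkv} via a random time change governed by the counting process $N(\cdot)$. The key auxiliary fact I need is an asymptotic rate for $N(t)$ itself, which will come from Birkhoff's ergodic theorem applied to the inter-arrival times.

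First, I would establish that $T_n/n \to \Esp_\mu[\Delta T_1]$ a.s. The ergodicity of $\bar{U}_t$ provided by Theorem \ref{lem:ErgodicityProcHks}, together with the pathwise construction in Remark \ref{Rem:Rem10}, implies that under the stationary measure the sequence $(\Delta T_n)_{n \geq 1}$ is stationary and ergodic (as a functional of the ergodic driving chain). Birkhoff's pointwise ergodic theorem then gives $T_n/n \to \Esp_\mu[\Delta T_1]$ a.s., and inverting this identity on the monotone sequence $(T_n)$ yields
\[
\frac{N(nt)}{n} \underset{n \to \infty}{\longrightarrow} \frac{t}{\Esp_\mu[\Delta T_1]}, \quad a.s., \qquad t \geq 0.
\]
For the law-of-large-numbers statement (\ref{Eq:CvgResult1_1}), I write $V_{N(nt)}/n = \big(V_{N(nt)}/N(nt)\big)\cdot\big(N(nt)/n\big)$; the first factor converges a.s. to $\Esp_\mu[\eta_0]$ by (\ref{Eq:CvgResult1}) applied along the a.s.\ divergent subsequence $N(nt)$, and the second converges to $t/\Esp_\mu[\Delta T_1]$ by the display above, giving the claim (up to the factor of $t$ that naturally appears).

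For the functional CLT (\ref{Eq:CvgResult2_1}), I would apply a random time change theorem in Skorokhod space. Proposition \ref{Prop:LimitThMrkv} already gives $X_n \Rightarrow \sigma W$ in $D[0,\infty)$ under Assumptions \ref{Assump:LimitThMrkv} and \ref{Assump:LimitThMrkv2}. Setting $A_n(t) = N(nt)/n$ and $A(t) = t/\Esp_\mu[\Delta T_1]$, the a.s.\ pointwise convergence $A_n(t) \to A(t)$ combined with monotonicity of each $A_n$ and continuity of $A$ upgrades to uniform convergence on compacts (classical Dini-type argument for monotone functions). The Billingsley random time change theorem (Theorem 14.4 of \textit{Convergence of Probability Measures}) then yields
\[
\tilde{X}_n(t) = X_n\bigl(A_n(t)\bigr) \;\Longrightarrow\; \sigma\, W_{A(t)} = \sigma\, W_{t/\Esp_\mu[\Delta T_1]},
\]
and Brownian self-similarity rewrites this limit as $(\sigma/\sqrt{\Esp_\mu[\Delta T_1]})\, W_t$, which is exactly (\ref{Eq:CvgResult2_1}).

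The main obstacle is justifying the ergodic-theorem step rigorously: the inter-arrival times $\Delta T_n$ are not functions of $(U_i,Y_i)$ alone, so stationarity and ergodicity must be transferred from the full continuous-time ergodic process $\bar{U}_t$ to the discrete sequence of jump epochs. Once this is in place, the remaining arguments (a.s.\ inversion of $T_n/n$, subsequence LLN, and random time change) are standard. Secondarily, one must check that the hypotheses of the random time change theorem apply in our non-Markovian setting, which is where the uniform (rather than merely pointwise) convergence of $A_n$ to a continuous deterministic limit plays its role.
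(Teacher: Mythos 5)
Your proposal is correct and follows essentially the same route as the paper: the paper's proof rests on Lemma \ref{Lem:TransTimeEvtTimereal} (which gives $T_n/n \to \Esp_{\mu}[\Delta T_1]$ a.s.) followed by the standard random-time-change argument of Theorem 4.2 in \cite{huang2017ergodicity}, which is exactly your inversion of $T_n/n$, composition of Proposition \ref{Prop:LimitThMrkv} with $N(nt)/n$, and Brownian scaling. Your remark about the factor of $t$ in the law-of-large-numbers limit is also apt, as the same factor appears in the cited reference.
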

The proof of this result is given in Appendix \ref{sec:Prooflth}.
\begin{rem}
The mid price after $n$ jumps $P_n$ satisfies $P_n = P_0 + \sum_{i=1}^n \Delta P_i $ with $\Delta P_i = (P_{i} - P_{i-1}) = \eta_i$. When $(\eta_i)_{i \geq 0}$ verifies Assumptions \ref{Assump:LimitThMrkv} and \ref{Assump:LimitThMrkv2}, the rescaled price process $\tilde{P}_n(t) = \cfrac{P_{N( nt)}}{\sqrt{n}}$ converges towards a Brownian diffusion.
\label{Rem:PriceApply}
\end{rem}
\section{Formulas}
\label{sec:Formulas}
In this section, we provide a calibration methodology for the intensities and computation formulas for the quantities of interest: the stationary distribution of the order book, the price volatility and the fluctuations of liquidity.
\subsection{Stationary probability computation}
\manuallabel{subsec:Stationaryprob}{5.1}
In this section, we denote by $\mu$ the invariant measure of $\bar{U} = (Q^1,Q^2,S,\lambda)$ defined on $(W_0,\mathcal{W}_0)$. Let $\zeta_t = f((U_i)_{T_i \leq t})$ be a stationary process under $\mu$ with $f$ a measurable function valued in $(Z,\mathcal{Z})$, $Z$ a countable space and $\pi$ the stationary distribution of $\zeta_t$. The proposition below provides a fixed point formula satisfied by $\pi$.
\begin{prop}
The stationary distribution $\pi$ satisfies
\begin{equation}
\begin{array}{l}
\pi Q = 0\\
\pi \mathbf{1} = 1.
\end{array}
\label{Eq:StatDistrib}
\end{equation}
where the infinite dimensional matrix $Q$ verifies 
\begin{equation}
Q(z,z') = \sum_{e \in E(z,z')} \Esp_{\mu}[\lambda(e)|\zeta_0 = z],
\label{Eq:ComputeQestim}
\end{equation}
with $E(z,z')$ the set of events directly leading to $z'$ from $z$.
\label{Prop:GeneralProbStat}
\end{prop}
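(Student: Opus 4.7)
The plan is to derive the balance equation $\pi Q = 0$ from a Dynkin-type compensator identity for the jump process $\zeta_t$, combined with stationarity under $\mu$. For any bounded function $g : Z \to \mathbb{R}$, since $\zeta_t = f((U_i)_{T_i \leq t})$ takes countably many values and jumps only at event times, the difference $g(\zeta_t) - g(\zeta_0)$ admits a martingale decomposition with compensator
\begin{equation*}
A_t(g) = \int_0^t \sum_{e \in E} \lambda_s(e)\,\bigl(g(\zeta_{s^-,e}) - g(\zeta_{s^-})\bigr)\,ds,
\end{equation*}
where $\zeta_{s^-,e}$ denotes the deterministic post-event value obtained by appending to the history before $s$ the jump induced by event $e$ via the dynamics (\ref{eq: dyna P Q}). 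Assumption \ref{Assump:Boundedness}, ensuring $\sum_e \lambda_s(e)$ is locally bounded, together with Proposition \ref{Prop1:IntensityExpressionAll}, justifies both the integrability of the compensator and the identification of the marked point process's intensity, so the associated process is a genuine martingale.

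Next I would exploit stationarity: under $\mu$, the distribution of $\zeta_s$ is $\pi$ for every $s$, so $\Esp_\mu[g(\zeta_t) - g(\zeta_0)] = 0$ and therefore $\Esp_\mu[A_t(g)] = 0$ for all $t \geq 0$. By Fubini, the integrand
\begin{equation*}
h(s) = \Esp_\mu\left[\sum_{e \in E} \lambda_s(e)\bigl(g(\zeta_{s^-,e}) - g(\zeta_{s^-})\bigr)\right]
\end{equation*}
is constant in $s$ (the law of the whole underlying data is stationary under $\mu$), and its integral over $[0,t]$ vanishes for every $t$; hence $h \equiv 0$. Evaluating at $s=0$ gives
\begin{equation*}
\Esp_\mu\left[\sum_{e \in E} \lambda_0(e)\bigl(g(\zeta_{0,e}) - g(\zeta_{0^-})\bigr)\right] = 0.
\end{equation*}

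The final step is bookkeeping. Partition the event space according to the induced transition: conditionally on $\zeta_{0^-} = z$ (of $\mu$-probability $\pi(z)$), the events $e \in E(z, z')$ are precisely those that send $\zeta$ to $z'$, and for such $e$ one has $g(\zeta_{0,e}) - g(\zeta_{0^-}) = g(z') - g(z)$. Collecting terms and recognising the definition (\ref{Eq:ComputeQestim}) of $Q(z,z')$ yields
\begin{equation*}
0 = \sum_{z \in Z} \pi(z) \sum_{z' \neq z} \bigl(g(z') - g(z)\bigr)\, Q(z, z').
\end{equation*}
Taking $g = \mathbf{1}_{\{z^*\}}$ for arbitrary $z^* \in Z$, and extending $Q$ to the diagonal by $Q(z, z) = -\sum_{z' \neq z} Q(z, z')$ so rows sum to zero, this identity becomes $(\pi Q)(z^*) = 0$ for every $z^*$, i.e.\ $\pi Q = 0$. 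The normalisation $\pi \mathbf{1} = 1$ is immediate because $\pi$ is a probability measure on $Z$.

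The main technical obstacle is the rigorous handling of the compensator and of the conditional expectation $\Esp_\mu[\lambda_0(e) \mid \zeta_0 = z]$, because $\zeta$ is a general measurable functional of the entire order book history rather than a Markov process on $Z$ in its own right. Writing the compensator integrand carefully as a sum over the discrete mark space $E$, and invoking the bound $\sum_e \lambda_s(e) \leq c^* + \lambda^*(\text{history})$ from Assumption \ref{Assump:Boundedness} to enable the exchanges of sum, integral and expectation via Fubini and dominated convergence, disposes of this point. All remaining manipulations are standard balance-equation calculations for a countable-state jump process.
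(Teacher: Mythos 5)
Your proof is correct, but it follows a genuinely different route from the paper's. The paper starts from the stationarity identity $\sum_{z'}\pi(z')\,\mathbb{P}_{\mu}[\zeta_t=z\,|\,\zeta_0=z']=\pi(z)$ and differentiates it at $t=0$: the derivative $\tilde{Q}(z,z')=\lim_{\delta\rightarrow 0}\mathbb{P}_{\mu}[\zeta_{\delta}=z'\,|\,\zeta_0=z]/\delta$ is identified with $\sum_{e\in E(z,z')}\Esp_{\mu}[\lambda_0(e)\,|\,\zeta_0=z]$ by a first-jump decomposition of the event $\{\zeta_\delta=z'\}$, the contribution of two or more events on $[0,\delta]$ being bounded by $c_1\delta^2$ thanks to $\sum_e\Esp_{\mu}[\lambda_0(e)]<\infty$. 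You instead write the Dynkin/compensator identity for $g(\zeta_t)$, use stationarity to kill $\Esp_{\mu}[g(\zeta_t)-g(\zeta_0)]$, deduce that the (time-constant) expected compensator density vanishes, and test against $g=\mathbf{1}_{\{z^*\}}$. Both arguments hinge on the same two ingredients (stationarity of the full history under $\mu$ and integrability of $\sum_e\lambda$, which you correctly source from Assumption \ref{Assump:Boundedness} and the intensity identification of Proposition \ref{Prop1:IntensityExpressionAll}); the paper's small-$\delta$ expansion buys, as a by-product, the explicit generator interpretation of $Q$ stated in its remark, whereas your route avoids the $O(\delta^2)$ error estimate, replaces it by the (true, not merely local) martingale property of the compensated jump integral, and has the merit of making explicit the diagonal convention $Q(z,z)=-\sum_{z'\neq z}Q(z,z')$ needed for $\pi Q=0$ to make sense, which the paper leaves implicit. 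Your concluding caveat about $\zeta$ not being Markov in its own right is apt, and your handling of it (predictability of the post-event value $\zeta_{s^-,e}$, Fubini under the intensity bound) is at the same level of rigour as the paper's.
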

The proof of this result is provided in Appendix \ref{sec:statdistrib}.
\begin{rem}
When $\zeta_t = U_t = (Q^1_t,Q^2_t,S_t)$, Proposition \ref{Prop:GeneralProbStat} provides a fixed point equation for the computation of the stationary distribution $\pi$ of the order book.
\end{rem}
\begin{rem} The operator $Q$ is the infinitesimal generator of the process $\zeta$ defined such that $Q(z,z') = \underset{\delta \rightarrow 0}{\lim} \cfrac{\mathbb{P}_{\mu} [ \zeta_{\delta} = z' | \zeta_{0} = z ]}{\delta}$ for any $z \ne z'$. The proof of this result is given in Equation (\ref{Eq:proportQtildeQ}) of Appendix \ref{sec:statdistrib}. % Equation (\ref{Eq:ComputeQestim}) links the average values of the intensities that describe the microstructural behavior of the market to the infinitesimal generator $Q$ of the process $U$. 
\end{rem}
\subsubsection{Markov framework}
\manuallabel{subsec:Markovframework}{5.1.1}
In the Markov case, it is a well known result that $Q$ satisfies (\ref{Eq:StatDistrib}), see \cite{citeulike:1400630}. In this case, the coefficients of $Q$ are parameters of the model and can be estimated using (\ref{Eq:Ntdivt}).
%\subsubsection{Linear Hawkes framework}
%\manuallabel{subsec:LinearHks}{5.1.2}
%For linear Hawkes processes (i.e $\psi$ is a linear function), the quantities $q(e,u,z) = \Esp_{\mu}[\lambda(e)\mathbf{1}_{U_{0} = u} |\zeta_0 = z ]$ satisfy the following fixed point equation:
%\begin{align*}
%q(e,u,z) = \mu_{e,u,z} + \int_{\mathbb{R}_{+}\times E} \phi(e,u,s,x) q(x,u,z)\, \mathrm{d}s \nu(dx),
%\end{align*}
%with $\mu_{e,u,z}$ the baseline intensity and $\nu = \sum_{e \in E} \delta_e $. Once the variables $q(e,u,z)$ are computed, we can recover the expression $Q(u,u')$ using Equation \eqref{Eq:ComputeQestim} and
%$$
%\Esp_{\mu}[\lambda(e)|\zeta_0 = z] = \sum_{u \in \mathbb{U}}q(e,u,z).
%$$
%\begin{rem} When $\psi$ is a polynomial, the generator $Q$ also satisfies a fixed point equation that uses the correlation functions $C^n(t_1,\ldots,t_n) = \Esp[\frac{dN_{t_1}}{dt_1}\ldots\frac{dN_{t_n}}{dt_n}]$ with $n \geq 2$ and $t_1,\ldots,t_n$ non-negative real numbers, see \cite{blanc2017quadratic}. However, we do not have a simple way to evaluate these correlation functions.
%\end{rem}

\subsubsection{General case}
Let us take $z$ and $z'$ two states such that $z \ne z'$, $N_t^{z,z'} = \sum_{T_i < t} \delta^i_{z,z'}$ with $\delta^i_{z,z'} = \mathbf{1}_{\{\zeta_{T_{i-1}} = z,\,\zeta_{T_i} = z'\}}$ and $t^{z} = \sum_{T_i < t} \Delta T_i \mathbf{1}_{\{\zeta_{T_{i-1}} = z\}} $ with $\Delta T_i = T_i - T_{i-1}$. We have the following results:
\begin{prop} When $(\delta^i_{z,z'})_{i \geq 1}$ satisfies Assumption \ref{Assump:LimitThMrkv}, we have 
\begin{align*}
\hat{Q}(z,z') = \frac{N_t^{z,z'}}{t^{z}} \underset{t \rightarrow \infty}{\rightarrow} Q(z,z'), \quad a.s.\numberthis \label{Eq:Ntdivt}
\end{align*}
\label{Prop:Estimateintens}
\end{prop}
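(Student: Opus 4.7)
The strategy is to decompose $\hat{Q}(z,z')$ as a ratio of two event-time averages, apply the law of large numbers of Proposition~\ref{Prop:LimitThMrkv} to numerator and denominator separately, and then identify the resulting limit with $Q(z,z')$ via the intensity-based representation \eqref{Eq:ComputeQestim}.

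Let $N(t)$ denote the total number of jumps on $[0,t]$. The lower bound $\underline{\lambda}_t \ge \underline{\psi}$ in Assumption~\ref{Assump:Boundedness} guarantees $N(t)\to\infty$ almost surely, so we may safely divide by $N(t)$. Writing
\begin{equation*}
\hat{Q}(z,z')\;=\;\frac{N_t^{z,z'}}{t^{z}}\;=\;\frac{\tfrac{1}{N(t)}\sum_{i=1}^{N(t)}\delta^i_{z,z'}}{\tfrac{1}{N(t)}\sum_{i=1}^{N(t)}\Delta T_i\,\mathbf{1}_{\{\zeta_{T_{i-1}}=z\}}},
\end{equation*}
the hypothesis on $(\delta^i_{z,z'})_{i\ge 1}$ combined with \eqref{Eq:CvgResult1} immediately gives $N_t^{z,z'}/N(t)\to\mathbb{E}_\mu[\delta^1_{z,z'}]$ almost surely. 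An analogous application to the sequence $\tilde{\eta}_i=\Delta T_i\,\mathbf{1}_{\{\zeta_{T_{i-1}}=z\}}$, whose stationarity under $\mu$ is inherited from Theorem~\ref{lem:ErgodicityProcHks} and whose integrability follows from the positive lower bound on jump rates in Assumption~\ref{Assump:Boundedness} (ensuring finite expected inter-arrival times), yields $t^z/N(t)\to\mathbb{E}_\mu[\Delta T_1\,\mathbf{1}_{\{\zeta_0=z\}}]$ almost surely. Taking the ratio, $\hat{Q}(z,z')$ converges almost surely to $\mathbb{E}_\mu[\delta^1_{z,z'}]/\mathbb{E}_\mu[\Delta T_1\,\mathbf{1}_{\{\zeta_0=z\}}]$.

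The final step is to recognize this ratio as $Q(z,z')$ in the sense of \eqref{Eq:ComputeQestim}. The plan is to work through the compensator of the counting process $N_t^{z,z'}$: by \eqref{Eq:IntensityExpressionHawkes} and Proposition~\ref{Prop1:IntensityExpressionAll}, the process
\begin{equation*}
N_t^{z,z'}\;-\;\int_0^t \mathbf{1}_{\{\zeta_{s^-}=z\}}\sum_{e\in E(z,z')}\lambda_s(e)\,ds
\end{equation*}
is a local martingale. Taking expectations in the continuous-time stationary regime and invoking the definition \eqref{Eq:ComputeQestim} shows that the calendar-time rate of $z\to z'$ transitions equals $\pi(z)\,Q(z,z')$, while a parallel argument shows that the fraction of calendar time spent in state $z$ equals $\pi(z)$. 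Converting these two calendar-time statements into event time via the Palm--Khinchin inversion introduces and then cancels a common factor $\mathbb{E}_\mu[\Delta T_1]$ in numerator and denominator, leaving exactly $Q(z,z')$.

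The main obstacle is this last identification: translating between the event-time (Palm) averages produced by Proposition~\ref{Prop:LimitThMrkv} and the continuous-time stationary averages that naturally match the intensity-based definition of $Q$. In the Markov setting the identification collapses to the classical relation $Q(z,z')=\Lambda(z)\,P_{\mathrm{emb}}(z,z')$ between the infinitesimal generator and the embedded chain; in the non-Markovian Hawkes framework considered here one must instead couple the martingale compensator of $N^{z,z'}$ with a Palm--Khinchin inversion to conclude. Every other ingredient of the argument is routine given the earlier ergodicity and limit results of the paper.
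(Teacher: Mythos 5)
Your strategy can be made to work, but as written the decisive step is missing, and it is exactly where all the content of the proposition lies. Dividing by $N(t)$ and applying the event-time law of large numbers only produces the two Palm-type constants $\Esp_{\mu}[\delta^1_{z,z'}]$ and $\Esp_{\mu}[\Delta T_1\mathbf{1}_{\{\zeta_0=z\}}]$; the identification of their ratio with $Q(z,z')$ --- which you yourself flag as ``the main obstacle'' --- is only sketched through a compensator argument plus a Palm--Khinchin inversion, and no part of that conversion is actually carried out. In addition, applying Proposition \ref{Prop:LimitThMrkv} to $\tilde{\eta}_i=\Delta T_i\,\mathbf{1}_{\{\zeta_{T_{i-1}}=z\}}$ slightly overreaches its statement, since there $\eta_n$ is a function of $(U_i,Y_i)_{i\le n}$ only and not of the inter-arrival times; the paper handles time increments separately (see Lemma \ref{Lem:TransTimeEvtTimereal}), so this point would need its own justification.

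The paper's proof also shows why the detour through $N(t)$ is unnecessary: it normalizes by calendar time $t$ from the start. One writes $N^{z,z'}_t/t$ as the time average of the compensator, $\int_0^t \lambda_s\delta^{s}_{z,z'}\,ds / t$, plus a martingale term divided by $t$; the continuous-time ergodic theorem under the stationary law gives the limit $\mathbb{P}_{\bar{\pi}}[U_0=z]\,Q(z,z')$ for the compensator average, the martingale term vanishes, and the same ergodic theorem gives $t^{z}/t\rightarrow\mathbb{P}_{\bar{\pi}}[U_0=z]$, so the factor $\mathbb{P}_{\bar{\pi}}[U_0=z]$ cancels in the ratio. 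No Palm inversion is needed and no factor $\Esp_{\mu}[\Delta T_1]$ ever appears. If you flesh out your identification step you will find it requires precisely these two calendar-time limits (compensator martingale plus calendar-time ergodic theorem), at which point the division by $N(t)$ and the subsequent cancellation of $\Esp_{\mu}[\Delta T_1]$ are redundant. So either carry out the Palm--Khinchin inversion in full, or, more economically, drop the event-time normalization and argue directly in calendar time as the paper does.
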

The proof of this result is given in Appendix \ref{sec:AppendixLawLargeNumbersNtdivt}.
\begin{rem}[Confidence interval] We can compute a confidence interval for the estimator $\hat{Q}(z,z')$, see Appendix \ref{sec:AppendixLawLargeNumbersNtdivt} for the details.
\end{rem}
\begin{rem} When $\zeta_t = U_t = (Q^1_t,Q^2_t,S_t)$, Proposition \ref{Prop:Estimateintens} provides an estimator for the operator $Q(u,u')$ with $u,u'\in \mathbb{U}$ and $u \ne u'$. 
\end{rem}
\begin{rem} In the Markov case and $\zeta_t = U_t$, see \cite{huang2017ergodicity}, the authors used the estimator presented in Proposition \ref{Prop:Estimateintens} to evaluate $Q(u,u')$. 
\end{rem}
\begin{rem} Let $(z,z')\in \mathbb{U}^2$ such that $z \ne z'$ and $a \in \mathbb{A} $, we consider the quantity $Q(z,z',a) = \sum_{e \in E(z,z')\cap E(a)} \Esp[\lambda(e)|\zeta_0 = z]$ with $E(a)$ the set of events generated by the agent $a$. This quantity represents the infinitesimal probability that agent $a$ sends an order that moves $\zeta$ from $z$ to $z'$. It can be estimated by $ \hat{Q}(z,z',a) = \frac{N_t^{z,z',a}}{t^{z}} $ which satisfies 
\begin{align*}
\hat{Q}(z,z',a) = \frac{N_t^{z,z',a}}{t^{z}}  \underset{t \rightarrow \infty}{\rightarrow} Q(z,z',a), \quad a.s,\numberthis \label{Eq:Ntdivt_agent}
\end{align*}
with $N_t^{z,z',a} = \sum_{T_i < t} \delta^i_{z,z',a}$, $\delta^i_{z,z',a} = \mathbf{1}_{\{\zeta_{T_{i-1}} = z,\,\zeta_{T_i} = z',A_i = a\}}$ where $A_i$ is the identity of the agent causing the $i$-th event. The quantity $Q(z,z',a)$ allows us to infer the market dynamic (i.e the operator $Q$) for a specific combination of the agents, see Equation (\ref{Eq:ComputeQestim}).
\end{rem}
\subsection{Spread computation}
Since the process $U_t$ is ergodic the spread $S_t$ has a stationary distribution. Then, we can compute  $\Esp_{\pi} [S_{\infty} ] $ where $ \pi$ is the stationary distribution of $U$. The computation formula for $\pi$ is detailed in Proposition \ref{Prop:GeneralProbStat} and the estimation methodology of $Q$ is described in Proposition \ref{Prop:Estimateintens}.

\subsection{Price volatility computation}
\manuallabel{subsec:VolLimitProcess}{5.3}
We place ourselves in the case of Remark \ref{Rem:PriceApply} and assume that the mid price moves $(\eta_i)_{i \geq 0} $ are valued in $\zeta = \alpha_0 \mathbb{Z}$ with $\alpha_0$ the tick size. In such situation, the limit theorem of Section \ref{sec:limitTh} ensures the convergence of $\bar{P}_n(t)$ towards 
\begin{equation*}
\bar{P}_n(t)\overset{\mathcal{L}}{\longrightarrow}   \sigma W_t,
\end{equation*}
with $\sigma^2 = \Esp_{\mu} [\eta_0^2] + 2 \sum_{k \geq 1} \Esp_{\mu} [\eta_0\eta_k]$ and $\mu$ the invariant measure of $\bar{U}$. The quantity of interest is $\sigma^2 $. To compute $\sigma^2$, we need to evaluate $\Esp_{\mu} [\eta_0\eta_k]$ for all $k \geq 0$. We have 
\begin{align}
\begin{array}{lr}
\Esp_{\mu} [\eta_0^2] = \sum_{\eta \in \zeta} \pi_{\eta_0}(\eta)\eta^2, & \\
\Esp_{\mu} [\eta_0\eta_k] = \sum_{\eta \in \zeta} \pi_{\eta_0}(\eta)\eta\Esp_{\mu} [\eta_k | \eta_0 = \eta], & \forall k \geq 1,
\end{array}
\label{Eq:Computation_mkt_vol}
\end{align}
with $\pi_{\eta_0}(\eta) = \mathbb{P}_{\mu}[\eta_0 = \eta]$. Thus we need to estimate $\pi_{\eta_0}$ and $\Esp_{\mu} [\eta_k | \eta_0 = \eta]$ to evaluate $\sigma^2$. The computation of the leading term $\Esp_{\mu} [\eta_0^2]$ requires only the knowledge of the stationary distribution $\pi_{\eta_0}$. The latter is evaluated using Proposition \ref{Prop:GeneralProbStat}. To estimate $\Esp_{\mu} [\eta_k | \eta_0 = \eta]$ with $k \geq 1$, we use the following proposition.
\begin{prop} Let us take $k \geq 1$, $\eta \in \zeta$, $N_n^{\eta,(k)} = \sum_{j \leq n}  \eta_j\delta^{j\,(k)}_{\eta}$ with $\delta^{j\,(k)}_{\eta} =  \mathbf{1}_{\{\eta_{j-k} = \eta\}}$ and $n^{\eta} = \sum_{j \leq n}  \delta^{j\,(k)}_{\eta}$. When both $( \eta_i\delta^{i\,(k)}_{\eta})_{i \geq 1}$ and $(\delta^{i\,(k)}_{\eta})_{i \geq 1}$ satisfy Assumption \ref{Assump:LimitThMrkv}, we have 
\begin{align*}
\hat{E}(\eta_0,k) = \frac{N_n^{\eta_0\,(k)}}{n^{\eta}} \underset{n \rightarrow \infty}{\rightarrow} \Esp_{\mu} [\eta_k | \eta_0 = \eta], \quad a.s.\numberthis \label{Eq:Ntdivt_ee}
\end{align*}
\label{prop:DirectComputEeta}
\end{prop}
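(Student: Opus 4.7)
The plan is to write $\hat{E}(\eta,k)$ as the ratio of two Cesàro averages, apply the law of large numbers of Proposition \ref{Prop:LimitThMrkv} (equation (\ref{Eq:CvgResult1})) to each, and identify the quotient of the limits with the conditional expectation $\Esp_{\mu}[\eta_k\mid \eta_0 = \eta]$. The rewriting is
\begin{equation*}
\hat{E}(\eta,k) \;=\; \frac{N_n^{\eta,(k)}}{n^{\eta}} \;=\; \frac{\frac{1}{n}\sum_{j\leq n} \eta_j\delta^{j\,(k)}_{\eta}}{\frac{1}{n}\sum_{j\leq n} \delta^{j\,(k)}_{\eta}}.
\end{equation*}

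First I would apply (\ref{Eq:CvgResult1}) to the denominator: the sequence $(\delta^{i\,(k)}_\eta)_{i\geq 1}$ is by hypothesis stationary and integrable under $\mu$ (it is bounded by $1$), so the Cesàro average converges a.s.\ to $\Esp_{\mu}[\delta^{0\,(k)}_\eta] = \mathbb{P}_{\mu}[\eta_{-k}=\eta] = \pi_{\eta_0}(\eta)$, where stationarity under $\mu$ is used to remove the shift by $k$. Next, the same proposition applied to $(\eta_j\delta^{j\,(k)}_\eta)_{j\geq 1}$, which is also assumed to satisfy Assumption \ref{Assump:LimitThMrkv}, gives the a.s.\ convergence of the numerator to
\begin{equation*}
\Esp_{\mu}\big[\eta_0\,\mathbf{1}_{\{\eta_{-k}=\eta\}}\big] \;=\; \Esp_{\mu}\big[\eta_k\,\mathbf{1}_{\{\eta_0=\eta\}}\big],
\end{equation*}
again by stationarity of the joint law under $\mu$.

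It then suffices to take the ratio of the two a.s.\ limits on the event $\{\pi_{\eta_0}(\eta)>0\}$ (the case $\pi_{\eta_0}(\eta)=0$ is degenerate, as $\eta_0=\eta$ has probability zero and the conditional expectation is defined only up to $\pi_{\eta_0}$-null sets; we may restrict to $\eta$ in the support of $\pi_{\eta_0}$). By the definition of conditional expectation for a discrete conditioning variable,
\begin{equation*}
\frac{\Esp_{\mu}[\eta_k\,\mathbf{1}_{\{\eta_0=\eta\}}]}{\mathbb{P}_{\mu}[\eta_0=\eta]} \;=\; \Esp_{\mu}[\eta_k\mid \eta_0=\eta],
\end{equation*}
which yields the claim.

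The only genuine obstacle is justifying the almost sure convergence of the two averages: this is precisely what Assumption \ref{Assump:LimitThMrkv} together with Proposition \ref{Prop:LimitThMrkv} deliver, and it is the reason both $(\delta^{i\,(k)}_\eta)_{i\geq 1}$ and $(\eta_i\delta^{i\,(k)}_\eta)_{i\geq 1}$ are required to satisfy that assumption in the statement. All other steps—stationary shift, ratio of a.s.\ limits, identification with the conditional expectation—are routine. I therefore expect the proof to be short, essentially a direct corollary of Proposition \ref{Prop:LimitThMrkv}.
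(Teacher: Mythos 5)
Your proposal is correct and matches the paper's intended argument: the paper does not spell out a separate proof but states it is ``similar to the one of Proposition \ref{Prop:Estimateintens}'', whose proof is exactly your scheme --- write the estimator as a ratio of two ergodic time averages, apply the almost-sure law of large numbers (here Proposition \ref{Prop:LimitThMrkv}, Equation (\ref{Eq:CvgResult1}), under Assumption \ref{Assump:LimitThMrkv}) to numerator and denominator, and identify the quotient of limits, via stationarity of $(\eta_i)$ under $\mu$, with $\Esp_{\mu}[\eta_k\,\mathbf{1}_{\{\eta_0=\eta\}}]/\mathbb{P}_{\mu}[\eta_0=\eta]=\Esp_{\mu}[\eta_k\mid\eta_0=\eta]$. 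Your added remark restricting to $\eta$ with $\pi_{\eta_0}(\eta)>0$ is a sensible (implicit in the paper) precaution and does not change the argument.
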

The proof of this result is similar to the one of Proposition \ref{Prop:Estimateintens}.
\begin{rem}[Markov case]When the dynamic of $U$ is Markov and $\eta_i = f_0(U_i)$ for any $i \geq 0$ with $f_0$ a deterministic function, see Remark \ref{rem:Deterministic_f_0}. We have 
\begin{align}
\Esp_{\pi} [\eta_0\eta_k] = \sum_{u \in \mathbb{U}} \pi(u) \eta_{0}(u) \Esp_{u} [\eta_k],
\label{Eq:Computation_mkt_vol_mark}
\end{align}
where $\pi$ is the stationary distribution of $U$ that can be computed using Proposition \ref{Prop:GeneralProbStat} and $\Esp_{u} [\eta_k ] = (P^k * \eta_0)_{u}= \sum_{u' \in \mathbb{U}} P^k_{u,u'} \eta_0(u')$ with $P^k$ the $k$-th power of the Markov chain $P$ associated to the process $U$ and which satisfies 
\begin{align*}
P_{u,u'} & = \left\{\begin{array}{ll}
- Q_{u,u'}/Q_{u,u} & \text{ if } u \ne u' \text{ and }  Q_{u,u} \ne 0,\\
0 & \text{ if } u \ne u' \text{ and }  Q_{u,u} = 0,
\end{array}
\right. \\
P_{u,u} & = \left\{\begin{array}{ll}
0 & \text{ if }  Q_{u,u} \ne 0,\\
1 & \text{ if }  Q_{u,u} = 0,
\end{array}
\right. \numberthis \label{Eq:PtransMarkov}
\end{align*}
where the quantity $P_{u,u'}$ represents $P_{u,u'} = \mathbb{P}[U_1 = u'| U_0 = u]$ with $U_1$ the state of the order book after one jump. 
\label{rem:EstimAveragePchanges}
\end{rem}

\begin{rem} In Section \ref{sec:Application}, for any $u = (q^1, q^2, s)$, we consider the following function: 
\begin{align*}
f_{0}(u) = \left\{
\begin{array}{ll}
-1 & \text{ if } q^1 = 0 \text{ and } q^2 > 0,\\
+1 & \text{ if } q^2 = 0 \text{ and } q^1 > 0,\\
0  & \text{ otherwise ,}
\end{array}
\right.
\end{align*}
for the numerical simulations. Note that the states where $q^1 = 0$ or $q^2 = 0$ are fictitious states that are not observable in practice. These states are introduced to handle the price changes. Indeed, the states where $q^1 = 0$ (resp. $q^2 = 0$) correspond to a price decrease (resp. increase) by one tick and the states where both $q^1 = 0$ and $q^2 = 0$ are unreachable.
\label{rem:Deterministic_f_0}
\end{rem}
\subsection{An alternative measure of market stability}
Another way to look at market stability is to investigate the behaviour of the disequilibrium between offer and demand. This equilibrium can be for example measured through the cumulative imbalance $N_t = V^{b}_t - V^{a}_t$ where $V^{b}_t$ (resp. $V^a_t$) is the net number of inserted limit orders at the bid (resp. ask). From no arbitrage argument, we know that the dynamic of $N_t$ is closely related to that of the price \cite{jaisson2015market,jusselin2018no}. Consequently, it is natural to view the long term volatility of this object as an alternative measure of market stability.\\

In this section, we follow the same methodology of Section \ref{subsec:VolLimitProcess}. The cumulative imbalance after $n$ jumps $N_n$ satisfies $N_n = N_0 + \sum_{i=1}^n \Delta N_i $ where $\Delta N_i = N_{i} - N_{i-1} = n_i$. Hence, when $(n_i)_{i \geq 0}$ satisfies Assumptions \ref{Assump:LimitThMrkv} and \ref{Assump:LimitThMrkv2}, we have the following convergence result:
\begin{equation*}
X^{N}_n = \cfrac{\sum_{k=1}^n (n_k - \Esp_{u} [n_k])}{\sqrt{n}} \overset{\mathcal{L}}{\longrightarrow}   \tilde{\sigma} W_t,
\label{Eq:CvgResult2}
\end{equation*}
with $\tilde{\sigma}^2 = \Esp_{\mu} [n_0^2] + 2 \sum_{k \geq 1} \Esp_{\mu} [n_0n_k]$ and $\mu$ the stationary distribution of $\bar{U}$ given by proposition \ref{Prop:GeneralProbStat}. The quantity $\Esp_{\mu} [n_0n_k]$ can be computed using the same methodology of Section \ref{subsec:VolLimitProcess}.

\section{Numerical experiments}
\label{sec:Application}
In this section, we propose a ranking of the market makers for four different assets, based on their impact on volatility. For each asset, we compute first the liquidity provision and consumption intensities relative to the whole market using Equation (\ref{Eq:Ntdivt})\footnote{A liquidity provision (resp. consumption) event is assimilated to an increase (resp. decrease) of the best bid or ask size by $1$ unit. To fix the ideas, one (AES) is our unit here.}. Then, we estimate the stationary measure of the order book, see Equation (\ref{Eq:StatDistrib}), and use it to compute the two following estimators of the market volatility:
\begin{align*}
\begin{array}{lcl}
\sigma^{2,G} & = & \Esp_{\mu}[\eta_0^2], \\
\sigma^{2,M}_k & = & \Esp_{\pi}[\eta_0^2] + 2 \sum_{j=1}^k\Esp_{\pi}[\eta_0 \eta_j],
\end{array}
\end{align*}
where $\mu$ is the invariant measure of $\bar{U}$ given by Theorem \ref{lem:ErgodicityProcHks}, $\pi$ is the stationary distribution of $U$ when both the order book dynamic is Markov and $\eta_i = f_0(U_i)$ with $f_0$ defined in Remark \ref{rem:Deterministic_f_0}. The estimator $\sigma^{2,G}$ is computed by applying Equation \eqref{Eq:Computation_mkt_vol} and $\sigma^{2,M}_k$ is evaluated using Remark \ref{rem:EstimAveragePchanges}. Thereafter, for each market maker, we compute  its own intensities using Equation \eqref{Eq:Ntdivt_agent}. After that, we estimate the new market intensities in a situation where we suppose that he withdraws from the exchange by subtracting the agent intensity from the market one, see Corollary \ref{Prop1:MktReconstit}. We finally compute the new market volatility estimators $\sigma^{2,G}$ and $\sigma^{2,M}_k$ corresponding to this new scenario using Equation \eqref{Eq:Computation_mkt_vol} and Remark \ref{rem:EstimAveragePchanges} again. %we proceed in the same way described previously; the only difference is that the intensities and stationary measure used in the computation of the volatility are now modified: they   for the computation of the market vol  intensities and stationary measure we compute the intensities ability to reduce the price volatility $\sigma^2 $, see Theorem \ref{Prop:LimitThMrkv}. We place ourselves in the simple case where the spread is equal to $1$ tick and the orders size is equal to $1$ AES\footnote{Average event size.}. To rank the agents according to their impact on the volatility $\sigma^2$, we remove at each step a participant from the market and infer the new market dynamic without him thanks to our reconstitution formula, see Propostion \ref{Prop1:MktReconstit}. We can estimate the price volatility $\sigma^{2,a}$ in this new market using (\ref{Eq:Computation_mkt_vol}). When $\sigma^{2,a} > \sigma^2$ (resp. $\sigma^{2,a} < \sigma^2$), the presence of the agent reduces (resp. increases) the volatility. Thus, he enhances (resp. damages) the price stability and thus the market quality. Then, we conclude that the agent with the highest (resp. smallest) $\sigma^{2,a}$ is the best (resp. worst) one.
\begin{rem} In the simple case where the order book dynamic is Markov and the queues are independent, see Section 2.3.3 in \cite{huang2015simulating}, minimizing the first order approximation of the price volatility $\sigma^2 \sim \Esp_{\pi} [\eta_0^2]$ is similar to selecting the agent with the highest ratio insertion/consumption $\frac{\lambda^{1(2),+}_Q}{\lambda^{1(2),-}_Q}$. This condition is a well-known result which means that the new agent needs to have an insertion/consumption ratio greater than the one of the market. The proof of this result is given in Section \ref{sec:proofRemVol_ratio_cons_insert}.
\label{Rem:Vol_ratio_cons_insert}
\end{rem}
\begin{rem}The reconstruction methodology of the market assumes that other participants will not modify their behaviours when an agent leaves the market. In practice, this assumption is satisfied since agents react to global variables such as the imbalance and not to a specific agent-based information. Additionally, when an agent leaves the market, the other participants do not have enough order flow history to calibrate all the parameters of their models.
\end{rem}
\begin{rem}
The reconstruction methodology of the market takes into account the volume exchanged by each agent since this information is included in the estimated intensities. Indeed, the intensity of an agent who trades a large volume is high because he either interacts frequently with the market or generates significant changes in the order book state.
\end{rem}	

\subsection{Database description.} 

We study four large tick European stocks: Air Liquid, EssilorLuxottica, Michelin and Orange, on Euronext, over a year period: from January 2017 till December 2017. The data under study are provided by the French Regulator Autorit\'e des march\'es financiers. For each of these assets, we have access to the trades and orders data. Using both data, we rebuild the Limit Order book (LOB) up to the first limit of both sides, whenever an event (an order insertion, an order cancellation or an aggressive order) happens on one of these limits. Note that we remove market data corresponding to the first and last hour of trading, as these periods have usually specific features because of the opening/closing auction phases. We present in Table \ref{Pre_stats} some preliminary statistics on the different considered assets. \\

\begin{table}[H]
\resizebox{\linewidth}{!}{%
%\scriptsize
\begin{tabular}{|R{2.8 cm}|C{2. cm}|C{2.5 cm}|C{2.5 cm}|C{2.7 cm}|C{1.8 cm}|}
\hline
Asset & Number of insertion orders (in millions of orders) & Number of cancellation orders (in millions of orders) & Number of aggressive orders (in millions of orders)& Ratio of cancellation orders number over aggressive orders number& Average spread (in ticks)\\
\hline
Air Liquide & 2.36 & 2.40 &  0.21 &11.4& 1.07 \\
\hline 
EssilorLuxottica &  3.90 & 3.96 &  0.34 & 11.6 &1.11 \\
\hline 
Michelin & 3.81 & 4.01 & 0.32 &12.5& 1.14 \\
\hline 
Orange & 6.60 & 6.66 & 0.47 &14.1& 1.14 \\
\hline 
\end{tabular}}
\caption[Table \ref{Pre_stats}]{Preliminary statistics on the assets.}
\label{Pre_stats}
\end{table}

Table \ref{Pre_stats} shows that the number of insertion orders is lower than that of cancellation orders. A priori, this seems contradictory, but what happens in practice is that some agents insert orders that they cancel  partially and progressively at a later stage by sending multiple cancellation orders, which leads to a number of cancellation orders higher than that of insertion orders. \\

The considered market makers, that we aim at ranking, are the Supplemental Liquidity Providers (SLP) members. The SLP programme imposes a market making activity on programme members, including order book presence time at competitive prices. In return, they get favorable pricing and rebates in the form of a maker-taker fees model directly comparable to those of the major competing platforms. This programme includes 9 members. Some of them have at the same time SLP activity and other activities, such like proprietary or agency activity. In our analysis, we only analyse the SLP flow of these members. We denote the market makers by MM1 to MM9. 

\subsection{Computation of the intensities and the stationary measure} 

We compute the liquidity consumption and provision intensities at the first limit relative to the whole market according to the queue size, the corresponding stationary measure and the long term volatility for Air Liquide. Results relative to EssilorLuxottica, Michelin and Orange are relegated to Appendix \ref{sec: appendix stationary measure}. The estimation methodology of the intensities is based on Proposition \ref{Prop:Estimateintens}. To apply this proposition, we record, for every event occurring in the LOB at the best limits (best ask and  bid), the type of this order (insertion or consumption), the waiting time (in number of seconds) between this event and the preceding one occurring at the same limit and the queue size before the event. The queue size is then approximated by the smaller integer that is larger than or equal to the volume available at the queue, divided by the stock average event size (AES) computed for each limit on a daily basis.  In practice, the spread cannot be equal to one tick all the time. This is why we exclude from our analysis all the events that occur when the spread is higher than one tick.

\begin{figure}[H]
\centering
    \hfill (a) Intensity of the market \hfill (b) Stationary measure $Q^1$ \hfill ~\\
        \includegraphics[width=0.42\linewidth]{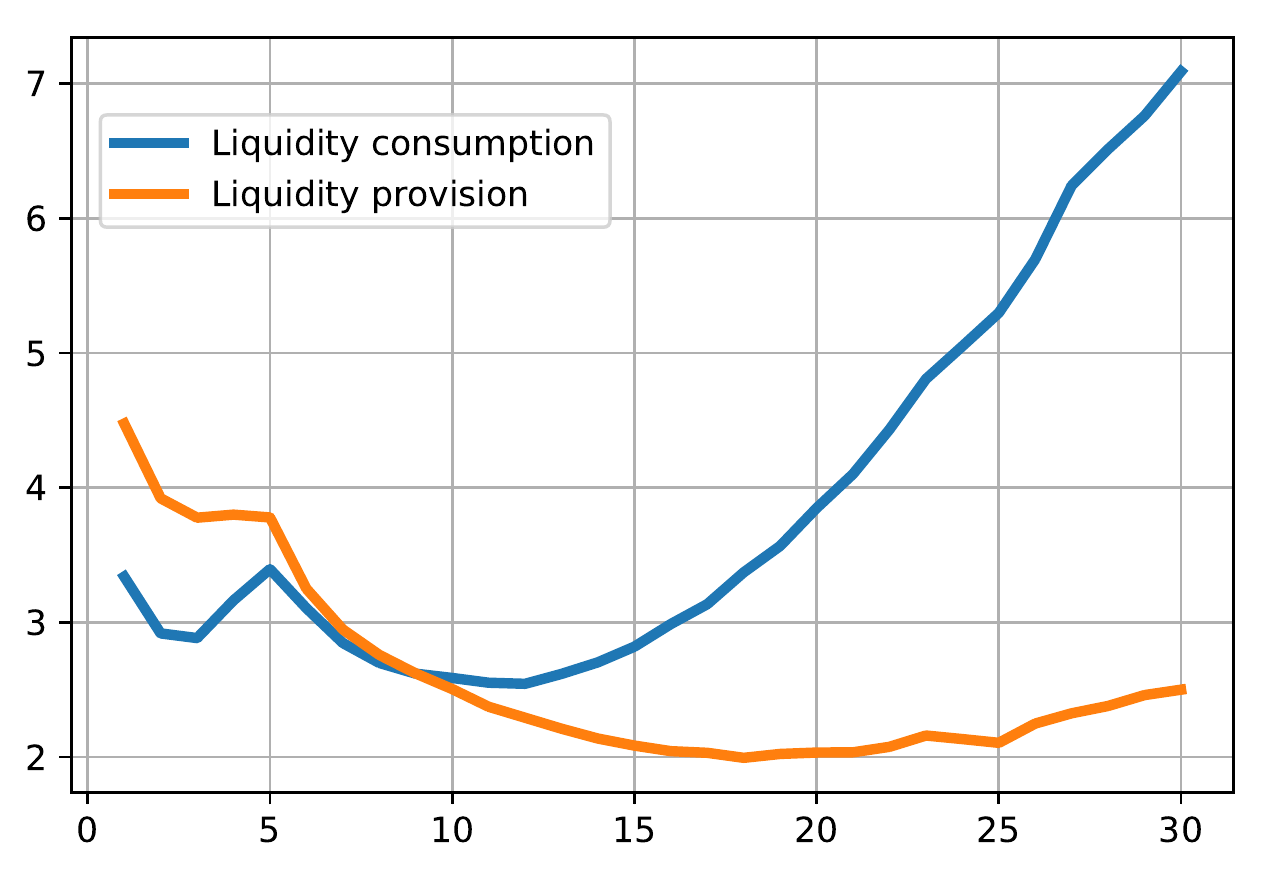}
	    \includegraphics[width=0.47\linewidth]{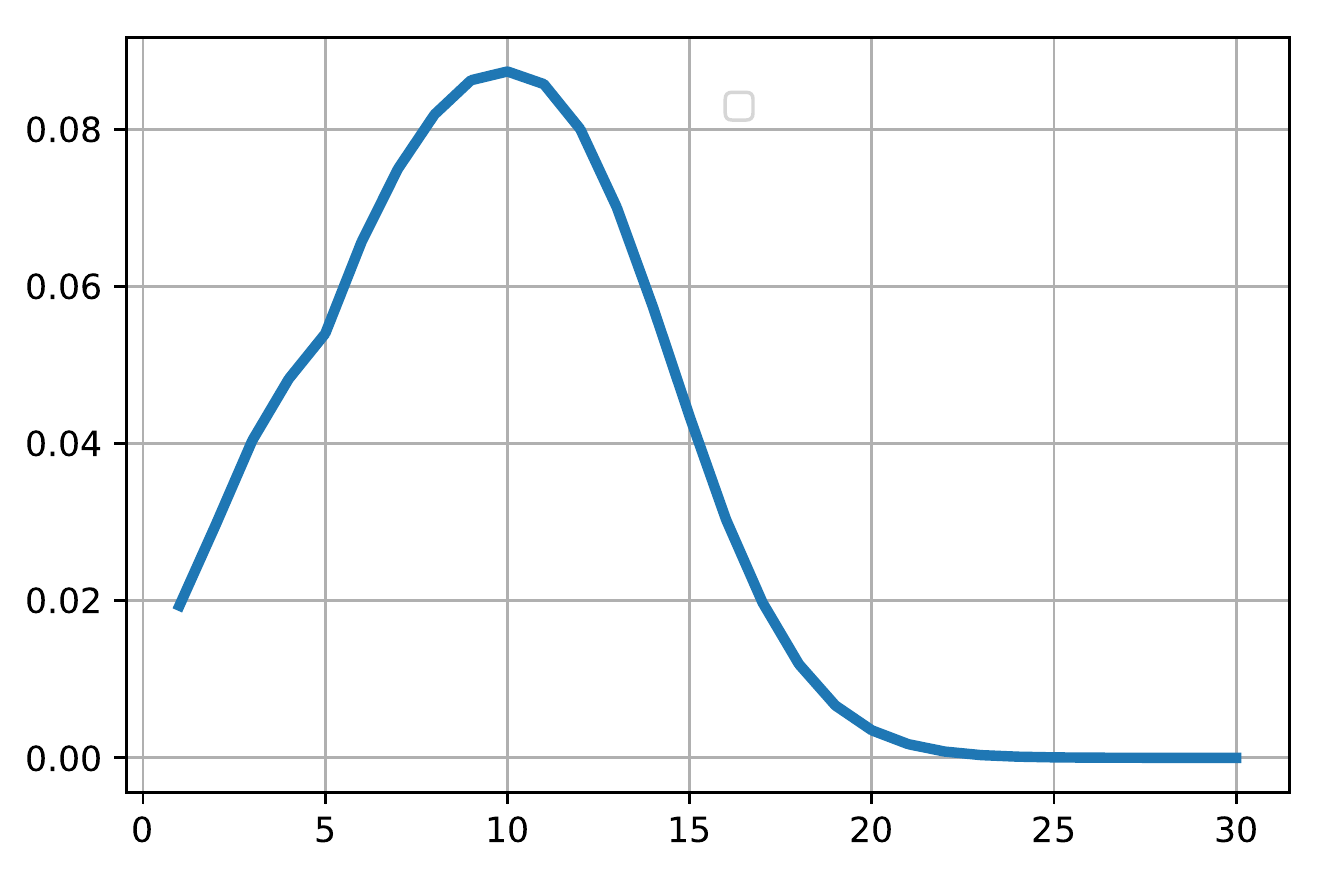}\\
	    \hfill Long term price volatility $\sigma^{2,G} = 0.035$, $\sigma^{2,M}_{10} = 0.227$. \hfill ~\\
	    \caption{(a) Liquidity insertion and consumption intensities (in orders per second) with respect to the queue size (in AES) and (b) the corresponding stationary distribution of $Q^1$ with respect to the queue size (in AES), proper to Air Liquide.}
	     \label{AirLiquide}
\end{figure}

We can see that for all these assets, the liquidity provision  intensity is approximately a decreasing function of the queue size. This result reveals a quite common strategy used in practice: posting orders when the queue is small to seize priority (for further details about the priority value, see \cite{huang2019glosten}). For all assets, the consumption intensity is an increasing function when the queue size is large. For small queue sizes, we notice a slight decrease of this intensity, see Figure \ref{AirLiquide}. Indeed, the increasing aspect corresponding to large queue sizes is explained by market participants waiting for better price when liquidity is abundant. The decreasing aspect associated to small queue sizes is due to aggressive orders sent by agents to get the last remaining quantities available at the first limits: market participants rushing for liquidity when it is rare. The lower the ratio of cancellation orders number over aggressive orders number is, the clearer the decreasing shape for small queue sizes stands out, see Table \ref{Pre_stats} and Figures \ref{AirLiquide}, \ref{Exilor}, \ref{Michelin} and \ref{Orange}.

\subsection{Ranking of the market makers}
For each of the assets and for each one of the market makers, we compute the liquidity consumption and provision intensities, and the corresponding price volatility $\sigma^{2,M}_{10}$ that we would obtain in a situation where the studied market maker withdraws from the market. Since the estimators $\sigma^{2,G}$ and $\sigma^{2,M}_{10}$ give the same ranking, we choose to show the values for $\sigma^{2,M}_{10}$ alone. We show next the results relative to Air Liquide; those of EssilorLuxottica, Michelin and Orange are relegated to Appendix \ref{sec: appendix stationary measure}.

\begin{figure}[H]
\centering
    \hfill Intensities and $\sigma^{2,M}_{10}$ when one market maker leaves the market\hfill ~\\
      \adjustbox{max height=\dimexpr\textheight-.5cm\relax,
           max width=\dimexpr\textwidth-1.cm\relax}{
        \includegraphics[width=1\linewidth]{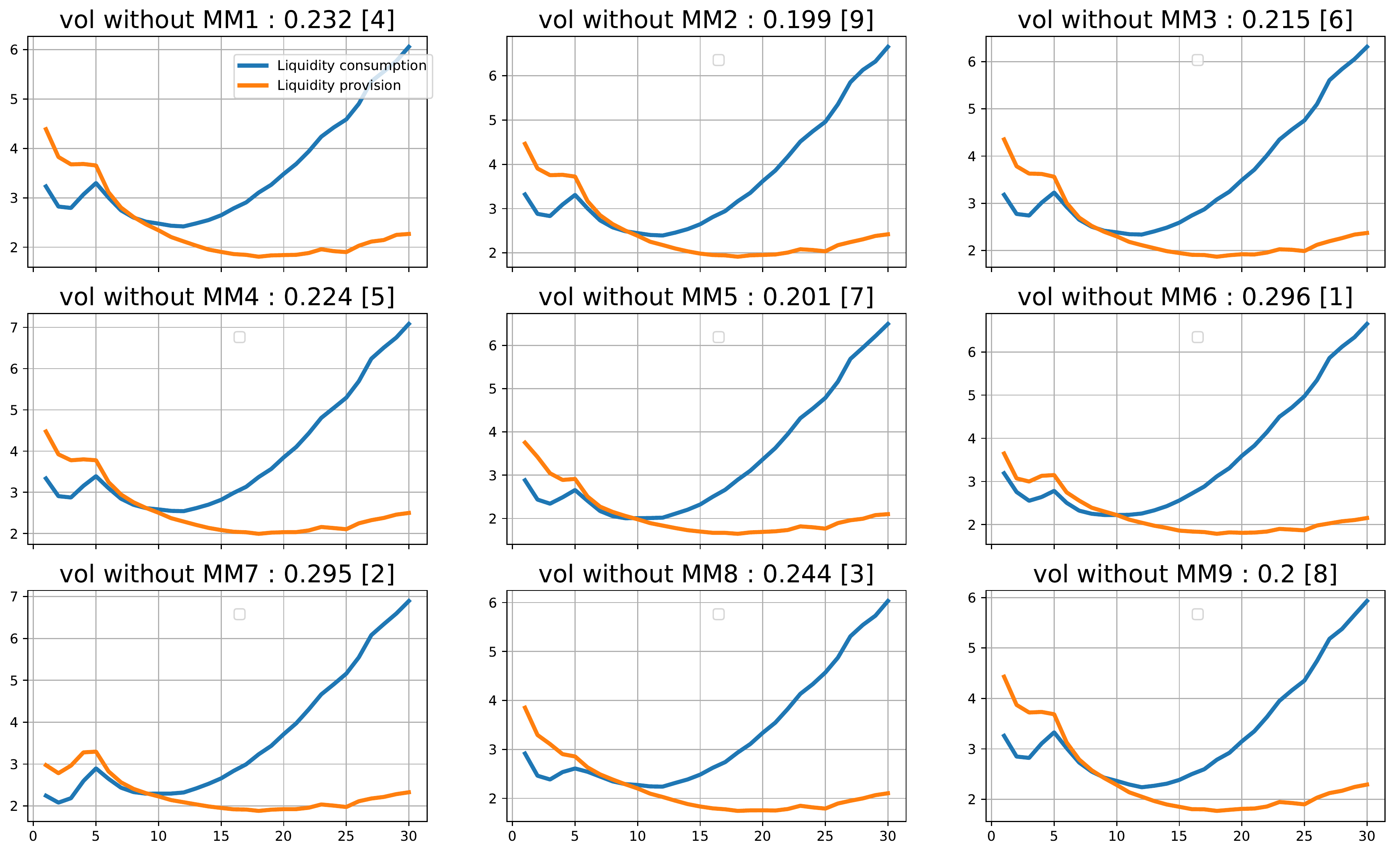}
        }
	    \caption{Liquidity insertion and consumption intensities (in orders per second) with respect to the queue size (in AES) and $\sigma^{2,M}_{10}$ when one market maker is ejected from the market for the stock Air Liquide.}
\end{figure}

Based on the previous results, we carry out for each asset the ranking of the different market makers according to their contribution to volatility. To do so, we compare the expected volatility when removing each market maker from the market to the actual one when all the market makers in the market: if the expected volatility is higher (resp. lower) than the actual one, this means that the market maker into question decreases (resp. increases) market volatility. The market maker who decreases\footnote{The expected volatility of the new market without this market maker is the highest.} (resp. increases\footnote{The expected volatility of the new market without this market maker is the lowest.}) volatility the most is ranked first (resp. last). In the following table, we add a star next to  market makers deceasing volatility: a zero star (resp. a four stars ) means that the market maker increases (resp. decreases) the market volatility of the 4 studied assets. 

\begin{table}[H]
\resizebox{\linewidth}{!}{%
%\scriptsize
\begin{tabular}{|R{2 cm}|C{1.5 cm}|C{1.5 cm}|C{1.5 cm}|C{1.5 cm}|C{1.5 cm}|C{1.5 cm}|C{1.5 cm}|C{1.5 cm}|}
\hline
Market maker & Ranking Air Liquide &Market share Air Liquide& Ranking ExilorLuxottica & Market share ExilorLuxottica & Ranking Michelin &Market share Michelin & Ranking Orange& Market share Orange\\
\hline
MM1*** & 4 & 4\% & 3 & 3\% &  3 & 4\% & 3 & 3\% \\
\hline 
MM2 &  9 & 1\%& 9 & 1\%&  9 & 1\% & 7 & 1\%\\
\hline 
MM3 & 6 & 5\% & 6 &5\%& 7 & 4\%& 5& 4\% \\
\hline 
MM4 & 5 & 1\% & 4 &1\% & 4 & 0\% & 4 & 1\% \\
\hline 
MM5 & 7 & 5\% & 8 & 5\%& 8 & 5\% & 9 & 5\% \\
\hline 
MM6**** & 1 &3\% & 2 &3\% & 1 & 3\% & 1 & 4\% \\
\hline 
MM7**** & 2 & 7\% & 1 & 12\% & 2 & 9\% & 2 & 7\% \\
\hline 
MM8* & 3 & 9\% & 5 & 5\% & 5 & 5\% & 6 & 4\% \\
\hline 
MM9 & 8 & 2\% & 7 & 2\% & 6 & 2\% & 8 & 2\% \\
\hline 
\end{tabular}}
\caption[Table \ref{Ranking}]{Market share and ranking of markets makers }
\label{Ranking}
\end{table}

\clearpage
\appendix 
\section{Market reconstitution}
\label{sec:MktReconstitution}
\begin{proof}[Proof of Proposition \ref{Prop1:MktReconstit}.]Let $t \geq 0$ be the current time. For any $B \in \mathcal{E}$, we denote by $T^{t,e}$ the first time greater than $t$ when an event $e \in B$ happens given $\mathcal{F}_{t}$ and $T^{t,B} = \min_{e \in B} T^{t,e}$ the next market event. Thus, we have 
\begin{align*}
\lambda_t (B) & = \underset{\delta t \rightarrow 0}{\lim} \cfrac{\mathbb{P} \big[ \#\{T_n \in (t,t+\delta t] , X_n \in B \} \geq 1 |  \mathcal{F}_t\big]}{\delta t } \\
				& = \underset{\delta t \rightarrow 0}{\lim} \cfrac{\mathbb{P} \big[ \{T^{t,B} \in (t,t+\delta t]\} |  \mathcal{F}_t\big]}{\delta t }.
%\label{Eq:ReconstituteMkt}
\end{align*}
We write $f^{t,e}$ for the density function of $T^{t,e}$ and $F^{t,e}_B(s) = \mathbb{P}[\big(\min_{\tilde{e} \in B \setminus \{e\}} T^{t,\tilde{e}}\big) \geq s|T^{t,e} \leq s] $ for any $s \geq 0$. Using the monotone convergence theorem, we have 
 \begin{align*}
\underset{\delta t \rightarrow 0}{\lim} \cfrac{\mathbb{P} \big[ \{T^{t,B} \in (t,t+\delta t]\} |  \mathcal{F}_t\big]}{\delta t }  & = \underset{\delta t \rightarrow 0}{\lim} \cfrac{\sum_{e \in B} \int_{t}^{t + \delta t} f^{t,e}(s) F^{t,e}_B(s) \, ds}{\delta t}\\
							   & = \sum_{e \in B} \underset{\delta t \rightarrow 0}{\lim} \cfrac{ \int_{t}^{t + \delta t} f^{t,e}(s) F^{t,e}_B(s) \, ds}{\delta t} \\
							   & = \sum_{e \in B}  f^{t,e}(t) F^{t,e}_B(t) = \sum_{e \in B} \lambda_t(e),
%\label{Eq:ReconstituteMkt}
\end{align*}
since $f^{t,e'}_a(t) = \lambda_t((e',a))$ using Equation (\ref{Eq:IntensityExpressionHawkes}) and $F^{t,e'}_a(t) = 1$ by definition.
 This completes the proof. 
\end{proof}

\section{Proof of Remark \ref{Rem:NegativeCondDrift}}
\label{sec:ProofRemNegativeCondDrift}
\begin{proof}[Proof of Remark \ref{Rem:NegativeCondDrift}]Let $ N =(T_n,X_n)$ be the point process defined in Section \ref{sec:Mktmodel} and $ i \in \mathbb{B} = \{1,2\}$. We define $\phi^{i,\pm,n}_Q$ in the following way:
\begin{align*}
\begin{array}{lcl}
\phi^{i,+,n}_Q & = &\sup_{e \in E^{i\,+}_Q(u,n)}\sum_{T_i< t}\phi(e,U_{t^{-}},n,t-T_i,X_i), \\
\phi^{i,-,n}_Q & = &\inf_{e \in E^{i\,-}_Q(u,n)}\sum_{T_i< t}\phi(e,U_{t^{-}},n,t-T_i,X_i),
\end{array}
\end{align*}
with $U_t = (Q^1_t,Q^2_t,S_t)$. When $Q^i_{t^-} \geq C_{bound}$, using that $\psi$ is non-decreasing in $z$, we have 
\begin{align*}
\sum_{n \geq 0} (z_0^{n} -1) \big(\lambda^{i,+}_{Q}(U_{t^{-}},n) - \lambda^{i,-}_{Q}(U_{t^{-}},n)&\frac{1}{z_0^{n}}\big)  \leq \sum_{n \geq 0} (z_0^{n} -1) \big(\psi^{i,+}_{Q}(U_{t^{-}},n,t,\phi^{i,+,n}_Q) - \lambda^{i,-}_{Q}(U_{t^{-}},n)\frac{1}{z_0^{n}}\big) \\
					  & = \sum_{n \geq 0} (z_0^{n} -1)  \big(\psi^{i,+}_{Q}(U_{t^{-}},n,t,\phi^{i,+,n}_Q) - \psi^{i,-}_{Q}(U_{t^{-}},n,t,\phi^{i,+,n}_Q)\frac{1}{z_0^{n}}\big)\\
					  & + \sum_{n \geq 0} (1 -\frac{1}{z_0^{n}}) \big(\psi^{i,-}_{Q}(U_{t^{-}},n,t,\phi^{i,+,n}_Q) - \lambda^{i,-}_{Q}(U_{t^{-}},n)\big) = (i) + (ii).
\end{align*}
Using Equation (\ref{Eq:NegDriftQtyExample}), we have 
\begin{equation}
(i) = \sum_{n \geq 0} (z_0^{n} -1) \big(\psi^{i,+}_{Q}(U_{t^{-}},n,t,\phi^{i,+,n}_Q) - \psi^{i,-}_{Q}(U_{t^{-}},n,t,\phi^{i,+,n}_Q)\frac{1}{z_0^{n}}\big) \leq - \delta,\quad a.s,
\label{Eq:RemNegDriftProof1}
\end{equation}
when $Q^i_{t^-} \geq C_{bound}$. Moreover, using that $\psi$ is non-decreasing in $z$, we have 
\begin{align*}
(ii) & = \sum_{n \geq 0} (1 -\frac{1}{z_0^{n}}) \sum_{e \in E^{i\,-}_Q(u,n)}\big( \psi(e,U_{t^{-}},n,t,\phi^{i,+,n}) - \psi(e,U_{t^{-}},n,t,\sum_{T_i< t}\phi(e,U_{t^{-}},n,t-T_i,X_i))\big) \\
     & \leq \sum_{n \geq 0} (1 -\frac{1}{z_0^{n}}) \sum_{e \in E^{i\,-}_Q(u,n)}\big( \psi(e,U_{t^{-}},n,t,\phi^{i,+,n}) - \psi(e,U_{t^{-}},n,t,\phi^{i,-,n})\big), \quad a.s,
\label{Eq:RemNegDriftProof2}
\end{align*}
when $Q^i_{t^-} \geq C_{bound}$. Since Equation (\ref{Eq:NegDriftQtyExample}) ensures that $\phi^{i,+,n} \leq \phi^{i,-,n}, \quad a.s$ and $\psi$ is non-decreasing in $z$, we deduce that  
\begin{equation}
(ii) = \sum_{n \geq 0} (1 -\frac{1}{z_0^{n}}) \big(\psi^{i,-}_{Q}(U_{t^{-}},n,t,\phi^{i,+,n}_Q) - \lambda^{i,-}_{Q}(U_{t^{-}},n)\big) \leq 0, \quad a.s,
\label{Eq:RemNegDriftProof2}
\end{equation} 
when $Q^i_{t^-} \geq C_{bound}$. Using Equations (\ref{Eq:RemNegDriftProof1}) and (\ref{Eq:RemNegDriftProof2}), we get
$$
\sum_{n \geq 0} (z_0^{n} -1) \big(\lambda^{i,+}_{Q}(U_{t^{-}},n) - \lambda^{i,-}_{Q}(U_{t^{-}},n)\frac{1}{z_0^{n}}\big) \leq - \delta\quad a.s,
$$
when $Q^i_{t^-} \geq C_{bound}$. By following the same methodology, we also get
$$
\sum_{n \geq 0} (z_0^{n} -1) \big(\lambda^{i,+}_{S}(U_{t^{-}},n) - \lambda^{i,-}_{S}(U_{t^{-}},n)\frac{1}{z_0^{n}}\big) \leq - \delta, \quad a.s,
$$
when $S_{t^-} \geq C_{bound}$. This completes the proof.
\end{proof}

\section{Proof of Theorem \ref{lem:ExistenceProcHks}}
\label{sec:ErgValFctHks}
\subsection{Preliminary results}
For any $k \geq 1$, we denote by $T_{n+1}(e)$, $T^{i\pm}_{Q_{n+1}}(k)$ and $T^{i\pm}_{S_{n+1}}(k)$ respectively the arrival time of the first event $e$, $e^{i\pm}_{Q}(k) \in E^{i,\pm}_{Q}$ and $e^{\pm}_{S}(k) \in E^{\pm}_{S}$ greater than $T_{n}$. The sets $E^{i,\pm}_{Q}$ and $E^{\pm}_{S}$ are defined in Equation \eqref{Eq:SetEvtModifyLOB}. They contain the events that increase or decrease the best bid, best ask and spread by $k$. 
\begin{lem} Let $n\geq 0$ and $i \in \mathbb{B}$. The order book increments satisfy the following formulas:
\begin{align*}
 \mathbb{P}[\Delta Q^{i}_{n+1} = \pm k] & = \Esp\big[\int_{\mathbf{R}_+} \lambda^{i,\pm}_{Q_n}(t,k) Z_{n}(t)\, dt\big], \\
 \mathbb{P}[\Delta S_{n+1} = \pm k] & = \Esp\big[\int_{\mathbf{R}_+} \lambda^{\pm}_{S_n}(t,k) Z_{n}(t)\, dt\big],\\
\end{align*}
with $\Delta Q^{i}_{n+1} = Q^{i}_{n+1} - Q^{i}_{n}$, $\Delta S_{n+1} = S_{n+1} - S_{n} $ and 
\begin{align*}
Z_n(t) = e^{-[\sum_{e}\int_{0}^{t}\lambda_n(e,s + T_n)\, ds]}, & \qquad \lambda^{i,\pm}_{Q_n}(t,k) = \sum_{e \in E^{i\pm}_{Q}(k)} \lambda_n(e,t + T_n), \\
\lambda^{\pm}_{S_n}(t,k) = \sum_{e \in E^{i\pm}_{S}(k)} \lambda_n(e,t + T_n), & \qquad \lambda_n(e,t) = \psi\big(e,U_{T_n},t, \sum_{T_i \leq T_n} \phi(e,U_{T_n},t-T_i,X_i) \big), \quad \forall t \geq 0.
\end{align*}
\label{Lem:ComputeDeltaU}
\end{lem}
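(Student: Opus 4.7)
\textbf{Plan for the proof of Lemma \ref{Lem:ComputeDeltaU}.} The strategy is to condition on $\mathcal{F}_{T_n}$ and exploit the fact that, between two consecutive jump times $T_n$ and $T_{n+1}$, the conditional intensity $\lambda_t(e)$ becomes a deterministic (history-measurable) function of $t$. Indeed, for $t\in(T_n,T_{n+1}]$ one has $U_{t^-}=U_{T_n}$ and $\{T_i<t\}=\{T_i\leq T_n\}$, so that
\[
\lambda_t(e)=\psi\bigl(e,U_{T_n},t,\textstyle\sum_{T_i\leq T_n}\phi(e,U_{T_n},t-T_i,X_i)\bigr)=\lambda_n(e,t),\quad t\in(T_n,T_{n+1}].
\]
This reduces the analysis on the interval $(T_n,T_{n+1}]$ to a marked point process with a prescribed (deterministic, given $\mathcal{F}_{T_n}$) intensity, for which the joint law of the next jump time and its mark is classical.

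Concretely, using Proposition \ref{Prop1:IntensityExpressionAll} together with the local integrability of $\sum_{e\in E}\lambda_t(e)$ established in the proof of Theorem \ref{lem:ExistenceProcHks}, I would argue that, conditionally on $\mathcal{F}_{T_n}$, the first post-$T_n$ jump of type $e$ is an inhomogeneous Poisson arrival with rate $\lambda_n(e,\cdot)$ thinned by the competing events. The standard competing-intensity identity then yields
\[
\mathbb{P}\bigl[T_{n+1}-T_n\in(s,s+ds),\,X_{n+1}=e\,\bigm|\,\mathcal{F}_{T_n}\bigr]=\lambda_n(e,s+T_n)\,Z_n(s)\,ds,
\]
where $Z_n(s)=\exp\bigl(-\int_0^s\sum_{e'}\lambda_n(e',u+T_n)\,du\bigr)$ is the probability of no jump in $(T_n,T_n+s]$ given $\mathcal{F}_{T_n}$. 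This identity can either be cited from the general theory of marked point processes (e.g.\ via the construction in Remark \ref{Rem:Rem10}), or derived directly by writing $\mathbb{P}[T_{n+1}-T_n>s\mid\mathcal{F}_{T_n}]=Z_n(s)$ through a Gronwall-type differential equation for the survival function.

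Given this density, the relation $\Delta Q^i_{n+1}=\pm k$ is equivalent to $X_{n+1}\in E^{i,\pm}_Q(U_{T_n},k)$, and similarly $\Delta S_{n+1}=\pm k$ corresponds to $X_{n+1}\in E^{\pm}_S(U_{T_n},k)$. Summing the conditional joint density over the appropriate event-type subset and integrating in $s$ gives
\[
\mathbb{P}\bigl[\Delta Q^i_{n+1}=\pm k\,\bigm|\,\mathcal{F}_{T_n}\bigr]=\int_{\mathbb{R}_+}\lambda^{i,\pm}_{Q_n}(t,k)\,Z_n(t)\,dt,
\]
and the analogous formula for the spread. Taking expectations over $\mathcal{F}_{T_n}$ concludes the proof.

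\textbf{Main obstacle.} The real work is the rigorous derivation of the conditional joint law of $(T_{n+1}-T_n,X_{n+1})$ given $\mathcal{F}_{T_n}$. Everything else is algebra. The subtlety is that one must justify the interchange $\{T_i<t\}=\{T_i\leq T_n\}$ on $(T_n,T_{n+1}]$ and the Fubini-type swap between summation over $e\in E^{i,\pm}_Q(k)$ and the time integral; both rely on local integrability of $\sum_{e\in E}\lambda_t(e)$, which is exactly what is established in Appendix \ref{sec:ErgValFctHks}. Once that is in hand, the identity is essentially the standard hazard-rate representation of the first jump of a marked point process.
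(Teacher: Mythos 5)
Your proposal is correct and follows essentially the same route as the paper: condition at $T_n$, observe that the intensity on $(T_n,T_{n+1}]$ freezes to $\lambda_n(e,\cdot)$, and use the competing-intensity (hazard-rate) representation of the joint law of the next jump time and mark, yielding the density $\lambda_n(e,t)Z_n(t)$ before summing over $E^{i,\pm}_Q(k)$ and taking expectations. The paper merely phrases this as a race of independent non-homogeneous exponential clocks from the thinning construction of Remark \ref{Rem:Rem10}, whose product of survival functions is exactly your $Z_n(t)$.
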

\begin{proof}[Proof of Lemma \ref{Lem:ComputeDeltaU}] We write $\Delta T_{n+1}(e) = T_{n+1}(e) - T_{n}$ for any event $e \in E$ and $\Delta T^{i\pm}_{Q_{n+1}}(k) = T^{i\pm}_{Q_{n+1}}(k) - T_{n}$. Using Remark \ref{Rem:Rem10}, the increments $(\Delta T_{n+1})_{n \geq 0}$ are independent given $\mathcal{F}_{n}$ and $\Delta T_{n+1}(e)|\mathcal{F}_{n}$ follows a non homogeneous exponential distribution with an intensity $\lambda_{n}(e,.)$. Thus, we have
\begin{align*}
 \mathbb{P}[\Delta Q^{i}_{n+1} = \pm k] & =  \Esp[\mathbb{P}[\Delta T^{i\pm}_{Q_{n+1}}(k) < \Delta T_{n+1}(e), \quad \forall e \notin E^{i\pm}_Q(k) | \mathcal{F}_{n}]] \\
 						 & = \Esp\big[\int_{\mathbf{R}_+} \lambda^{i,\pm}_{Q_n}(t,k) \,e^{-\int_{0}^{t}\lambda^{i,\pm}_{Q_n}(s,k)\,ds}\, dt \prod_{e \notin E^{i\pm}_Q(k)} \big(\int_{\mathbb{R}_+}\mathbf{1}_{t < t_{e}}\lambda_n(e,t_e) e^{- \int_{0}^{t_e}\lambda_n(e,s)\, ds}dt_{e}\big)\big]\\
 						 & = \Esp\big[\int_{\mathbf{R}_+} \lambda^{i,\pm}_{Q_n}(t,k) e^{-[\sum_{e}\int_{0}^{t}\lambda_n(e,s)\, ds]}\, dt\big] = \Esp\big[\int_{\mathbf{R}_+} \lambda^{i,\pm}_{Q_n}(t,k) Z_{n}(t)\, dt\big]. \numberthis \label{Ineq:ProbaEvent}
\end{align*}
By following the same methodology used in Equation \eqref{Ineq:ProbaEvent}, we get
\begin{align*}
\mathbb{P}[\Delta S_{n+1} = \pm k] & = \Esp\big[\int_{\mathbf{R}_+} \lambda^{\pm}_{S_n}(t,k) Z_{n}(t)\, dt\big],
\end{align*}
which completes the proof.
\end{proof}
Let $ \tau_{\mathcal{O}}$ be the first entrance period of $N^i = (T_{i+j},X_{i+j})_{j \leq 0}$ to the set $\mathcal{O} \in \mathcal{W}_0$, $C_{bound}$ defined in Assumption \ref{Assump:Negativeindividualdrift} and  $1 < z \leq \min(z_0,z_1)$ with $z_0$ and $z_1$ are respectively defined in Assumptions \ref{Assump:Negativeindividualdrift} and \ref{Assump:Boundedness}.
\begin{lem}[Drift condition] Under Assumptions \ref{Assump:Negativeindividualdrift} and \ref{Assump:Boundedness}, the process $U_n = (Q^1_n,Q^2_n,S_n)$ satisfies the following drift condition:
\begin{align*}
\Esp\big[ z^{Q^i_{n+1} - C_{bound}}\mathbf{1}_{\tau_{\mathcal{O}} \geq n+1}\big] &\leq \lambda \Esp\big[ z^{Q^i_{n}- C_{bound}}\mathbf{1}_{\tau_{\mathcal{O}} \geq n+1}\big] + B \Esp\big[ \mathbf{1}_{\tau_{\mathcal{O}} \geq n+1}\big],\\
\Esp\big[ z^{S_{n+1}- C_{bound}}\mathbf{1}_{\tau_{\mathcal{O}} \geq n+1}\big] & \leq \lambda \Esp\big[ z^{S_{n}- C_{bound}}\mathbf{1}_{\tau_{\mathcal{O}} \geq n+1}\big] + B \Esp\big[ \mathbf{1}_{\tau_{\mathcal{O}} \geq n+1}\big],&  \quad \forall n \in \mathbb{N}, \forall i \in \mathbb{B},
\end{align*}
with $\lambda < 1$ and $B$ two constants.
\label{Lem:negDriftCondition}
\end{lem}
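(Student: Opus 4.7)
The plan is to estimate the one-step discrete drift of $V(Q^i) := z^{Q^i - C_{bound}}$ by conditioning on $\mathcal{F}_{T_n}$ and applying Lemma \ref{Lem:ComputeDeltaU} together with the two assumptions, split around the threshold $C_{bound}$. Since $\{\tau_{\mathcal{O}} \geq n+1\}$ and $Q^i_n$ are both $\mathcal{F}_{T_n}$-measurable, the tower property gives
\begin{align*}
\Esp\bigl[z^{Q^i_{n+1} - C_{bound}}\mathbf{1}_{\tau_{\mathcal{O}} \geq n+1}\bigr]
= \Esp\bigl[\mathbf{1}_{\tau_{\mathcal{O}} \geq n+1}\,z^{Q^i_n - C_{bound}}\,\beta_n\bigr],
\qquad \beta_n := \Esp[z^{\Delta Q^i_{n+1}} \mid \mathcal{F}_{T_n}].
\end{align*}
A conditional refinement of Lemma \ref{Lem:ComputeDeltaU} (its proof carries over verbatim with probabilities taken conditionally on $\mathcal{F}_{T_n}$), combined with the identity $\int_0^\infty L_n(t) Z_n(t)\,dt = 1$ where $L_n = \sum_e \lambda_n(e, \cdot)$, yields
\begin{align*}
\beta_n - 1 = \int_0^\infty \Bigl\{\sum_{k \geq 1}(z^k - 1)\lambda^{i,+}_{Q_n}(t, k) + \sum_{k \geq 1}(z^{-k} - 1)\lambda^{i,-}_{Q_n}(t, k)\Bigr\} Z_n(t)\,dt.
\end{align*}

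Next I would split according to $\{Q^i_n \geq C_{bound}\}$ versus $\{Q^i_n < C_{bound}\}$. On the first event, Assumption \ref{Assump:Negativeindividualdrift} applied pathwise at $U_{T_n}$ (valid since $z \leq z_0$) bounds the inner bracket above by $-\delta$, so $\beta_n \leq 1 - \delta \int_0^\infty Z_n(t)\,dt \leq 1$. On the second event, Assumption \ref{Assump:Boundedness} (with $z \leq z_1$) gives $\beta_n \leq 1 + M'$ for a finite $M'$, and since $z > 1$ and $Q^i_n - C_{bound} < 0$ we have $z^{Q^i_n - C_{bound}} \leq 1$, so this regime contributes only to the additive constant term $B\,\Esp[\mathbf{1}_{\tau_{\mathcal{O}} \geq n+1}]$. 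Summing the two contributions yields the required inequality, and the identical argument with $\lambda^{\pm}_S$ replacing $\lambda^{i,\pm}_Q$ produces the companion statement for $S_{n+1}$.

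The main technical obstacle is upgrading the pathwise bound $\beta_n \leq 1 - \delta \int_0^\infty Z_n(t)\,dt$ on $\{Q^i_n \geq C_{bound}\}$ into a deterministic multiplicative factor $\lambda < 1$. This requires a uniform positive lower bound on $\Esp[\Delta T_{n+1} \mid \mathcal{F}_{T_n}] = \int_0^\infty Z_n(t)\,dt$, equivalently a uniform upper bound on $L_n$. Since $\psi$ is allowed polynomial growth in its Hawkes argument, such a bound is not pathwise automatic; I expect to extract it by combining Assumption \ref{Assump:psigrowth} with the a priori $L^1$ control of the Hawkes excitation suggested by the stationarity heuristic following that assumption, or alternatively by restricting through the stopping time $\tau_{\mathcal{O}}$ so that the intensity remains controlled on $\{\tau_{\mathcal{O}} \geq n+1\}$, which restores the required lower bound on the expected inter-arrival time and hence the strict contraction.
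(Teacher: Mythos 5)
Your setup follows the paper's own route almost line for line: the one-step decomposition of $\Esp\big[z^{Q^i_{n+1}-C_{bound}}\mathbf{1}_{\tau_{\mathcal{O}}\geq n+1}\big]$ through (a conditional form of) Lemma \ref{Lem:ComputeDeltaU}, the identity writing $\beta_n-1$ as $\int_0^\infty\big\{\sum_{k\geq 1}(z^k-1)\lambda^{i,+}_{Q_n}(t,k)+(z^{-k}-1)\lambda^{i,-}_{Q_n}(t,k)\big\}Z_n(t)\,dt$, and the split of this term at $C_{bound}$, with Assumption \ref{Assump:Negativeindividualdrift} on the large-queue region and Assumption \ref{Assump:Boundedness} on the small-queue region, is exactly the paper's $T_1+T_2$ decomposition. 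The problem is that you stop at $\beta_n\leq 1-\delta\int_0^\infty Z_n(t)\,dt\leq 1$ on $\{Q^i_n\geq C_{bound}\}$ and defer the promotion of this to a deterministic factor $\lambda<1$, offering only two tentative remedies. That deferred step is the whole content of the lemma: for large $Q^i_n$ the additive constant $B$ cannot absorb the unbounded weight $z^{Q^i_n-C_{bound}}$, so without a strict contraction there the stated inequality simply does not follow. Moreover, neither of your suggested fixes supplies it. The quantity $\int_0^\infty Z_n(t)\,dt$ is $\mathcal{F}_{T_n}$-measurable, so what is needed is an almost sure (or at least conditional, on $\{Q^i_n\geq C_{bound},\ \tau_{\mathcal{O}}\geq n+1\}$) lower bound on it; the stationarity heuristic after Assumption \ref{Assump:psigrowth} only controls the unconditional first moment of the intensity and says nothing about the joint behaviour of a large $z^{Q^i_n}$ with a small $\int_0^\infty Z_n(t)\,dt$, while conditioning on $\{\tau_{\mathcal{O}}\geq n+1\}$ does not cap the Hawkes excitation either, since the history prior to the excursion still feeds $\psi$, which may grow polynomially.

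For comparison, the paper disposes of this point in one line: it invokes the uniform bound $\sum_{e\in E}\lambda_t(e)\geq\underline{\psi}$ of Assumption \ref{Assump:Boundedness}, writes $Z_n(t)\leq e^{-\underline{\psi}t}$, and concludes $T_2\leq-\frac{\delta}{\underline{\psi}}\,\tilde{\Esp}\big[z^{Q^i_n-C_{bound}}\mathbf{1}_{Q^i_n>C_{bound}}\big]$, while the small-queue region gives $T_1\leq-c^1\tilde{\Esp}\big[z^{Q^i_n-C_{bound}}\mathbf{1}_{Q^i_n\leq C_{bound}}\big]+d^1$; this yields the contraction $\lambda=1-\min(c^1,\delta/\underline{\psi})$ and $B=d^1$. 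So the paper pairs $\delta$ with $\underline{\psi}$ rather than with the upper intensity bound you were looking for — you have in fact put your finger on the tersest step of the published argument, since bounding the negative term really calls for control of $\int_0^\infty Z_n(t)\,dt$ from below, i.e.\ of the aggregate intensity from above. But as a proof attempt, announcing that you ``expect to extract'' the missing bound is a gap, not a proof: to complete your argument you must either justify such a contraction constant along the paper's lines or impose/derive an explicit bound making $\int_0^\infty Z_n(t)\,dt$ uniformly positive on the relevant event.
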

\begin{rem} We define
\begin{equation}
V_{C_{bound}}(u) = \sum_{i\in \{1,2\}} z^{q^i-C_{bound}} + z^{s-C_{bound}}, \quad \forall u \in \mathbb{U}.
\label{Eq:V}
\end{equation}
Using Lemma \ref{Lem:negDriftCondition}, we deduce that 
\begin{equation*}
\Esp\big[ V_{C_{bound}}(U_{n+1})\mathbf{1}_{\tau_{\mathcal{O}} \geq n+1}\big] \leq \lambda \Esp\big[ V_{C_{bound}}(U_{n})\mathbf{1}_{\tau_{\mathcal{O}} \geq n+1}\big] + 3B \Esp\big[ \mathbf{1}_{\tau_{\mathcal{O}} \geq n+1}\big], \quad \forall n \in \mathbb{N},
\end{equation*} 
\label{rem:NegDriftV_0}
\end{rem}
\begin{proof}[Proof of Lemma \ref{Lem:negDriftCondition}]
We write $ \tilde{\Esp}\big[X\big] = \Esp\big[X\mathbf{1}_{\tau_{\mathcal{O}} \geq n+1}\big]$  for any random variable $X$ to simplify the notations and $V$ instead of $V_{C_{bound}} $ since there is no possible confusion. We have
\begin{align*}
\tilde{\Esp}\big[z^{Q^i_{n+1}}| \mathcal{F}_n\big] & =\tilde{\Esp}\big[z^{Q^i_{n}}| \mathcal{F}_n\big] + \sum_{u'\ne U_n} \tilde{\mathbb{P}}[Q^i_{n+1} = q'| \mathcal{F}_n]\left[ z^{q'} - z^{Q^i_{n}} \right].
\end{align*}
Using Lemma \ref{Lem:ComputeDeltaU}, we get
\begin{align*}
\mathbb{P}[\Delta Q^{i}_{n+1} = \pm k] & = \Esp\big[\int_{\mathbf{R}_+} \lambda^{i,\pm}_{Q_n}(t,k) Z_{n}(t)\, dt\big],
\end{align*}
which leads to 
\begin{align*}
\tilde{\Esp}\big[z^{Q^i_{n+1}}\big] & =\tilde{\Esp}\big[z^{Q^i_{n}}\big] + \tilde{\Esp}\left[\int_{\mathbb{R}_+} Z_n(t) \left\{\sum_{k \geq 1}\lambda^{i,+}_{Q_n}(t,k)\left[ z^{Q^i_{n}+k} - z^{Q^i_{n}} \right] + \sum_{k \geq 1}\lambda^{i,-}_{Q_n}(t,k)\left[ z^{Q^i_{n}-k} - z^{Q^i_{n}} \right]  \right\}\, dt\right], \\
      	   	     &  = \tilde{\Esp}\big[z^{Q^i_{n}}\big]  +\tilde{\Esp}\big[\int_{\mathbb{R}_+} Z_n(t) \left\{ \mathcal{Q}_{u}(t,U_{n})\right\}\, dt\big] , \numberthis \label{Eq:DecompV_0}
\end{align*}
with $\mathcal{Q}_{u}(t,U_{n}) = \sum_{k \geq 1}\lambda^{i,+}_{Q_n}(t,k)\left[ z^{Q^i_{n}+k} - z^{Q^i_{n}} \right] + \sum_{k \geq 1}\lambda^{i,-}_{Q_n}(t,k)\left[ z^{Q^i_{n}-k} - z^{Q^i_{n}} \right] $. By rearranging the above terms, we get
\begin{align*}
\mathcal{Q}_u (t,U_{n}) & = z^{Q^i_n - C_{bound}} \sum_{1\leq k } (z^k -1) \left[\lambda^{i,+}_{Q_n}(t,k)  - \lambda^{i,-}_{Q_n}(t,k) \frac{1}{z^k} \right] .
\end{align*}
We write $\tilde{\Esp}\big[\int_{\mathbb{R}_+} Z_n(t) \left\{ \mathcal{Q}_{u}(t,U_{n})\right\}\, dt\big] = T_1 + T_2 $ with 
\begin{align*}
T_1 & = \tilde{\Esp}\big[\int_{\mathbb{R}_+} Z_n(t) \mathbf{1}_{Q^i_n \leq C^{bound}}\left\{ \mathcal{Q}_{u}(t,U_{n})\right\}\, dt\big], \\
T_2 & = \tilde{\Esp}\big[\int_{\mathbb{R}_+} Z_n(t) \mathbf{1}_{Q^i_n > C^{bound}}\left\{ \mathcal{Q}_{u}(t,U_{n})\right\}\, dt\big]. 
\end{align*}
We first handle the term $T_1$. When $Q^i_n \leq C^{bound}$, the quantity $z^{Q^i_n  - C_{bound}}<1$ is bounded. Additionally, we have $\sum_{e \in E} \lambda_n(e,s + T_n) \geq \underline{\psi} > 0$ under Assumption \ref{Assump:Boundedness}. This ensures that $Z_n(t) \leq e^{- \psi t}$, $a.s$. Thus, there exist $c^1> 0$ and $d^1>0$ such that
\begin{align*}
T_1  \leq \int_{\mathbb{R}_+} e^{- \psi t} \tilde{\Esp}\big[\left\{ \mathcal{Q}_{u}(t,U_{n})\right\}\, dt\big] \leq  -c^1\tilde{\Esp}\left[z^{Q^i_n  - C_{bound}}\mathbf{1}_{Q^i_n \leq C^{bound}}\right]  + d^1.
\numberthis \label{Eq:QVM}
\end{align*} 
In the last inequality we used Assumption \ref{Assump:Boundedness} again. For the term $T_2$, we use Assumption \ref{Assump:Negativeindividualdrift} and $Z_n(t) \leq e^{- \psi t}$, $a.s$, to deduce that 
\begin{equation}
T_2  \leq - \cfrac{\delta}{\psi}  \tilde{\Esp}\left[z^{Q^i_n - C_{bound}}\mathbf{1}_{Q^i_n > C^{bound}}\right].
\label{Eq:QVMC}
\end{equation}
By combining Inequalities (\ref{Eq:QVM}) and (\ref{Eq:QVMC}), we have
\begin{equation*}
\tilde{\Esp}\big[\int_{\mathbb{R}_+} Z_n(t) \left\{ \mathcal{Q}_{u}(t,U_{n})\right\}\, dt\big] \leq - c\tilde{\Esp}\big[z^{Q^i_n - C_{bound}} \big] +\tilde{\Esp}\big[d\big],
\end{equation*}
with $c = \min(c^1,\frac{\delta}{\psi})$ and $d = d^1$ which proves the first inequality of Lemma \ref{Lem:negDriftCondition}. By following the same steps, we also prove the second inequality. This completes the proof.
\end{proof}

\subsection{Outline of the proof}
To prove the existence of an invariant distribution, we first construct $N$ as a limiting process of the sequence $N^m$ defined in Remark \ref{Rem:Rem10}. This construction is based on the thinning algorithm. After that, we show, in Steps (ii) and (iii), that $N$ is well defined. Then, we introduce the process $\bar{U}^{\infty} = \esssup_{t \geq 0} \bar{U}_t$ which dominates $\bar{U}_t$ and prove that is does not explode in Step (iv). This ensures the tightness of the family $ \cup_{t \geq 0}\bar{U}_t$. Additionally, the process $\bar{U}$ satisfies the Feller property since $E$ is a countable space and $\Esp[\|\bar{U}_t\|]$ is uniformly bounded. Thus, we deduce that $\bar{U}$ admits an invariant distribution and complete the proof.
\subsection{Proof}
\begin{proof}[Proof of Theorem \ref{lem:ExistenceProcHks}] Let us take $N^*$ and $U^*$ the processes  described in Remark \ref{Rem:Rem10} with $\nu = \sum_{e \in E} \delta_{e}$. For clarity, we forget the dependence of $\Esp_{\mathbf{x}}[.] $ on the initial condition $\mathbf{x} \in W_0$. 
\paragraph{Step (i):} In this step, we prove that the process $N$, defined by Equation \eqref{Eq:IntensityExpressionAll}, exists as a limiting process of the sequence $N^m$. To do so, we first introduce some notations. We define recursively the processes $\lambda^m$ and $N^m$ as in Remark \ref{Rem:Rem10}. Note that $U^{m} = ({Q^{m}}^1, {Q^{m}}^2,S^{m})$ can be decomposed in the following way:
\begin{equation}
\begin{array}{ll}
Q^{m\,i}_t = Q^{m\,i,+}_t - Q^{m\,i,-}_t, & {S^{m}_t} =  {S^{m}_t}^{+} - {S^{m}_t}^{-},\\
\end{array}
\label{Eq:LOBdecomp}
\end{equation}
with 
\begin{equation*}
\begin{array}{ll}
Q^{m\,i,+}_t = \sum_{T^{m} < t} \Delta Q^{m\,i}_t \mathbf{1}_{\Delta Q^{m\,i}_t > 0} , & Q^{m\,i,-}_t = \sum_{T^{m} < t} \Delta Q^{m\,i}_t \mathbf{1}_{\Delta Q^{m\,i}_t < 0},\\
\\
{S^{m}}^{+}_t = \sum_{T^{m} < t} \Delta {S^{m}}_t \mathbf{1}_{\Delta {S^{m}}_t > 0} , & {S^{m}}^{-}_t = \sum_{T^{m} < t} \Delta {S^{m}}_t \mathbf{1}_{\Delta {S^{m}}_t < 0}, \\
\end{array}
%\label{Eq:LOBdecomp}
\end{equation*} 
with $i \in \mathbb{B} $ and $\Delta Z_t = Z_t - Z_{t^-}$ for any process $Z$. For all $\omega \in \Omega$, each one of the processes $N^m$, $\lambda^{m}$, $Q^{m\, i,\pm}$ and $S^{m\,\pm}$ is non decreasing with $m$ by induction. Hence, they admit limiting processes $N$, $\lambda$, ${Q^{1(2),\pm}}$ and ${S^{\pm}}$. This implies that $U^m $ converges towards $U$. To ensure that $N$ admits $\lambda$ as an intensity, we need to prove that $\sum_{e \in E}\lambda_t(e)$ and $U$ are both finite $a.s$, see Steps (ii)-(iii).  
\paragraph{Step (ii):} In this step, we prove by induction on $m$ that $\sup_{t} \Esp [\sum_{e \in E}\lambda^{m}_t(e)] $ is uniformly bounded which ensures that $\sup_{t} \Esp [\sum_{e \in E} \lambda_t(e)] $ is finite and that  $\sum_{e \in E} \lambda_t(e)$ does not explode. We write $\lambda^{m}_n(e,t) = \lambda^{m}_t(e)\mathbf{1}_{T^m_n < t \leq T^m_{n+1}}$. For $m=0$, we have $\Esp [ \lambda^{m}_n(t,e)] = 0$ since $\lambda^{m}_t(e)=0$ for any $t \geq 0$. We have by construction
\begin{equation*}
\begin{array}{ll}
\Esp [ \lambda^{m+1}_n(e,t)] = \Esp [ \lambda^{m}_n(e,t)], & \text{ when } n \leq m,\\
\Esp [ \lambda^{m+1}_{n}(e,t)] = 0, & \text{ when } n > m+1.
\end{array}
\end{equation*}
for any $t \geq 0$. Thus, we only need to study the case $n = m+1 $. Using Remark \ref{Rem:Rem10} and Assumption \ref{Assump:psigrowth}, we have
\begin{align*}
\sup_t\Esp [ \lambda^{m+1}_{m+1}(e,t)] & \leq c(e) + d(e) \sup_{t}\Esp [ \big(\sum_{T^m_i < t} \bar{\phi}(e,t- T^{m}_i,X^m_i)\big)^{n_{\psi}}] \\
						& = c(e) + d(e) \sup_t \sum_{\{\mathbf{k}_m\} \in \mathcal{P}(n_{\psi})} {{n_{\psi}}\choose{\mathbf{k}_k}} \sum_{\mathbf{x} \in E^k}\int_{(- \infty,t)^k} \prod_{i=1}^k\bar{\phi}^{k_i}(e,t-s_i ,x_i)\Esp [dN^m_{s_1}\,\ldots \,dN^m_{s_k}],
%\numberthis %\label{Eq:BornIneq}
\end{align*}
with  $\bar{\phi}(e,t,x) = \sup_{u \in \mathbb{U}}\phi(e,u,t,x) $ and ${{n_{\psi}}\choose{\mathbf{k}_k}} = \frac{n_{\psi}!}{k_1!\ldots,k_k!} $. Using the above equation and the Brascamp-Lieb inequality, we have
\begin{align*}
\sup_{t }\Esp [ \lambda^{m+1}_{m+1}(e,t)] & \leq  c(e) + d(e) \sup_t \sum_{\mathbf{k}_k \in \mathcal{P}(n_{\psi})} {{n_{\psi}}\choose{\mathbf{k}_k}} \sum_{\mathbf{x} \in E^k}\int_{(- \infty,t)^k} \prod_{i=1}^k\bar{\phi}^{k_i}(e,t-s_i ,x_i)(\sup_{t,n}\Esp [\lambda^{m}_{n}(x_i,t)])^{1/k}\,ds_i, \\
						  & =  c(e) + \bar{\lambda}^m d(e) \sum_{\mathbf{k}_k \in \mathcal{P}(n_{\psi})} {{n_{\psi}}\choose{\mathbf{k}_k}} \sum_{\mathbf{x}  \in E^k}\int_{\mathbb{R}_+^k} \prod_{i=1}^k\bar{\phi}^{k_i}(e,s_i ,x_i)\,ds_i,\\
						  & \leq c(e) + q\bar{\lambda}^m,
\numberthis \label{Eq:BornIneq}
\end{align*}
with $\bar{\lambda}^m = \sup_{t,e,n} \Esp [ \lambda^{m}_n(e,t)] $ and $
q =  \sup_{e}\{d(e)\sum_{\mathbf{k}_k \in \mathcal{P}(n_{\psi})} {{n_{\psi}}\choose{\mathbf{k}_k}} \int_{\mathbb{R}_+^k} \prod_{i=1}^k {\phi^*}^{k_i}(e,s_i)\,ds_i\}
$ where $\phi^*$ is defined in Assumption \ref{Assump:psigrowth}.
Using (\ref{Eq:BornIneq}), we deduce that 
\begin{equation*}
\bar{\lambda}^{m+1} \leq \frac{c}{1-q} + q^{m+1} \bar{\lambda}^0 = \bar{x}.
\end{equation*}
Since $q < 1$ under Assumption \ref{Assump:psigrowth}, it ensures that $\bar{\lambda} =  \sup_{m} \bar{\lambda}^m$ is finite. To complete the proof, we use (\ref{Eq:BornIneq}) and Assumption \ref{Assump:Boundedness}, to get the following inequality:
\begin{equation*}
\sup_{t}\Esp [ \sum_{e \in E}\lambda^{m}_t(e)] \leq c^* + \bar{\lambda} \sum_{e\in E, \,\mathbf{k}_k \in \mathcal{P}(n_{\psi})} d(e){{n_{\psi}}\choose{\mathbf{k}_k}} \sum_{\mathbf{x}  \in E^k}\int_{\mathbb{R}_+^k} \prod_{i=1}^k\bar{\phi}^{k_i}(e,t-s_i ,x_i)\,ds_i < \infty.
\end{equation*}
\paragraph{Step (iii):} We write $U^{m}_n = (Q^{m \,1}_{n},Q^{m \,2}_{n},S^{m}_{n}) = U^{m}_{T_n} $. We prove here that $\Esp[{Q^{m\,i}_n}]$ and $\Esp [ {S^m_n}]  $ are uniformly bounded for all $m \geq 0$ and $n \geq 0$ to ensure that ${S}$ and ${Q^{i}} $ do not explode. Let us prove that 
\begin{equation}
\Esp\big[z^{Q^{m\,i}_{n+1}}\big]  \leq \lambda \Esp\big[z^{Q^{m\,i}_{n}}\big] + B, \quad \forall n \leq m-1,\, m \geq 1 .
\label{Eq:BoundQ}
\end{equation}
with $z \leq \min(z_0, z_1)$ and $z_0$ and $z_1$ are respectively defined in Assumption \ref{Assump:Negativeindividualdrift} and \ref{Assump:Boundedness}, $\lambda < 1$ and $B \geq 0$. Let $m \geq 1 $, we have by construction
\begin{equation*}
\begin{array}{ll}
\Esp [ z^{Q^{m+1\,i}_{n+1}}] = \Esp [ z^{Q^{m\,i}_{n+1}}], & \text{ when } n \leq m-1.
\end{array}
\end{equation*}
Thus, we only need to investigate the case $n = m $. This is proved in Lemma \ref{Lem:negDriftCondition}. Using Inequality (\ref{Eq:BoundQ}), we get  
\begin{align*}
\Esp\big[z^{Q^{m\,i}_{n}}\big]  \leq \frac{B}{1-\lambda} + \lambda^{n}z^{Q^{m \, i}_{0}}, \quad \forall n \leq m, \numberthis %\label{Eq:U_inequality}
\end{align*}
with $z^{Q^{m \,i}_{0}}$ fixed. Thus, $\Esp\big[Q^{m\,i}_{n}\big]$ is uniformly bounded. Using similar lines of argument, we also have $\Esp [ S^m_n]$ uniformly bounded. Hence, the limiting processes $U$ does not explode. 
\paragraph{Step (iv):} First, note that the process $N$ is well defined since $\lambda_t$ is locally integrable, see Step (ii)-(iii) and \cite{jacod1975multivariate}. Additionally, we can construct it pathwise using the thinning algorithm, see Remark \ref{Rem:Rem10}.\\

Let $\bar{U}_s$ be the process described in Theorem \ref{lem:ExistenceProcHks} and for which we just proved the existence. This process is dominated by the process $\bar{U}^{\infty} = (U^{\infty},\lambda^{\infty}) = \esssup_{s \geq 0}\,\bar{U}_s$. In this part, we prove that both $\Esp[U^{\infty}]$ and $ \Esp[\lambda^\infty]$ are finite.\\

First, we prove that $\Esp[U^\infty] < \infty$. Let $\lambda < \rho < \rho^1< 1$, $C > 0$, $\mathcal{S}$ the set $\mathcal{S} = \{u \in \mathbb{U}; \, u > C, \, c.w.\}$ where $c.w$ means component-wise and $\mathpzc{S}$ a set $ \mathpzc{S} \in \mathcal{U} \subset \mathcal{S}$. Since $U_n\mathbf{1}_{U_n \in \mathpzc{S}^c}$ is bounded $a.s$, we only need to show $\Esp[U^{\infty,\,\mathpzc{S}}]$ is finite with $U^{\infty,\,\mathpzc{S}} = \esssup_{n \in \mathbb{N}}\,U_n^{\mathpzc{S}} $ and $ U_n^{\mathpzc{S}} = U_n \mathbf{1}_{U_n \in \mathpzc{S}}$. Using the Doob's decomposition, we have $U_n^{\mathpzc{S}} = M_n^{\mathpzc{S}} + A_n^{\mathpzc{S}}$ with $M_n^{\mathpzc{S}}$ a martingale and $A_n^{\mathpzc{S}} = \sum_{k = 1}^n \big(\Esp[U^{\mathpzc{S}}_k|\mathcal{F}_{k-1}] - U^{\mathpzc{S}}_{k-1} \big)$ a predictable process.  Thus, we get 
\begin{align*}
\Esp[U^{\infty,\mathpzc{S}}] \leq \Esp[\esssup_{n \geq 0} M_n^{\mathpzc{S}}] + \Esp[\esssup_{n \geq 0} A_n^{\mathpzc{S}} ],\quad c.w.
\end{align*} 
The Doob's inequality and Fatou's Lemma ensure that $\Esp[\sup_{n \geq 0} M_n^{\mathpzc{S}}] \leq 2 \underset{n\rightarrow \infty}{\lim} \Esp[{M_n^{\mathpzc{S}}}^{2}]^{\frac{1}{2}},\, c.w$. Using the martingale property of $M_n^{\mathpzc{S}}$ and the Doob's decomposition of $U_n^{\mathpzc{S}}$, we find 
\begin{align*}
\Esp[(M_n^{\mathpzc{S}})^{2}] - \Esp[(M_0^{\mathpzc{S}})^{2}] = \sum_{k=1}^n \Esp[(M^{\mathpzc{S}}_k - M_{k-1}^{\mathpzc{S}})^2], \qquad M^{\mathpzc{S}}_k - M^{\mathpzc{S}}_{k-1} = U^{\mathpzc{S}}_k - \Esp[U^{\mathpzc{S}}_k|\mathcal{F}_{k-1}],\, c.w.
\end{align*} 
We have 
\begin{align*}
\Esp[(M^{\mathpzc{S}}_k - M^{\mathpzc{S}}_{k-1})^2]  = \Esp[(U^{\mathpzc{S}}_k - \Esp[U^{\mathpzc{S}}_k|\mathcal{F}_{k-1}])^2] &  \leq 2 \big(\Esp[(U^{\mathpzc{S}}_k)^2] + \Esp[\Esp[(U^{\mathpzc{S}}_k)^2|\mathcal{F}_{k-1}]]\big) \leq 4 \Esp[(U^{\mathpzc{S}}_k)^2],\, c.w.
\end{align*}
Let us prove that $\sum_{k \geq 0} \Esp[(U^{\mathpzc{S}}_k)^2] < \infty $. Using Lemma \ref{Lem:negDriftCondition} and by taking $\mathcal{O} = \{(T_j,X_j)_{j \leq 0} \in W_0; \, X_j = (n_j,t_j,b_j,\tilde{u}_j,u_j,a_j) \in E \text{ and } u_0 \geq C, \, c.w.\}$, we have 
\begin{align*}
\Esp\big[V_C(U_{n+1})\mathbf{1}_{U_{n+1} \in \mathpzc{S}, U_{n} \in\mathpzc{S}} \big] & \leq \Esp\big[V_C(U_{n+1})\mathbf{1}_{U_{n} \in\mathpzc{S}} \big] \leq \lambda \Esp\big[V_C(U_{n})\mathbf{1}_{U_{n} \in \mathpzc{S}}\big] + \Esp\big[B\mathbf{1}_{U_{n} \in \mathpzc{S}}\big]. \numberthis \label{Eq:NoIdeaName_0}
\end{align*}
By following the same lines of arguments used to prove \eqref{Eq:DecompV_0} in Lemma \ref{Lem:negDriftCondition} and basic approximations, we have the following inequality:
\begin{align*}
\Esp\big[&z^{Q^i_{n+1}-C} \mathbf{1}_{\{U_{n+1} \in \mathpzc{S}, U_{n} \in\mathpzc{S}^c\}}\big]  \leq \Esp\big[z^{Q^i_{n}-C}\mathbf{1}_{\{U_{n+1} \in \mathpzc{S}, U_{n} \in\mathpzc{S}^c\}}\big]  +\Esp\big[\int_{\mathbb{R}_+} Z_n(t) \mathbf{1}_{\{U_{n+1} \in\mathpzc{S} ,U_{n} \in\mathpzc{S}^c \}} \left\{ \mathcal{Q}_{u}(t,U_{n})\right\}\, dt\big],
\end{align*}
In the set $\{U_{n} \in\mathpzc{S}^c\}$, we have $Q^i_n \leq C$ which implies $z^{Q^i_n  - C}<1$. Additionally, we have $\sum_{e \in E} \lambda_n(e,s + T_n) \geq \underline{\psi} > 0$ under Assumption \ref{Assump:Boundedness}. This ensures that $Z_n(t) \leq e^{- \psi t}$, $a.s$. Thus, using Assumption \ref{Assump:Boundedness}, there exists $B^1$ such that 
\begin{align}
\Esp\big[z^{Q^i_{n}-C}\mathbf{1}_{\{U_{n+1} \in \mathpzc{S}, U_{n} \in\mathpzc{S}^c\}}\big]  +\Esp\big[\int_{\mathbb{R}_+} Z_n(t) \mathbf{1}_{\{U_{n+1} \in\mathpzc{S} ,U_{n} \in\mathpzc{S}^c \}} \left\{ \mathcal{Q}_{u}(t,U_{n})\right\}\, dt\big] \leq \Esp[B^1\mathbf{1}_{\{U_{n+1} \in\mathpzc{S} ,U_{n} \in\mathpzc{S}^c \}}], \label{Eq:NoIdeaName_1}
\end{align}
We take $ C \geq C^* = \max(\log(\frac{2B}{\rho - \lambda} +1), \log(\frac{B^1}{1 - \rho^1}),C_{bound})$ to ensure that 
\begin{align*}
\left\{
\begin{array}{ll}
\big[B  - (\rho - \lambda)V_C(U_n) \big] \mathbf{1}_{U_{n} \in \mathpzc{S}} < 0, & \quad a.s.\\
\big[B^1 - (1 -\rho^1) V_C(U_n+1)  \big] \mathbf{1}_{U_{n+1} \in \mathpzc{S}} < 0, & \quad a.s.
\end{array}
\right.
\end{align*}
By combining Inequalities \eqref{Eq:NoIdeaName_0} and \eqref{Eq:NoIdeaName_1} and taking $C \geq C^* $, we deduce that 
\begin{align*}
\rho^1 \Esp\big[V_C(U_{n+1})\mathbf{1}_{U_{n+1} \in \mathpzc{S}} \big] & \leq \rho \Esp\big[V_C(U_{n})\mathbf{1}_{U_{n} \in \mathpzc{S}}\big] + \Esp\left[\big(B  - (\rho - \lambda)V_C(U_n) \big) \mathbf{1}_{U_{n} \in \mathpzc{S}}\right]  \\
                                 & \qquad \qquad + \Esp\left[\big(B^1 - (1 -\rho^1) V_C(U_{n+1})  \big) \mathbf{1}_{U_{n+1} \in \mathpzc{S}}\right], \\
         & \leq \rho \Esp\big[V_C(U_{n})\mathbf{1}_{U_{n} \in \mathpzc{S}}\big], 
%\epsilon & = C - \frac{(\rho - \lambda)}{2}\Esp\big[(V(U_n) + 1)\mathbf{1}_{U_{n} \in \mathpzc{S}} \big].
\end{align*}
which ensures that $\Esp\big[V_C(U_{n+1})\mathbf{1}_{U_{n+1} \in \mathpzc{S}} \big] \leq r\Esp\big[V_C(U_{n})\mathbf{1}_{U_{n} \in \mathpzc{S}}\big]$ with $ r = \frac{\rho}{\rho^1} < 1$.
Since $(U^{\mathpzc{S}}_k)^2 \leq c_1 V_C(U^{\mathpzc{S}}_k) $, this proves that $\sum_{k \geq 0} \Esp[(U^{\mathpzc{S}}_k)^2] < c_1 \sum_{k \geq 0} \Esp[V_C(U^{\mathpzc{S}}_k)] \leq \frac{c_1}{1-\rho} < \infty $. Hence, we get $\Esp[\esssup_{n \geq 0} M^{\mathpzc{S}}_n ] \leq \big(\frac{c_1 }{1 - \rho}\big)^{\frac{1}{2}}, \, c.w$ .\\
We also have 
\begin{align*}
A^{\mathpzc{S}}_n \leq \tilde{A}^{\mathpzc{S}}_n  = \sum_{k=1}^n |\Esp[U^{\mathpzc{S}}_k|\mathcal{F}_{k-1}] - U^{\mathpzc{S}}_{k-1}| \leq 2 \sum_{k=1}^n \Esp[|U^{\mathpzc{S}}_k|],\, c.w,
\end{align*}
with $\tilde{A}^{\mathpzc{S}}_n$ a component-wise non-decreasing process. Since $\Esp[|U^{\mathpzc{S}}_k|] \leq \big(\Esp[(U^{\mathpzc{S}}_k)^2]\big)^{\frac{1}{2}} $, we get $\Esp[\tilde{A}_n ] \leq  \big(\frac{c_1 }{1 - \rho}\big)^{\frac{1}{2}}$. Hence, we deduce that $\Esp[\esssup_{n \geq 0} A^{\mathpzc{S}}_n ] \leq \big(\frac{c_1 }{1 - \rho}\big)^{\frac{1}{2}}, \, c.w$ which ensures that $\Esp[U^{\infty,\mathpzc{S}}] < \infty$.\\
Second, we prove that $\Esp[\lambda^{\infty}]$ is finite. Let $t\geq 0$ and $\mathcal{T}= \{t_0 = 0 < t_1 < \ldots < t_n = t \}$ be a partition of $[0,t]$. Using the monotone convergence theorem, we have  
\begin{align*}
\Esp[\sum_{k = 1}^n|\lambda_{t_k} - \lambda_{t_{k-1}}|] \leq  \Esp[\frac{\sum_{k = 1}^n(t_k - t_{k-1})}{t}|\tilde{\lambda}_{t_k} - \tilde{\lambda}_{t_{k-1}}|] = \Esp[\cfrac{\int_{0}^t f^{\mathcal{T}} \, \mathrm{d}s}{t} ] \leq \cfrac{\int_{0}^t \Esp[f^{\mathcal{T}}] \, \mathrm{d}s}{t},
\end{align*}
with $f^{\mathcal{T}} = \sum_{k=1}^n |\lambda_{t_k} - \lambda_{t_{k-1}}| \mathbf{1}_{t_{k-1} \leq t < t_{k}}$. Since $ \Esp[|\lambda_{t_k} - \lambda_{t_{k-1}}|] \leq 2 \sup_{t} \Esp [|\lambda_{t}|] \leq \frac{c}{1-q} < \infty$, we get 
\begin{align*}
\Esp[\sum_{k = 1}^n|\lambda_{t_k} - \lambda_{t_{k-1}}|]  \leq \frac{c}{1-q} < \infty.
\end{align*} 
We can then apply Bichteler-Dellacherie theorem to write $\lambda_{t} = M_s + A_s$ with $M_s$ a martingale and $A_s$ a predictable process with almost surely finite variation over finite time intervals such that 
\begin{align*}
\Esp [\mathbf{var}_t(\lambda) ] = \Esp[\mathbf{var}_t(M) ] + \Esp[\mathbf{var}_t(A)],
\end{align*}
where $\mathbf{var}_t(Z)$ is the variation of the process $Z$ over the interval $[0,t]$. Since 
\begin{align*}
\Esp[\lambda^{\infty}] \leq \Esp[\esssup_{s} M_s ] + \Esp[\esssup_{s} A_s ],\quad \esssup_{s \leq t} M_s \leq \mathbf{var}_t(M),\quad \esssup_{s\leq t} A_s \leq  \mathbf{var}_t(A),
\end{align*} 
and $\sup_{t} \Esp [\mathbf{var}_t(\lambda) ] < \infty$, we deduce that $ \Esp[\lambda^{\infty}] < \infty$. 
Finally, we have $\Esp [\|\bar{U}_t\|] \leq \Esp [\|\bar{U}^{\infty}\|] < \infty$, for all $t \geq 0$. Thus, the family $\cup_{t \geq 0} \bar{U}_t$ is tight. Moreover, the process $ \bar{U}_t$ satisfies the Feller property since $\mathbb{U}$ and $E$ are countable states and $\Esp[\|\bar{U}_t\|]$ is uniformly bounded. Thus the process $\bar{U}$ admits an invariant distribution $\mu$ which completes the proof. 
\end{proof}

\section{Proof of Theorem \ref{lem:ErgodicityProcHks}}
\label{sec:ErgValFctHks2}

\subsection{Preliminary result}
\begin{lem}
Let $(\mathcal{F}_n)_{n \geq 0}$ be a sequence of $\sigma$-algebras such that $\mathcal{F}_{n} \underset{n \rightarrow \infty}{\rightarrow} \mathcal{F}_{\infty}$ with $\mathcal{F}_{\infty}$ a $\sigma$-algebra and $(X_n)_{n \geq 0}$ be a sequence of random variables valued in $\mathbb{R}$ such that $X_n \underset{n \rightarrow \infty}{\rightarrow} X,$ $a.s$, $X_n$ is $\mathcal{F}_\infty$-measurable, $X$ is $\mathcal{F}_{\infty}$-measurable and $\sup_n\Esp[X_n^2] < \infty$. Then, we have 
$$
\Esp\big[X_n|\mathcal{F}_{n}\big] \underset{n \rightarrow \infty}{\rightarrow} X, \quad a.s.
$$
\label{lem:ConvergenceCondExpect}
\end{lem}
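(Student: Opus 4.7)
The plan is to decompose
\[
\Esp[X_n|\mathcal{F}_n] - X = \bigl(\Esp[X|\mathcal{F}_n] - X\bigr) + \Esp[X_n - X|\mathcal{F}_n]
\]
and treat each summand separately, interpreting the hypothesis $\mathcal{F}_n \to \mathcal{F}_\infty$ as $\mathcal{F}_n \uparrow \mathcal{F}_\infty = \sigma(\cup_n \mathcal{F}_n)$. A preliminary step is to extract the consequences of the $L^2$-bound: by Fatou applied to $X_n^2 \to X^2$ one has $\Esp[X^2] \le \sup_n \Esp[X_n^2] < \infty$, so $X \in L^2 \subset L^1$; and the estimate $\Esp[|X_n|\mathbf{1}_{|X_n| > K}] \le K^{-1}\Esp[X_n^2]$ shows that the family $(X_n)$ is uniformly integrable. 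Combined with a.s.\ convergence, Vitali's theorem then upgrades $X_n \to X$ to $L^1$-convergence.

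The first summand is handled by L\'evy's upward martingale convergence theorem: $X$ is $\mathcal{F}_\infty$-measurable and integrable, so the martingale $(\Esp[X|\mathcal{F}_n])_n$ converges a.s.\ to $\Esp[X|\mathcal{F}_\infty] = X$.

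For the second summand, Jensen gives immediately $\Esp[\,|\Esp[X_n - X|\mathcal{F}_n]|\,] \le \Esp[|X_n - X|] \to 0$, which yields only $L^1$-convergence. To upgrade to almost-sure convergence I would invoke Hunt's lemma in its UI form (Neveu, Dellacherie--Meyer): if $Y_n \to 0$ a.s.\ and in $L^1$ and $\mathcal{F}_n \uparrow \mathcal{F}_\infty$, then $\Esp[Y_n | \mathcal{F}_n] \to 0$ a.s. Applied with $Y_n = X_n - X$, this finishes the proof upon combining with the first summand.

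The main obstacle is precisely this a.s.\ upgrade. A self-contained argument replacing the appeal to Hunt's lemma would introduce $Z_N = \sup_{m \ge N} |X_m - X|$, note the bound $|\Esp[X_n - X|\mathcal{F}_n]| \le \Esp[Z_N|\mathcal{F}_n]$ for $n \ge N$, apply L\'evy's upward theorem to $\Esp[Z_N|\mathcal{F}_n] \to \Esp[Z_N|\mathcal{F}_\infty]$ as $n\to\infty$ at fixed $N$, and then let $N \to \infty$ via monotone convergence. The delicate point is the integrability of $Z_N$: the $L^2$-bound on the individual $X_n$ does not imply $\Esp[\sup_m |X_m|] < \infty$, so a truncation argument is needed, e.g.\ split $X_n - X = (X_n - X)\mathbf{1}_{|X_n - X| \le M} + (X_n - X)\mathbf{1}_{|X_n - X|>M}$, control the first part pathwise by $M$ and use $M^{-1}\sup_n \Esp[(X_n - X)^2]$ to bound the second part in $L^1$ uniformly, then take $M \to \infty$ at the end.
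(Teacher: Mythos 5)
Your reduction to the two summands $\Esp[X|\mathcal{F}_n]-X$ and $\Esp[X_n-X|\mathcal{F}_n]$ is sensible, the preliminary observations (Fatou, uniform integrability from the $L^2$ bound, Vitali, L\'evy's upward theorem for the first summand) are correct, but the treatment of the second summand contains a genuine gap: the ``UI form of Hunt's lemma'' you invoke is not a valid theorem. Hunt's lemma yields almost-sure convergence of $\Esp[Y_n|\mathcal{F}_n]$ only under a domination hypothesis $\sup_n |Y_n|\in L^1$; with uniform integrability (even $L^2$-boundedness) together with a.s.\ and $L^1$ convergence of $Y_n$ one only obtains convergence of $\Esp[Y_n|\mathcal{F}_n]$ in $L^1$, not almost surely. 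A standard counterexample sits squarely inside your hypotheses: on a Bernoulli space with an increasing filtration $\mathcal{F}_n\uparrow\mathcal{F}_\infty$, take $X_n=n\,\mathbf{1}_{B_n\cap C_n}$, where the $B_n\in\mathcal{F}_n$ are mutually independent with $\mathbb{P}[B_n]\asymp 1/n$ and $C_n$ is independent of $\mathcal{F}_n$, measurable with respect to $\mathcal{F}_\infty$, with $\mathbb{P}[C_n]\asymp 1/n$. Then $\mathbb{P}[B_n\cap C_n]\asymp 1/n^2$ is summable, so $X_n\rightarrow 0$ a.s.; $\Esp[X_n^2]\asymp 1$, so the $L^2$ bound and uniform integrability hold and $X_n\rightarrow 0$ in $L^1$; yet $\Esp[X_n|\mathcal{F}_n]=n\,\mathbb{P}[C_n]\,\mathbf{1}_{B_n}\asymp\mathbf{1}_{B_n}$, which by the second Borel--Cantelli lemma is of order one infinitely often a.s. So the key step of your argument rests on a false statement, and no repair of it is possible under the stated $L^2$ hypothesis alone; what does make the conclusion provable along your lines is a domination assumption such as $\Esp[\sup_n|X_n|]<\infty$ (precisely the alternative condition mentioned in the remark following the lemma), under which the classical, dominated, Hunt lemma applies directly.

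Your fallback argument fails at the very point you flag as delicate, and the truncation patch does not save it: after splitting at level $M$, the bounded part is indeed handled by $Z_N^M=\sup_{m\ge N}\min(|X_m-X|,M)$ and L\'evy's theorem, but the tail term $\Esp[(X_n-X)\mathbf{1}_{|X_n-X|>M}|\mathcal{F}_n]$ is only controlled in $L^1$, uniformly in $n$, by $C/M$; an $L^1$ bound does not control $\limsup_n$ pathwise, and in the counterexample above this tail term is exactly the piece that equals $\mathbf{1}_{B_n}$ for all large $n$ and oscillates forever. For comparison, the paper's own proof does not go through Hunt's lemma at all: it works with the double-indexed family $\Esp[X_m|\mathcal{F}_n]$, a tightness and subsequence (accumulation-point) argument, and an identification of the limit through sets $A_n\in\mathcal{F}_n$ converging to $A$; your decomposition is cleaner to read, but as written it establishes the lemma only under the additional domination hypothesis, not under the $L^2$ hypothesis of the statement.
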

\begin{rem} In the above Lemma \ref{lem:ConvergenceCondExpect}, we can replace the condition $\sup_n\Esp[X_n^2] < \infty$ by the condition $\Esp[\sup_n X_n] < \infty$ and recover the same result.
\end{rem}
\begin{proof}[Proof of Lemma \ref{lem:ConvergenceCondExpect}] Let $m$ and $n$ be two positive integers. We write $X_n^m = \Esp[X_m|\mathcal{F}_n]$.
\paragraph{Step (i):} Since $\sup_n\Esp[X_n^2] < \infty$, we can apply a conditional dominated convergence theorem to show that $X_n^m \underset{m \rightarrow \infty}{\rightarrow} X_n = \Esp[X|\mathcal{F}_n],$ $a.s$. 
\paragraph{Step (ii):} Since $\mathcal{F}_{\infty} = \underset{n \rightarrow \infty}{\lim} \mathcal{F}_{n}$, there exists a sequence $(A_n)_{n \geq 0}$ such that $A_n \in \mathcal{F}_{n}$ and $A_n \underset{n \rightarrow \infty}{\rightarrow} A$. By definition, we have 
$$
\Esp[X_n \mathbf{1}_{A_n}] = \Esp[X \mathbf{1}_{A_n}].
$$
Note that the family $(X_n)_{n \geq 0}$ is tight. Indeed, using Doob's and Jensen's inequalities, we have 
\begin{align*}
\Esp \big[ \sup_{i \leq n} |X_i| \big] \leq \Esp \big[ \big(\sup_{i \leq n} |X_i|\big)^{2} \big]^{\frac{1}{2}} & = \Esp \big[ \sup_{i \leq n} X_i^{2} \big]^{\frac{1}{2}} \leq 2 \Esp\big[ X_n^{2} \big]^{\frac{1}{2}}.
\end{align*}
Then, using Fatou's Lemma, we get $\Esp \big[ \sup_{i \leq n} X_i\big] \leq 2 (\sup_n\Esp\big[ X_n^{2} \big])^{\frac{1}{2}} < \infty  $ which ensures that $(X_n)_{n\geq 0}$ is tight. Thus, we can extract a sub sequence $(X_{n_k})_{k \geq 0} $ such that $ X_{n_k} \underset{k \rightarrow \infty}{\rightarrow} Z$ $a.s$. Since $\sup_n\Esp[X_n^2] < \infty$, we can use the dominated convergence theorem to get
\begin{align*}
\Esp[Z\mathbf{1}_{A}] = \underset{k \rightarrow \infty}{\lim} \Esp[X_{n_k}\mathbf{1}_{A_{n_k}}] = \underset{k \rightarrow \infty}{\lim} \Esp[X\mathbf{1}_{A_{n_k}}] = \Esp[X\mathbf{1}_{A}].
\end{align*}
Thus, we have $Z = X,$ $\mathcal{F}_{\infty}-a.s.$ Since all the variables $X_k$ are $\mathcal{F}_{\infty}$-measurable, the variable $Z$ is also $\mathcal{F}_{\infty}$-measurable for any $n\geq 0$. Given that $Z$ and $X$ are both $\mathcal{F}_{\infty}$-measurable, we deduce that every accumulation point $Z$ of $(X_n)_{n \geq 0}$ satisfies $Z = X,$ $a.s$. Finally, we get $\lim_{\substack{m \rightarrow \infty\\n \rightarrow \infty}} X^m_n = X,$ $a.s.$ and we can use a composition argument, to deduce that $\Esp\big[X_n|\mathcal{F}_{n}\big] \underset{n \rightarrow \infty}{\rightarrow} X,$ $a.s.$
\end{proof}
We borrow the following definition from \cite{bremaud1996stability}.
\begin{defi}[Coupling] Two point processes $N$ and $N'$ couple if and only if 
$$
\underset{t  \rightarrow \infty }{\lim } \mathbb{P}\big[ N_s = N'_s, \quad \forall s \in (t, \infty) \big] = 1.
$$
\end{defi}
\begin{lem} Let $N$ be a point process and $\lambda$ its intensity. We have 
$$
\mathbb{P} [ N_s - N_t = 0, \quad \forall s \in (t, \infty)  | \mathcal{F}_t] = \Esp[e^{- \int_{t}^{\infty} \lambda_u \mathbf{1}_{A_u}  \, ds}| \mathcal{F}_t],
$$ 
with $A_u = \{N_u - N_t = 0 \} $ for all $u \geq t$.
\label{Lem:ComputPP}
\end{lem}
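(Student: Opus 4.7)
The plan is to prove this identity via the standard exponential (Doléans--Dade) martingale associated with the counting process $N$, and then pass to the limit as the horizon goes to infinity.

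First I would describe the dynamics of $L_s := \mathbf{1}_{A_s}$. Since $L$ is a c\`adl\`ag pure-jump process that starts at $L_t = 1$ and drops to $0$ at the first event time $T^\ast := \inf\{u > t : N_u > N_t\}$ (and stays there), it satisfies
\[
L_s = 1 - \int_t^s L_{u^-}\,dN_u, \qquad s \ge t.
\]
Using the compensated local martingale $M_u := N_u - \int_0^u \lambda_v\,dv$ and the fact that $L_{u^-} = L_u$ for $du$-a.e.\ $u$, this rewrites as
\[
L_s = 1 - \int_t^s \lambda_u \mathbf{1}_{A_u}\,du - \int_t^s L_{u^-}\,dM_u.
\]

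Second, set $F_s := \exp\!\bigl(\int_t^s \lambda_u \mathbf{1}_{A_u}\,du\bigr)$, which is continuous and of finite variation. By integration by parts on $Z_s := L_s F_s$, and using the continuity of $F$ to kill the covariation term, I would obtain
\[
dZ_s = L_{s^-}\,dF_s + F_s\,dL_s = F_s L_{s^-} \lambda_s \mathbf{1}_{A_s}\,ds - F_s L_{s^-}\,dN_s = -F_s L_{s^-}\,dM_s,
\]
so that $Z$ is a local martingale with $Z_t = 1$ (indeed, the Dol\'eans--Dade exponential of $-\int L_{u^-}\,dM_u$).

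Third, I would upgrade $Z$ to a true (in fact uniformly integrable) martingale via a localization argument along stopping times $\tau_k := \inf\{s \ge t : F_s \ge k\}$, using the growth controls of Assumptions \ref{Assump:psigrowth} and \ref{Assump:Boundedness} together with the local integrability of $\sum_e \lambda_t(e)$ already established in the proof of Theorem \ref{lem:ExistenceProcHks}. Then the martingale identity $\Esp[Z_s \mid \mathcal{F}_t] = 1$ combined with monotone/dominated convergence as $s \to \infty$, together with Lemma \ref{lem:ConvergenceCondExpect} applied to the conditional expectations, yields the claimed formula after identifying $\int_t^\infty \lambda_u \mathbf{1}_{A_u}\,du$ with the compensator of $T^\ast$.

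The main obstacle is precisely this third step: the exponential factor $F_s$ is unbounded on paths where no jump has occurred, so upgrading the local martingale $Z$ to a uniformly integrable one — and justifying the infinite-horizon limit of the conditional expectation $\Esp[Z_s\mid\mathcal{F}_t]$ — is the delicate part. This is where the negative-drift hypothesis (Assumption \ref{Assump:Negativeindividualdrift}), the uniform bound on the overall flow (Assumption \ref{Assump:Boundedness}), and Lemma \ref{lem:ConvergenceCondExpect} will do the heavy lifting.
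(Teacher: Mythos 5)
Your first two steps are fine ($L_s=1-\int_t^s L_{u^-}\,dN_u$ and the fact that $Z_s=L_sF_s$ is a local martingale), but the third step is not just ``delicate'': it cannot be completed, for two distinct reasons. First, even if $Z$ were a uniformly integrable martingale on $[t,\infty]$, optional stopping would give $\Esp\big[\mathbf{1}_{A_\infty}e^{+\int_t^\infty\lambda_u\mathbf{1}_{A_u}\,du}\,\big|\,\mathcal{F}_t\big]=1$, which is a different identity from the one claimed; there is no way to move the exponential factor from inside one conditional expectation to the other side, so ``the martingale identity yields the claimed formula'' is a non sequitur. Second, $Z$ is genuinely a strict local martingale in the basic case: for a homogeneous Poisson process of rate $\lambda$ one has $Z_s=\mathbf{1}_{\{T^*>s\}}e^{\lambda(s-t)}$ with $\Esp[Z_s\,|\,\mathcal{F}_t]=1$ for every $s$ but $Z_s\to0$ a.s., so no localization along $\tau_k=\inf\{s:F_s\ge k\}$, nor Assumptions \ref{Assump:psigrowth} and \ref{Assump:Boundedness} (which, if anything, keep the total intensity bounded away from zero), can produce uniform integrability. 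The same example shows that the display you are asked to prove cannot hold as literally written: for a Poisson process the left-hand side is $0$, while $\int_t^\infty\lambda\mathbf{1}_{A_u}\,du=\lambda(T^*-t)$ with $T^*-t$ exponential of rate $\lambda$, so the right-hand side equals $1/2$. Any attempted proof must therefore fail somewhere, and yours fails exactly at the identification step.

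The usable statement — which is what the paper's own ``proof'' (a bare citation of Lemma 1 in \cite{bremaud1996stability}) is pointing to, and what is actually exploited in Lemma \ref{Lem:CondIntensity} — is the avoidance-probability formula in which the intensity in the exponent is evaluated along the event-free continuation: on $A_u$ the intensity of the model equals $\lambda^{(0)}_u$, an $\mathcal{F}_t$-measurable function of $u$ (plug no events after $t$ into $\psi$ and $\phi$), and
$$\mathbb{P}\big[N_s-N_t=0,\ \forall s\in(t,\infty)\,\big|\,\mathcal{F}_t\big]=\exp\Big(-\int_t^\infty\lambda^{(0)}_u\,du\Big).$$
The clean argument starts exactly where yours does but avoids the exponential martingale: taking conditional expectations in $\mathbf{1}_{A_v}=1-\int_t^v\mathbf{1}_{A_{u^-}}\,dN_u$ and using $\mathbf{1}_{A_u}\lambda_u=\mathbf{1}_{A_u}\lambda^{(0)}_u$ with $\lambda^{(0)}$ $\mathcal{F}_t$-measurable gives, for $G(v)=\mathbb{P}[A_v\,|\,\mathcal{F}_t]$, the equation $G(v)=1-\int_t^v\lambda^{(0)}_uG(u)\,du$, hence $G(v)=e^{-\int_t^v\lambda^{(0)}_u\,du}$, and letting $v\to\infty$ concludes; alternatively, the downstream coupling estimate can be obtained even more simply from the Markov bound $\mathbb{P}[\exists\ \text{event in }(t,\infty)\,|\,\mathcal{F}_t]\le\Esp[\int_t^\infty\lambda_u\,du\,|\,\mathcal{F}_t]$. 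You should restructure your plan around this observation rather than around upgrading $Z$ to a true martingale.
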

\begin{proof}
See Lemma 1 in \cite{bremaud1996stability}.
\end{proof}
\begin{lem}Two point processes $N$ and $N'$ which admit respectively $\lambda$ and $\lambda'$ as intensities couple if and only if 
$$
 \int_{0}^{\infty} \sup_{e \in E}\Esp\big[ |\lambda_s(e) - \lambda'_s(e)|\big]\, ds  < \infty.
$$
\label{Lem:CondIntensity}
\end{lem}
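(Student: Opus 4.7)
The plan is to couple $N$ and $N'$ on a common probability space using a single driving Poisson random measure $N^*$ on $\mathbb{R}_+ \times \mathbb{R}_+ \times E$ with mean measure $\mathrm{d}t\,\mathrm{d}z\,\nu(\mathrm{d}e)$, exactly as in the pathwise thinning construction of Remark \ref{Rem:Rem10}. Both $N$ and $N'$ are then produced by retaining, at each time $s$ and event mark $e$, the atoms of $N^*$ whose second coordinate lies below $\lambda_s(e)$ and $\lambda'_s(e)$ respectively. With this joint construction, a disagreement between $N$ and $N'$ at $(s,e)$ occurs if and only if $N^*$ has an atom with second coordinate in the symmetric band between $\lambda_s(e)$ and $\lambda'_s(e)$.

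For the sufficiency direction, I would fix $t$ and denote by $A_u = \{N_s = N'_s \text{ for all } t < s \le u\}$ the event that no disagreement has yet occurred after $t$. Applying the argument of Lemma \ref{Lem:ComputPP} to the auxiliary point process of disagreements (whose $\mathcal{F}_s$-intensity is $\sum_{e\in E}|\lambda_s(e)-\lambda'_s(e)|\mathbf{1}_{A_s}$ under the coupling), I obtain
\begin{equation*}
\mathbb{P}\big[N_s = N'_s,\ \forall s > t \,\big|\, \mathcal{F}_t,\,N_t = N'_t\big] = \Esp\Big[\exp\Big(-\int_t^{\infty} \sum_{e\in E}|\lambda_s(e)-\lambda'_s(e)|\mathbf{1}_{A_s}\,\mathrm{d}s\Big)\,\Big|\,\mathcal{F}_t\Big].
\end{equation*}
Since $E$ is countable and $\sum_{e\in E}\Esp[|\lambda_s(e)-\lambda'_s(e)|] \le |E|\,\sup_{e\in E}\Esp[|\lambda_s(e)-\lambda'_s(e)|]$ (or, more generally, dominated by the hypothesis combined with Assumption \ref{Assump:Boundedness}), Fubini and the finite-integral assumption imply that the exponent goes to $0$ in probability as $t\to\infty$. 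Dominated convergence then yields $\mathbb{P}[N_s=N'_s,\ \forall s>t]\to 1$, which is the coupling property.

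For the necessity direction, assume the two processes couple. I would argue by contraposition: if $\int_0^{\infty}\sup_{e\in E}\Esp[|\lambda_s(e)-\lambda'_s(e)|]\,\mathrm{d}s = \infty$, then under any joint construction the compensator of the disagreement process is almost surely infinite on $(t,\infty)$ for every $t$, so infinitely many disagreement epochs occur after any time $t$ with positive probability, contradicting the definition of coupling. The main obstacle I anticipate is the history dependence of $\lambda$ and $\lambda'$: before coupling actually takes place the two filtrations may diverge, so the intensities used in the bound above are not literally the marginal intensities but the ones evaluated along the joint construction. I would address this by first establishing the identity on the event $A_u$ (where the two histories coincide and hence $\lambda_s = \lambda'_s$ on that part of the past), then truncating to $A_u$ inside the exponent, which is exactly the role played by the factor $\mathbf{1}_{A_s}$ in the formula above; the same care is needed in the necessity argument, where one must compare the true intensities to those appearing in the Poisson-process construction on the event that coupling has not yet occurred.
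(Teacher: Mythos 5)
Your sufficiency argument follows the paper's route: couple $N$ and $N'$ through a common driving Poisson measure (the paper's ``canonical coupling''), note that the disagreement process has intensity governed by the discrepancy of the intensities, and apply Lemma \ref{Lem:ComputPP} to write the probability of no disagreement after $t$ as an exponential functional. The genuine gap is in how you connect that formula to the hypothesis. Your exponent contains $\sum_{e\in E}|\lambda_s(e)-\lambda'_s(e)|$, whereas the hypothesis only controls $\sup_{e\in E}\Esp\big[|\lambda_s(e)-\lambda'_s(e)|\big]$; the bound $\sum_{e}\leq |E|\,\sup_{e}$ is vacuous because $E$ is countably infinite, and the parenthetical appeal to Assumption \ref{Assump:Boundedness} is not an argument (it constrains the overall flow, not the time-integral of the summed discrepancy). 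As written, your ``exponent tends to $0$ in probability'' step via Fubini needs $\int_0^\infty \sum_{e}\Esp[|\lambda_s(e)-\lambda'_s(e)|]\,ds<\infty$, which is strictly stronger than the stated hypothesis. The paper avoids this: it keeps only $\sup_{e}|\lambda_s(e)-\lambda'_s(e)|$ in the exponent when invoking Lemma \ref{Lem:ComputPP} and then applies Jensen's inequality to move the expectation inside, yielding the deterministic lower bound $e^{-\int_t^\infty\sup_{e}\Esp[|\lambda_s(e)-\lambda'_s(e)|]\,ds}$, which tends to $1$ under the hypothesis with no stochastic limit argument at all.

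Concerning the converse, note that the paper proves only the ``if'' direction, which is the only one used later. Your contraposition sketch does not establish the other direction: divergence of $\int_0^\infty\sup_{e}\Esp[|\lambda_s(e)-\lambda'_s(e)|]\,ds$ is a statement about expectations and does not imply that the pathwise compensator of the disagreement process is almost surely infinite, and since $\lambda$ and $\lambda'$ are evaluated along different pre-merger histories, $\Esp[|\lambda_s(e)-\lambda'_s(e)|]$ need not vanish on the coupling event, so no contradiction with the definition of coupling is actually derived. For the purposes of the paper you only need the sufficiency direction, with the sup-versus-sum issue repaired as above.
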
 
\begin{proof} Let $\mathcal{F}_t = \mathcal{F}^N_t \vee \mathcal{F}^{N'}_t $. Using the canonical coupling, the point process $|N-N'|$ admits $|\lambda_t - \lambda'_t|$ as an $\mathcal{F}_t$-intensity. Using Lemma (\ref{Lem:ComputPP}) and Jensen's Inequality, we have 
$$
 \mathbb{P}\big[ \sup_{e}|N_s(e)-N'_s(e)| = 0, \quad \forall s \in (t, \infty) \big] \geq \Esp [e^{- \int_{t}^{\infty} \sup_{e}|\lambda_s(e) - \lambda'_s(e)|  \, ds}] \geq e^{- \int_{t}^{\infty} \sup_{e}\Esp [|\lambda_s(e) - \lambda'_s(e)|]  \, ds} .
$$
Since $\int_{0}^{\infty}\sup_{e} \Esp \big[ |\lambda_s(e) - \lambda'_s(e)|\big]\, ds   < \infty$, we have $ \int_{t}^{\infty} \sup_{e}\Esp \big[ |\lambda_s(e) - \lambda'_s(e)|\big]\, ds   \underset{t \rightarrow \infty}{\rightarrow} 0$ 	which implies that 
$$
\mathbb{P}\big[ \sup_{e}|N_s(e)-N'_s(e)| = 0, \quad \forall s \in (t, \infty) \big] \underset{t \rightarrow \infty}{\rightarrow} 1.
$$
This completes the proof.
\end{proof}
\subsection{Uniqueness}
\subsubsection{Outline of the proof}
Let $N^{\infty}= (T^{\infty}_i,X^{\infty}_i)$ be the stationary process constructed in Theorem \ref{lem:ExistenceProcHks} and $N = (T_i,X_i)$  be a point process whose intensity satisfies (\ref{Eq:IntensityExpressionHawkes}). We write $\lambda$ (resp. $\lambda^{\infty}$) for the intensity of $N$ (resp. $N^{\infty}$). To prove the uniqueness of the invariant distribution, we only need to show that $\int_{0}^{\infty} \sup_{e \in E}\Esp\big[ |\lambda_s(e) - \lambda^{\infty}_s(e)|\big]\, ds  < \infty$, see Lemma \ref{Lem:CondIntensity}. To do so, we first show that $(U_n)_{n \geq 0}$ is $f$-geometrically ergodic, see Lemma \ref{lem:BoundednesSum}. The proof of this result requires Lemmas \ref{lem:FiniteSumU_n} and \ref{lem:Weak_dependence_past_1}. Using this result, we prove, in Lemma \ref{Lemma:ErgodicityIncreasing}, that $f(t) =  \sup_e\Esp \big[|\lambda_t(e) - \lambda_t^{\infty}(e) | \big]$ satisfies the following inequality:
\begin{equation*}
f(t) \leq u(t) + c_3 G\big(\int_{0}^t \bar{h}(t-s)f(s) \, ds\big),
\end{equation*}
with $u(t) = c_2 \Esp[|| U_{t} - U^{\infty}_t ||] + c_1 \Esp \big[ \|U_{t} - U^{\infty}_{t}\|^{\beta p}\big]^{\frac{1}{\beta p}}$, $G(t) = t^{\frac{1}{\beta}}$ and $\bar{h}(t) = \sup_{e,u,x} \phi\big(e,u,t,x\big)$ with $c_1$, $c_2$, $c_3$, $\beta > 1$ and $p > 1$ positive constants. Then, we use Theorem 3 in \cite{beesack1984some} and the above inequality, to show that $\int_{\mathbb{R}_+} f(t)\, dt < \infty$ which ensures the uniqueness. 
\subsubsection{Proof}
Let $\lambda < 1$ given by Lemma \ref{Lem:negDriftCondition} and $\lambda < \rho < 1 $. We denote by $s = \{(T_j,X_j)_{j \leq 0} \in W_0; \, X_j = (n_j,t_j,b_j,\tilde{u}_j,u_j,a_j) \in E \text{ and } V(u_0) \leq \frac{2B}{\rho - \lambda} + 1\}$ and by $\alpha$ a set $\alpha \in \mathcal{W}_0 \subset s$. We have the following lemma.
\begin{lem}
Under Assumptions \ref{Assump:Negativeindividualdrift} and \ref{Assump:Boundedness}, the function $f = V +1$ with $V$ defined in Equation (\ref{Eq:V}) and $r> 1$ such that
\begin{equation*}
\sup_{\mathbf{x} \in W}\Esp_{\mathbf{x}} \big[ \sum_{n = 1}^{\tau_{\alpha}} f(U_n)r^n \big] <  \infty.
\end{equation*}
\label{lem:FiniteSumU_n}
\end{lem}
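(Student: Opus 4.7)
The plan is to exploit the geometric drift condition of Lemma \ref{Lem:negDriftCondition} (and its synthetic form in Remark \ref{rem:NegDriftV_0}) in the standard Foster--Lyapunov way. I would apply the drift inequality with the hitting time $\tau_\alpha$. Since $\alpha\subset s$, on the event $\{\tau_\alpha\geq n+1\}$ the past at time $n$ lies outside $s$, so that $V(U_n)>\tfrac{2B}{\rho-\lambda}+1$ (adjusting the threshold in the definition of $s$ by a constant factor if needed to accommodate the $3B$ in Remark \ref{rem:NegDriftV_0}). This allows one to absorb the additive constant:
\[
3B\,\mathbf{1}_{\tau_\alpha\geq n+1}\leq(\rho-\lambda)V(U_n)\,\mathbf{1}_{\tau_\alpha\geq n+1},\qquad a.s.,
\]
turning the drift into the purely geometric inequality
\[
\Esp_{\mathbf{x}}\bigl[V(U_{n+1})\mathbf{1}_{\tau_\alpha\geq n+1}\bigr]\leq \rho\,\Esp_{\mathbf{x}}\bigl[V(U_n)\mathbf{1}_{\tau_\alpha\geq n}\bigr].
\]
Iterating then yields $\Esp_{\mathbf{x}}[V(U_n)\mathbf{1}_{\tau_\alpha\geq n}]\leq \rho^n V(u_0)$, where $u_0$ is the order book state coordinate of the starting past $\mathbf{x}$.

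Next, I would pick $r>1$ such that $r\rho<1$, multiply by $r^n$ and sum. Tonelli gives
\[
\Esp_{\mathbf{x}}\!\left[\sum_{n=1}^{\tau_\alpha}V(U_n)r^n\right]\leq\sum_{n\geq 1}(r\rho)^n V(u_0)=\frac{r\rho}{1-r\rho}\,V(u_0).
\]
For the constant $+1$ part of $f=V+1$, I would use that $V$ defined in \eqref{Eq:V} is bounded below by $v_0:=3z^{-C_{bound}}>0$ (since $q^i\geq 0$, $s\geq 0$ and $z>1$), so that $1\leq V/v_0$ and the same estimate, up to the factor $1/v_0$, bounds $\Esp_{\mathbf{x}}[\sum_{n=1}^{\tau_\alpha}r^n]$. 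Summing the two contributions gives $\Esp_{\mathbf{x}}[\sum_{n=1}^{\tau_\alpha} f(U_n)r^n]\leq K\,V(u_0)$ for some constant $K$ independent of $\mathbf{x}$. The uniform bound over $\mathbf{x}\in W$ then follows whenever $V(u_0)$ is uniformly controlled on $W$, which is the regime in which the lemma will be invoked (typically $W\subset\alpha\subset s$, so that $V(u_0)$ is bounded by the very constant defining $s$).

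The main obstacle is essentially a book-keeping one. The drift in Lemma \ref{Lem:negDriftCondition} is phrased as an unconditional expectation rather than a pathwise $\mathcal{F}_n$-conditional inequality, so one must check carefully that $\{\tau_\alpha\geq n+1\}\in\mathcal{F}_n$ and that the absorption step can be carried out inside the expectation. In addition, the threshold $\tfrac{2B}{\rho-\lambda}+1$ appearing in the definition of $s$ is tuned to the scalar constant $B$ of a single drift inequality, whereas the aggregated Lyapunov function $V$ produces $3B$ in Remark \ref{rem:NegDriftV_0}; a mild enlargement of the threshold is therefore needed to guarantee $3B\leq(\rho-\lambda)V(U_n)$ outside $s$. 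Once these points are handled, the remainder is a direct geometric summation.
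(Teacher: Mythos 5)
Your argument breaks at the absorption step. You claim that on $\{\tau_\alpha\geq n+1\}$ the state at time $n$ lies outside $s$, so that $V(U_n)>\tfrac{2B}{\rho-\lambda}+1$ and the additive constant of the drift can be absorbed. But $\alpha$ is only a subset of $s$, not the sublevel set itself: in the paper the lemma is invoked with $\alpha=\alpha^k=\{\mathbf{z}\in W_0;\,\mathbf{z}_{-k+1\leq j\leq 0}=\mathbf{y}_{-k+1\leq j\leq 0}\}$, a set defined by matching the last $k$ marks of a fixed reference path. Not having entered $\alpha^k$ by time $n$ in no way forces $U_n$ out of $s$; the chain can sit inside $s$ (where $V$ is small and the inequality $3B\mathbf{1}_{\tau_\alpha\geq n+1}\leq(\rho-\lambda)V(U_n)\mathbf{1}_{\tau_\alpha\geq n+1}$ fails) for arbitrarily long stretches while the weight $r^n$ grows. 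Consequently the purely geometric iteration $\Esp_{\mathbf{x}}[V(U_n)\mathbf{1}_{\tau_\alpha\geq n}]\leq\rho^nV(u_0)$ does not follow, and in fact no enlargement of the threshold defining $s$ can repair this: the Foster--Lyapunov drift by itself only controls the return time to the sublevel set $s$, and gives no information whatsoever about how long it takes to realize the specific history pattern defining $\alpha^k$ once inside $s$.

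What is missing is precisely the second ingredient of the result the paper appeals to (its proof is a one-line reference to Theorem 6.3 of Meyn--Tweedie): besides the drift outside $s$, one needs an accessibility/minorization property of $\alpha$ from within $s$ (a petite-set type argument), so that each visit to $s$ has a uniformly positive chance of producing a visit to $\alpha$ within a bounded number of steps; the excursion up to $\tau_\alpha$ is then decomposed into geometrically many excursions to $s$, each of which the drift controls. Your computation (the choice $r\rho<1$, the Tonelli summation, and the bound $1\leq V/v_0$ with $v_0=3z^{-C_{bound}}$) is the correct treatment of the drift part, but without the accessibility step it proves at best the statement with $\tau_s$ in place of $\tau_\alpha$, which is not what the subsequent coupling argument (Lemma on $f$-geometric ergodicity) requires. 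A secondary point: your final bound is proportional to $V(u_0)$ and you then restrict to initial conditions with $V(u_0)$ bounded, whereas the lemma is used with a supremum over all starting histories; this weakening would also have to be addressed.
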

\begin{proof} The proof is similar to Theorem 6.3 in \cite{meyn1992stability}.  
\end{proof}
Let $\mathcal{F}_n$ and $\mathcal{F}_{l \leq j\leq n}$ be respectively defined in the following way $\mathcal{F}_n = \sigma\big(T_j \times X_j, \, \forall j \leq n\big)$,  $\mathcal{F}_{l \leq j\leq n} = \sigma\big(T_j \times X_j, \, \forall l \leq j \leq n\big)$. We also write $p^n_k$ as follows:
\begin{align*}
p^n_k(u) = |\mathbb{P}[U_{n} = u|\mathcal{F}_{k\leq j \leq n-1}\big] - \mathbb{P}[ U_{n} = u|\mathcal{F}_{j \leq n-1}]|, \quad \forall n \in \mathbb{N},\forall k \leq n-1, \forall u \in \mathbb{U}.
\end{align*}
\begin{lem} Under Assumptions  \ref{Assump:psigrowth}, \ref{Assump:Boundedness} and \ref{Assump:Regularityparams2}, we have 
\begin{align*}
p^n_k = \sup_{u \in \mathbb{U}} p^n_k(u) \underset{k \rightarrow \infty}{\rightarrow} 0,\quad a.s. \numberthis \label{Ineq:Weak_dependence_past_1}
\end{align*}
\label{lem:Weak_dependence_past_1}
\end{lem}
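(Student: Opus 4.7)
The plan is to reduce the lemma to the almost-sure convergence of a Doob martingale obtained by conditioning on the non-decreasing family of sub-$\sigma$-algebras $\mathcal{F}_{k\le j\le n-1}$, and then upgrade pointwise-in-$u$ convergence to uniform-in-$u$ convergence using the countability of $\mathbb{U}$ together with Scheff\'{e}'s lemma.

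First, for each fixed $u\in\mathbb{U}$ I introduce the bounded random variable $Z_u := \mathbb{P}[U_n=u\mid \mathcal{F}_{j\le n-1}]$, which is $\mathcal{F}_{j\le n-1}$-measurable and takes values in $[0,1]$. The tower property rewrites the smaller conditional probability as $\mathbb{P}[U_n=u\mid \mathcal{F}_{k\le j\le n-1}] = \mathbb{E}[Z_u\mid \mathcal{F}_{k\le j\le n-1}]$, so that
\begin{equation*}
p^n_k(u) \;=\; \bigl|\mathbb{E}[Z_u\mid \mathcal{F}_{k\le j\le n-1}] - Z_u\bigr|.
\end{equation*}
As the conditioning window $[k,n-1]$ is enlarged, the $\sigma$-algebras $\mathcal{F}_{k\le j\le n-1}$ are non-decreasing and generate $\mathcal{F}_{j\le n-1}$. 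L\'{e}vy's upward convergence theorem therefore yields, for every fixed $u$, that $\mathbb{E}[Z_u\mid \mathcal{F}_{k\le j\le n-1}]\to Z_u$ almost surely.

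Second, to obtain uniform convergence over $u$ I would exploit the countability of $\mathbb{U}$. Let $\Omega_u$ denote the full-measure convergence set associated with $u$; then $\Omega^\star = \bigcap_{u\in\mathbb{U}}\Omega_u$ still has probability one. For every $\omega\in\Omega^\star$, both $u\mapsto \mathbb{P}[U_n=u\mid \mathcal{F}_{k\le j\le n-1}](\omega)$ and $u\mapsto Z_u(\omega)$ are probability mass functions on $\mathbb{U}$, and the former converges pointwise to the latter. Scheff\'{e}'s lemma on the discrete measurable space $(\mathbb{U},\mathcal{U})$ then promotes this to convergence in total variation,
\begin{equation*}
\sum_{u\in\mathbb{U}}\bigl|\mathbb{P}[U_n=u\mid \mathcal{F}_{k\le j\le n-1}](\omega) - Z_u(\omega)\bigr|\;\longrightarrow\;0,
\end{equation*}
which dominates $p^n_k(\omega) = \sup_u |\cdot|$ and delivers the claimed almost-sure convergence.

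The main obstacle I anticipate is not in either convergence step itself but in the $\sigma$-algebra bookkeeping required to run them on a two-sided probability space: one must check that the stationary construction provided by Theorem \ref{lem:ExistenceProcHks} produces a space on which $\mathcal{F}_{j\le n-1}$ really equals the $\sigma$-algebra generated by $\bigcup_k \mathcal{F}_{k\le j\le n-1}$, and that the $Z_u$ are well defined as measurable functionals of the past. This is precisely where the three assumptions come in: Assumptions \ref{Assump:psigrowth} and \ref{Assump:Boundedness} make Theorem \ref{lem:ExistenceProcHks} applicable and hence produce the stationary process extending backward in time, while Assumption \ref{Assump:Regularityparams2} ensures that the intensity, hence each $Z_u$, is a measurable functional of that past. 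Once this setup is secured, the rest is the martingale-plus-Scheff\'{e} routine sketched above.
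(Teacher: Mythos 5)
Your proof is correct, but it takes a genuinely different and more elementary route than the paper's. The paper first rewrites $\mathbb{P}[U_n=u\mid\cdot]$ through the intensity representation of Lemma \ref{Lem:ComputeDeltaU}, handles the supremum over $u$ by extracting a maximizing sequence $u_j\to u_\infty$ with a case analysis on whether $u_\infty$ is finite (using the discreteness of $\mathbb{U}$ and the non-explosion bounds from Theorem \ref{lem:ExistenceProcHks}), and then concludes with its bespoke Lemma \ref{lem:ConvergenceCondExpect}, which handles conditional expectations whose integrand and conditioning $\sigma$-algebra vary simultaneously and which requires a uniform moment bound. You bypass all of this: the tower property reduces each $p^n_k(u)$ to $\bigl|\mathbb{E}[Z_u\mid\mathcal{F}_{k\le j\le n-1}]-Z_u\bigr|$ with $Z_u\in[0,1]$, L\'evy's upward theorem gives the fixed-$u$ limit (your reading of the limit as the conditioning window exhausting the whole past is exactly the paper's own reading, since its last step asserts $\mathcal{F}_{k\le r\le n-1}\to\mathcal{F}_{r\le n-1}$), and Scheff\'e's lemma on the countable space $\mathbb{U}$ upgrades pointwise convergence of the conditional pmfs to total-variation convergence, which dominates the supremum. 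What your approach buys is brevity, no moment hypotheses (the integrands are bounded by one), and in fact no use of Assumption \ref{Assump:Regularityparams2} at all — measurability of $Z_u$ with respect to $\mathcal{F}_{j\le n-1}$ is automatic from the definition of conditional expectation, so invoking the regularity assumption for that purpose is unnecessary rather than wrong; what the paper's route buys is explicit contact with the intensity functionals that it reuses in the surrounding ergodicity argument. Two small points worth recording if you write this up: fix versions of the conditional probabilities and note that, by conditional monotone convergence and countability of $\mathbb{U}$ and of the indices $k$, there is a single full-measure set on which all of them are genuine probability mass functions (this is what legitimizes Scheff\'e), and state explicitly that $\sigma\bigl(\bigcup_k\mathcal{F}_{k\le j\le n-1}\bigr)=\mathcal{F}_{j\le n-1}$, which is immediate from the definition of these $\sigma$-algebras.
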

\begin{proof} Using Lemma \ref{Lem:ComputeDeltaU}, we have 
\begin{align*}
p^n_k(u) & =  | \Esp[ \int_{\mathbb{R}_+}Z_n(t) \lambda_{n} (u,t) \, dt|\mathcal{F}_{k\leq j \leq n-1}] - \Esp[ \int_{\mathbb{R}_+}Z_n(t) \lambda_{n} (u,t) \, dt |\mathcal{F}_{j \leq n-1}]| \\
	     & = | \Esp[\int_{\mathbb{R}_+}Z_n(t) \lambda_{n} (u,t) \, dt|\mathcal{F}_{k\leq j \leq n-1}] - \int_{\mathbb{R}_+}Z_n(t) \lambda_{n} (u,t) \, dt|,
\end{align*}
with $\lambda_{n}(u,t) = \sum_{e \in E(U_{n-1},u)} \lambda_n(e,t)$, $\lambda_n(e,t) = \psi(e,U_{n-1},t + T_{n-1}, r_n(t))$, $r_n(t) = \sum_{j\leq n-1} \phi(e,U_{n-1},t+ T_{n-1}-T_j,X_j)$ and $Z_n(t) = e^{-\big[\int_{0}^{t}\sum_e \lambda_n(e,s) \, ds\big]} $.\\
Since $p_k = \sup_{ u \in \mathbb{U}} p_k^n(u)$, we can construct a sequence $(u_j)_{j\geq 0}$ such that $p^{n}_k(u_j) \underset{j \rightarrow \infty}{\rightarrow} p^n_k,$ $a.s.$ We write $u_j = (q^1_j,q^2_j,s_j)$. Without loss of generality, we can consider that $(q^1_j)_{j\geq 0}$ is monotonic by taking a sub-sequence of $(q^1_j)_{j\geq 0}$. Hence, there exists a limiting process $q^1_{\infty} $ such that $q^1_j \underset{j \rightarrow \infty}{\rightarrow} q^1_{\infty},$ $a.s$. By repeating this argument several times, we can always construct $(u_j)$ such that 
\begin{align*}
p^{n}_k(u_j) \underset{j \rightarrow \infty}{\rightarrow} p_k, \quad u_j\underset{j \rightarrow \infty}{\rightarrow} u_{\infty}, \qquad a.s.
\end{align*}
Let us prove that $\lambda_{n}(u_j,t)\underset{j \rightarrow \infty}{\rightarrow} \lambda_{n}(u_{\infty},t) $, $a.s$. To do so, we distinguish two sets $A_1 = \{w \in \Omega ; \,u_{\infty}(w)< \infty\} $ and $A_2 = \{w \in \Omega ; \, u_{\infty}(w) = \infty\} $. When $u_{\infty}< \infty$, we have $u_j = u_{\infty}$ for $j$ large enough since $\mathbb{U}$ is countable. This ensures that $E(U_{n-1},u_j) = E(U_{n-1},u_\infty),$ $a.s$ for $j$ large enough. Thus, we get
\begin{align*}
\lambda_{n}(u_j,t) \mathbf{1}_{ A_1}\underset{j \rightarrow \infty}{\rightarrow} \sum_{e \in E(U_{n-1},u_\infty)} \psi(e,U_{n-1},t + T_{n-1}, r_{n}(t))\mathbf{1}_{ A_1}, \quad a.s.
\end{align*} 
When $u_{\infty} = \infty$, we have $\sum_{e \in E(U_{n-1},u_\infty)} \lambda_{n}(e,t) = 0$ since $E(U_{n_{\infty}-1},u_\infty) = \varnothing $. Using $\sum_{e \in E}\lambda_{n_{j}}(e,t) < \infty,$ $a.s$, see Step (ii) in the proof of Theorem \ref{lem:ExistenceProcHks}, we deduce that $\sum_{e \in E(U_{n_{j}-1},C^c)} \lambda_{n_{j}}(e,t) \underset{c \rightarrow \infty}{\rightarrow}  0$, $a.s$ with $C^c = \{u \in \mathbb{U}; \, u > c,\, c.w \} $, $c> 0$ and $c.w$ means component-wise. Since $ E(U_{n_{j}-1},u_j) \subset E(U_{n_{j}-1},C^c) $ for $j$ large enough, we get $\sum_{e \in E(U_{n_{j}-1},u_j)} \lambda_{n_{j}}(e,t) \underset{j \rightarrow \infty}{\rightarrow}  0,$ $a.s$ which means that 
\begin{align*}
\lambda_{n}(u_j,t) \mathbf{1}_{A_2}\underset{j \rightarrow \infty}{\rightarrow} \sum_{e \in E(U_{n-1},u_\infty)} \psi(e,U_{n-1},t + T_{n-1}, r_{n}(t))\mathbf{1}_{ A_2} = 0, \quad a.s,
\end{align*} 
and proves $\lambda_{n}(u_j,t)\underset{j \rightarrow \infty}{\rightarrow} \lambda_{n}(u_{\infty},t) $, $a.s$.\\
Additionally, we have $\Esp[\sup_{n,s} \sum_{e}\lambda_{n}(e,s)] < \infty$, see Step (iv) in the proof of Theorem \ref{lem:ExistenceProcHks}. Thus, we get $\Esp[\sup_{n,u,s}\lambda_{n}(u,s)] < \infty$. Since $\sum_{e}\lambda_n(e,s)\geq \underline{\psi}$ under Assumption \ref{Assump:Boundedness}, we have $ Z_n(t)\leq e^{- \underline{\psi}t},$ $a.s.$ Then, we can apply the dominated convergence theorem to show that 
$$
\int_{\mathbb{R}_+}Z_{n}(t) \lambda_{n} (t,u_j) \, dt \underset{j \rightarrow \infty}{\rightarrow} \int_{\mathbb{R}_+}Z_{n}(t) \lambda_{n} (t,u_{\infty}) \, dt,\quad a.s.
$$
Furthermore, we have
\begin{align*}
\Esp \big[ \sup_{j}\int_{\mathbb{R}_+}Z_{n}(t) \lambda_{n} (u_j,t) \, dt \big] & \leq 
\Esp \big[\int_{\mathbb{R}_+} \sup_{j}  e^{-\underline{\psi} t}\lambda_{n} (u_j,t) \, dt \big]   \overset{\text{Fubini}}{=} \int_{\mathbb{R}_+}  e^{-\underline{\psi} t}\Esp \big[\sup_{j}\lambda_{n} (u_j,t)\big] \, dt,
\end{align*}
with $\Esp \big[\sup_{j}\lambda_{n} (u_j,t)\big] < \infty$. Hence, we can use the conditional dominated convergence to show 
\begin{align*}
\Esp \big[\int_{\mathbb{R}_+}Z_{n}(t) \lambda_{n} (u_j,t) \, dt |\mathcal{F}_{k\leq r \leq n-1}\big] \underset{j \rightarrow \infty}{\rightarrow} \Esp \big[ \int_{\mathbb{R}_+}Z_{n}(t) \lambda_{n} (u_\infty,t) \, dt |\mathcal{F}_{k\leq r \leq n-1}\big].
\end{align*}
Finally, since $\mathcal{F}_{k\leq r \leq n-1} \underset{k \rightarrow \infty}{\rightarrow} \mathcal{F}_{r \leq n-1}$, we can apply Lemma \ref{lem:ConvergenceCondExpect} to deduce that 
\begin{align*}
p^n_k \underset{k \rightarrow \infty}{\rightarrow} 0,\, a.s.
\end{align*}
 This completes the proof.
\end{proof}
Let $\Delta T_n = T_n - T_{n-1}$ be the inter-arrival time between $n$-th jump and the $n-1$-th jump with $T_n$ the time of the $n$-th event. Let $N^{\infty}= (T^{\infty}_i,X^{\infty}_i)$ be the stationary process constructed in Lemma \ref{lem:ExistenceProcHks} and $N = (T_i,X_i)$  be a point process whose intensity satisfies (\ref{Eq:IntensityExpressionHawkes}). We write $U^{\infty} = ({Q^1}^{\infty},{Q^2}^{\infty},S^{\infty})$ (resp. $U = (Q^1,Q^2,S)$) for the order book state associated to $N^{\infty}$ (resp. $N$). We denote by $\lambda^{\infty}$ (resp. $\lambda$) the intensity of $N^{\infty}$ (resp. $N$). We have the following result.
\begin{lem} Under Assumptions  \ref{Assump:psigrowth}, \ref{Assump:Negativeindividualdrift}, \ref{Assump:Boundedness} and \ref{Assump:Regularityparams2}, the process $(U_n)_{n \geq 0}$ is $f$-geometrically ergodic, see 15.7 in \cite{meyn2012markov}, in the sense that there exists $r >1$ such that
$$
\sup_{\mathbf{x} \in W_0}\sum_{n \geq 1} \Esp_{\mathbf{x}} \big[\|f(U_n) - f(U^{\infty}_n)\|\, r^n \big] < \infty.
$$
\label{lem:BoundednesSum}
\end{lem}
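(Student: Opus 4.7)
The strategy is to combine the geometric-return bound of Lemma \ref{lem:FiniteSumU_n} with the asymptotic past-forgetting of Lemma \ref{lem:Weak_dependence_past_1} in order to build a coupling between the chain $(U_n)$ started from an arbitrary initial past $\mathbf{x} \in W_0$ and the stationary chain $(U^{\infty}_n)$. The drift inequality from Lemma \ref{Lem:negDriftCondition}, iterated, guarantees that both chains visit the reference set $\alpha$ infinitely often and that the return times $\tau_\alpha$ have geometric $f$-moments uniformly in the starting condition. Hence we can produce a sequence of joint return epochs $\sigma_1 < \sigma_2 < \cdots$ at which $U_{\sigma_k}$ and $U^{\infty}_{\sigma_k}$ both lie in $\alpha$, and whose distribution has a finite exponential moment.

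The next step is to couple the two processes at the level of one-step transitions each time they meet in $\alpha$. Even though both chains share the current order-book value, their conditional laws still depend on their full past through $\lambda$ and $\lambda^{\infty}$. Here Assumption \ref{Assump:Regularityparams2} and Lemma \ref{lem:Weak_dependence_past_1} do the work: the $p^n_k$-term shows that the conditional law of $U_n$ given $\mathcal{F}_{k \leq j \leq n-1}$ converges uniformly (as $k \to -\infty$) to the full conditional law, so after enough time has passed since the last common ancestor, the two transition kernels agree up to total variation distance going to zero. A standard maximal coupling at each joint visit to $\alpha$ then glues them together with a non-degenerate success probability $p_0 > 0$, bounded below uniformly in $\mathbf{x}$. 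Letting $\tau^*$ denote the first successful coupling time, $\mathbb{P}_{\mathbf{x}}[\tau^* > n]$ decays geometrically because it is dominated by a geometric sum of the excursion lengths $\sigma_{k+1}-\sigma_k$, each of which has a common exponential moment.

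Once $\tau^*$ is controlled, we estimate the target sum by splitting on $\{\tau^* \leq n\}$ and $\{\tau^* > n\}$:
\begin{align*}
\Esp_{\mathbf{x}}\big[\|f(U_n)-f(U^{\infty}_n)\|\big]
  \leq \Esp_{\mathbf{x}}\big[(f(U_n)+f(U^{\infty}_n))\mathbf{1}_{\tau^* > n}\big].
\end{align*}
Applying Cauchy--Schwarz together with the uniform-in-$n$ control $\sup_n \Esp_{\mathbf{x}}[f(U_n)^2] < \infty$ inherited from Lemma \ref{Lem:negDriftCondition} (and its counterpart under $\mu$), and with the geometric tail of $\tau^*$, yields $\Esp_{\mathbf{x}}[\|f(U_n)-f(U^{\infty}_n)\|] \leq C(\mathbf{x})\, \gamma^n$ for some $\gamma < 1$. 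Choosing $r \in (1,1/\gamma)$ and summing in $n$ produces the desired bound, uniform in $\mathbf{x}$ because all the constants above (hitting-time moments, success probability $p_0$, second moment of $f(U_n)$) depend on $\mathbf{x}$ only through $V(\mathbf{x})$, which in turn can be absorbed into $C$ after one application of the drift inequality that sends the chain back into $\alpha$.

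The main obstacle is the absence of a genuine Markovian structure: the transitions depend on the entire past through $\phi$, so neither a small-set condition in the Meyn--Tweedie sense nor a one-shot Poisson-thinning coupling of $N$ and $N^\infty$ is available directly. Our scheme must therefore interlace two very different mechanisms, namely the geometric drift that synchronises the \emph{current} states via $\alpha$, and the past-forgetting of Lemma \ref{lem:Weak_dependence_past_1} that synchronises the \emph{future} laws; verifying that these two can be combined into a single coupling whose failure probability still decays geometrically is the technical heart of the argument.
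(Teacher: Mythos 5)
Your plan has two genuine gaps, both located exactly at the point you yourself flag as the technical heart. First, the maximal coupling at joint visits to $\alpha$ with a success probability $p_0>0$ uniform in the two histories is not available. Lemma \ref{lem:Weak_dependence_past_1} compares, for one and the same trajectory, the conditional law of $U_n$ given a long recent window $\mathcal{F}_{k\leq j\leq n-1}$ with the conditional law given the whole past; it says nothing about the total-variation overlap of the transition laws generated by two \emph{different} pasts whose current order-book values happen to lie in $\alpha$, and no minorization or small-set hypothesis is assumed in the paper (the bound $\underline{\psi}$ of Assumption \ref{Assump:Boundedness} controls the total intensity, not the probability of any specific transition uniformly over pasts). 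Second, even granting a successful one-step coupling, in this non-Markovian model the two processes do not merge afterwards: $\lambda$ and $\lambda^{\infty}$ keep depending on the two divergent pre-coupling histories through the kernel $\phi$, so you cannot impose $U_n=U^{\infty}_n$ for $n\geq\tau^*$ without destroying the marginal laws, and the inequality $\Esp_{\mathbf{x}}\big[\|f(U_n)-f(U^{\infty}_n)\|\big]\leq\Esp_{\mathbf{x}}\big[(f(U_n)+f(U^{\infty}_n))\mathbf{1}_{\tau^*>n}\big]$ is therefore unjustified. (Note also that the paper in fact establishes the statement in the $f$-norm sense, i.e.\ it bounds $\sum_n|\Esp_{\mathbf{x}}[f(U_n)]-\pi(f)|\,r^n$, which avoids having to build a pathwise coupling at all.)

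The paper's proof handles precisely these two difficulties by a different device: a first-entrance last-exit decomposition with respect to the pseudo-atom $\alpha^k$ of pasts ending with a \emph{fixed} block $\mathbf{y}_{-k+1\leq j\leq 0}$ of length $k$, so that every visit to $\alpha^k$ automatically synchronises a long recent window with the reference history; the residual influence of the earlier, disagreeing past appears as the terms $\Delta_{\alpha^k}P^{n-j}$, which Lemma \ref{lem:Weak_dependence_past_1} and dominated convergence make small once $k=k(\bar n)$ is chosen with $\epsilon(\bar n)\leq c_1(r-1)/(r^{\bar n+1}-1)$, while Lemma \ref{lem:FiniteSumU_n} together with Kendall's theorem supplies the geometric $f$-modulated moments of $\tau_{\alpha^k}$ uniformly in $\mathbf{x}$. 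If you wish to keep a coupling formulation, you would have to rebuild an analogue of this construction, coupling only when the chain has reproduced a prescribed long block of recent history and quantifying the fading influence of the earlier past through the integrability of $\phi$; as written, your sketch assumes rather than proves both the uniform success probability and the post-coupling coalescence, so the argument does not go through.
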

\begin{proof}
Let $P^n(\mathbf{x},A)$ be the probability of being in the set $A = \{(t_k,x_k)_{k \leq 0} \in W_0; \, x_k = (n_k,t_k,b_k,s_k,\tilde{u}_k,u_k,a_k),\, u_0 \in a\}$, $ a\in \mathcal{U}$, with $\mathcal{U}$ the $\sigma$-algebra generated by the discrete topology on $\mathbb{U}$, after $n$ jumps conditional on $\mathbf{x} = (t_k,x_k)_{k \leq 0} \in W_0 = (\mathbb{R}_+ \times E)^{\mathbb{N}^-}$. Let $\mathbf{y} \in W_0$. We write $\pi$ for the stationary distribution of the process $U^{\infty}_n = ({Q^1}^{\infty}_n, {Q^2}^{\infty}_n, {S}^{\infty}_n)$ and $ \tau_{\alpha^k}$ for the first entrance time of $U$ to the set $\alpha^k = \{\mathbf{z} \in W_0; \, \mathbf{z}_{-k+1\leq j \leq 0} = \mathbf{y}_{-k+1\leq j \leq 0}\}$. Using the first-entrance last-exit decomposition of $P^n(\mathbf{x},A)$, see Section 8.2 in \cite{meyn2012markov}, we have

\begin{align*}
P^n(\mathbf{x},A) &= \, _{\alpha^k}P^{n}(\mathbf{x},A) + \sum_{j= 1}^n  \sum_{i=1}^j\big[ \int_{\mathbb{U}^{j - i}_{\mathbf{u},\alpha^k}}\int_{\mathbb{U}^{i}_{\mathbf{x},\alpha^k}} \, _{\alpha^k} P^{i}(\mathbf{x},\,du)P^{j-i}(\mathbf{u}, dv)\,_{\alpha^k}P^{n-j}(\mathbf{v},A) \big]\\
				& = \, _{\alpha^k}P^{n}(\mathbf{x},A) + \sum_{j= 1}^n \sum_{i=1}^j\big[ \int_{\mathbb{U}^{j - i}_{\mathbf{u},\alpha^k}}\int_{\mathbb{U}^{i}_{\mathbf{x},\alpha^k}} \, _{\alpha^k} P^{i}(\mathbf{x},\,du)P^{j-i}(\mathbf{u}, dv)\,_{\alpha^k}P^{n-j}(\mathbf{y},A) \big] \\
				& + \sum_{j= 1}^n \sum_{i=1}^j\big[ \int_{\mathbb{U}^{j - i}_{\mathbf{u},\alpha^k}}\int_{\mathbb{U}^{i}_{\mathbf{x},\alpha^k}} \, _{\alpha^k} P^{i}(\mathbf{x},\,du)P^{j-i}(\mathbf{u}, dv)\,|_{\alpha^k}P^{n-j}(\mathbf{v},A) - \,_{\alpha^k}P^{n-j}(\mathbf{y},A)|\big] .\numberthis \label{Eq:ProbDecomposition1}
\end{align*}

with $_{\alpha^k}P^{n}(\mathbf{x} ,A) = \mathbb{P}[(T_k,X_k)_{k \leq 0} =\mathbf{x},\, U_n \in A, \, \tau_{\alpha^k} \geq n ]$ and $\mathbb{U}^{i}_{\mathbf{x},\alpha^k} = \{\mathbf{z}\in \alpha^k\, ; \, (\mathbf{z}_{k})_{k \leq i} = \mathbf{x}\}$. Using $\Esp_{\mathbf{x}} [\tau_{\mathbf{\alpha^k}} ] < \infty$ for all $\mathbf{x} \in S$ and the arguments used in the proof of Theorem 10.2.1 in \cite{meyn2012markov}, we deduce that the stationary distribution admits the following representation:
\begin{align*}
\pi(A) & = \Esp_{\mathbf{y}} [\tau_{\mathbf{\alpha^k}} ]^{-1} \Esp_{\mathbf{y} } [\sum_{j=1}^{\tau_{\alpha^k}} \mathbf{1}_{\bar{U}_n \in A} ] = \pi(\alpha^k) \sum_{j=1}^{\infty}  {_{\alpha^k}}P^{j}(\mathbf{y} ,A). \numberthis \label{Eq:StationaryDistrib1}
\end{align*}
By combining (\ref{Eq:ProbDecomposition1}) and (\ref{Eq:StationaryDistrib1}), we get
\begin{align*}
P(\mathbf{x},A) - \pi(A) & = \, _{\alpha^k}P^{n}(\mathbf{x},A) + \bigg[\big(\,_{\alpha^k} P(\mathbf{x}) * P(\alpha^k) -  \pi(\alpha^k)\big)*\, _{\alpha^k} P(\mathbf{y})\bigg]_n(A) + \pi(\alpha^k) \sum_{j\geq n+1} {_{\alpha^k}}P^{j}(\mathbf{y},A) \\
								& + \big(\, _{\alpha^k} P(\mathbf{x}) * P(\alpha^k)\big)* \big(\,_{\alpha^k}P(\alpha^k) - \,_{\alpha^k} P(\mathbf{y})\big)_{n}(A). \numberthis \label{Eq:StationaryDistribErr1}
\end{align*}
with $*$  the integrated Cauchy product between two sequences which is defined as follows:
\begin{align*}
[u(B)*v(C)]_n(A) = \sum_{i = 1}^n\int_{\mathbb{U}^{i}_{B,C}} \, u_i(B,\,du)v_{n-i}(\mathbf{u}, A), \quad \forall (B,C,A) \in (\mathcal{W}_0)^3
\end{align*}
with $(u_n)_{n\geq 0}$ and $(v_n)_{n\geq 0}$ two sequences such that $u_n, v_n:(\mathcal{W}_0)^2 \rightarrow \mathbb{R} $. Let $f$ be the function defined in Lemma \ref{lem:FiniteSumU_n}, $\pi(f) =  \int_{\mathcal{U}} \pi(du)f(u) < \infty$, $\Esp_{\mathbf{x}}[f(U_n)] = \int_{\mathcal{U}} P^n(\mathbf{x},du)f(u)$ and $| P^n(\mathbf{x},.) - \pi|_f  = |\Esp_{\mathbf{x}}[f(U_n)] - \pi(f) | $. Using (\ref{Eq:StationaryDistribErr1}), we have 
\begin{align*}
| P^n(\mathbf{x},.) - \pi|_f  & \leq \Esp_{\mathbf{x}}[f(U_n) \mathbf{1}_{\tau_{\alpha^k} \geq n}] +  [ _{\alpha^k} P(\mathbf{x})* P(\alpha^k) -  \pi(\alpha^k)]* t^f_n \\
								  & + \pi(\alpha^k) \sum_{j \geq n+1}t^f_j + | _{\alpha^k} P(\mathbf{x}) * P(\alpha^k)* \Delta t^f_n |, \numberthis \label{Eq:ErrorTot1}
\end{align*}
with $ t^f_n = \Esp_{\mathbf{y}}[f(U_n) \mathbf{1}_{\tau_{\alpha^k} \geq n}]$ and $\Delta t^f_n(\mathbf{v}) = (\Esp_{\mathbf{v}}[f(U_n) \mathbf{1}_{\tau_{\alpha^k} \geq n}] -  t^f_n)$. To prove geometric ergodicity  we have to show 
\begin{align*}
\sup_{\mathbf{x}} \sum_{n \geq 1}| P^n(\mathbf{x},.) - \pi|_f r^n < \infty, \numberthis \label{Eq:GeomErgod}
\end{align*}
with $r > 1$. Let us take $\bar{n} \in \mathbb{N}^*$ and the delay $k(\bar{n}) \in \mathbb{N}$ associated to $\alpha^k$ depending on $\bar{n}$. Using (\ref{Eq:ErrorTot1}), we have 
\begin{align*}
\sum_{n\geq 1}^{\bar{n}}| P^n(\mathbf{x},.) - \pi|_f r^n & \leq \sum_{n\geq 1}^{\bar{n}} \Esp_{\mathbf{x}}[f(U_n) \mathbf{1}_{\tau_{\alpha^k} \geq n}]r^n + \sum_{n\geq 1}^{\bar{n}} |\big( _{\alpha^k} P(\mathbf{x})* P(\alpha^k) -  \pi(\alpha^k) \big)* t^f_n r^n | \\
										      & \quad + \pi(\alpha^k)\sum_{n\geq 1}^{\bar{n}}\sum_{j \geq n+1}t^f_jr^n + \sum_{n\geq 1}^{\bar{n}} [ _{\alpha^k} P(\mathbf{x}) * P(\alpha^k)]* \Delta t^f_n r^n = \text{ (i) + (ii) + (iii) + (iv) }.
\end{align*}
The error term (i) can be dominated by 
\begin{align*}
\sum_{n \geq 1}^{\bar{n}} \Esp_{\mathbf{x}}[f(U_n) \mathbf{1}_{\tau_{\alpha^k} \geq n}]r^n \leq \sum_{n \geq 1} \Esp_{\mathbf{x}}[f(U_n) \mathbf{1}_{\tau_{\alpha^k} \geq n}]r^n = \Esp_{\mathbf{x}} \big[\sum_{n = 1}^{\tau_{\alpha^k}} f(U_n) r^n \big].\numberthis \label{Ineq:ErrTermi}
\end{align*} 
The error term (iii) can be bounded by 
\begin{align*}
\pi(\alpha^k)\sum_{n \geq 1}^{\bar{n}}\sum_{j \geq n+1}t^f_j r^n \leq \pi(\alpha^k)\sum_{n \geq 1} \sum_{j \geq n+1}t^f_j r^n \leq \frac{\pi(\alpha^k)}{r-1} \sup_{\mathbf{v}}\Esp_{\mathbf{v}} \big[\sum_{n = 1}^{\tau_{\alpha^k}} f(U_n) r^n\big]. \numberthis \label{Ineq:ErrTermii}
\end{align*}
Now we move to the error term (iv). We have
\begin{align*}
\big( _{\alpha^k} P(\mathbf{x}) * P(\alpha^k) \big)* \Delta t^f_n \leq \sum_{j\leq n,\, i \leq j}[ \int_{\mathbb{U}^{j - i}_{\mathbf{u},\alpha^k}\times \mathbb{U}^{i}_{\mathbf{x},\alpha^k}\times W_0} \, _{\alpha^k} P^{i}(\mathbf{x},\,du)P^{j-i}(\mathbf{u}, dv)\big]\,\Delta_{\alpha^k}P^{n-j}(dw)f(w),
\end{align*}
with $\Delta_{\alpha^k}P^{n-j}(dw) = |\,_{\alpha^k}P^{n-j}(\mathbf{v},dw) - \,_{\alpha^k}P^{n-j}(\mathbf{y},dw)|$. Using Equations \eqref{Eq:ProbDecomposition1} and \eqref{Eq:StationaryDistrib1}, we get
\begin{align*}
\left\{
\begin{array}{l}
\Delta_{\alpha^k}P^{n-j}(dw) \leq \,_{\alpha^k}P^{n-j}(\mathbf{v},dw) + \,_{\alpha^k}P^{n-j}(\mathbf{y},dw) ,\\
\\
\sum_{j\leq n,\, i \leq j}[ \int_{\mathbb{U}^{j - i}_{\mathbf{u},\alpha^k}\times \mathbb{U}^{i}_{\mathbf{x},\alpha^k}\times\mathcal{W}} \, _{\alpha^k} P^{i}(\mathbf{x},\,du)P^{j-i}(\mathbf{u}, dv)\big]\,_{\alpha^k}P^{n-j}(\mathbf{v},dw)f(w) \leq  \Esp_{\mathbf{x}}\big[ f(U_n) r^n\big] < \infty,\\
\\
\sum_{j\leq n,\, i \leq j}[ \int_{\mathbb{U}^{j - i}_{\mathbf{u},\alpha^k}\times \mathbb{U}^{i}_{\mathbf{x},\alpha^k}\times\mathcal{W}} \, _{\alpha^k} P^{i}(\mathbf{x},\,du)P^{j-i}(\mathbf{u}, dv)\big]\,_{\alpha^k}P^{n-j}(\mathbf{y},dw)f(w) \leq  \Esp_{\mathbf{y}}\big[ f(U_{n-j}) r^{n}\big] < \infty,
\end{array}
\right.
\end{align*}
Since $\Delta_{\alpha^k}P^{n-j} \underset{k \rightarrow \infty}{\rightarrow} 0$, see Lemma \ref{lem:Weak_dependence_past_1}, the dominated convergence theorem ensures that 
\begin{align*}
\big( _{\alpha^k} P(\mathbf{x}) * P(\alpha^k) \big)* \Delta t^f_n \underset{k \rightarrow \infty}{\rightarrow} 0.
\end{align*}
Thus, there exists $\bar{k}(\bar{n})$ such that $\big( _{\alpha} P(\mathbf{x}) * P(\alpha) \big)* \Delta t^f_n \leq \epsilon(\bar{n})$  for any $k \geq  \bar{k}(\bar{n})$. Hence the error term (iv) can be majorated by 
\begin{align*}
\sum_{n\geq 1}^{\bar{n}} [ _{\alpha^k} P(\mathbf{x}) * P(\alpha^k)]* \Delta t^f_n r^n \leq \epsilon(\bar{n})\frac{r^{\bar{n}+1} - 1}{r - 1}, \numberthis \label{Ineq:Errorbarn}
\end{align*}
which means that we have to choose $\epsilon(\bar{n}) < c_1\frac{r - 1}{r^{\bar{n}+1} - 1}$ with $c_1$ a positive constant. Finally, using the property 
\begin{align*}
\underset{n \rightarrow \infty}{\lim} (u * v)_n = \underset{n \rightarrow \infty}{\lim} u_n \times \underset{n \rightarrow \infty}{\lim} v_n , \numberthis \label{Eq:PropCProd}
\end{align*}
we dominate the error term (ii) by 
\begin{align*}
 |\sum_{n \geq 1}^{\bar{n}}  \big(\,_{\alpha^k}P(\mathbf{x})*P(\alpha^k) -  \pi(\alpha^k)\big)* t^f_n r^n |& \leq   \big(\sum_{n \geq 1} |\,\big[_{\alpha^k}P(\mathbf{x}) * P(\alpha^k)\big]_n(\alpha^k) -  \pi(\alpha^k)|r^n \big)\sup_{\mathbf{v}}\Esp_{\mathbf{v}} \big[\sum_{n = 1}^{\tau_{\alpha^k}} f(U_n) r^n \big]. \numberthis \label{Ineq:ErrTermiii}
\end{align*}
Additionally, we have 
\begin{align*}
|\big[_{\alpha^k} P(\mathbf{x})* P(\alpha^k)\big]_{n}(\alpha^k) -  \pi(\alpha^k)| & = |\big[_{\alpha^k} P(\mathbf{x}) * (P(\alpha^k) -  \pi(\alpha^k))\big]_n(\alpha^k) - \pi(\alpha^k) \sum_{i \geq n+1} \,_{\alpha^k}P^{i}(\mathbf{x},\alpha^k)|\\
																		   & = |\big[\,_{\alpha^k} P(\mathbf{x}) * (P(\alpha^k) -  P(\mathbf{y}))\big]_n(\alpha^k) + \big[_{\alpha^k} P(\mathbf{x}) * (P(\mathbf{y}) -  \pi(\alpha^k))\big]_n(\alpha^k) \\
																		   & \quad \quad - \pi(\alpha^k) \sum_{i \geq n+1} \,_{\alpha^k}P^{i}(\mathbf{x},\alpha^k)| \\
																		   & \leq \, \big[_{\alpha^k} P(\mathbf{x}) * |P(\mathbf{y}) -  P(\alpha^k)|\big]_n(\alpha^k) + \big[\, _{\alpha^k} P(\mathbf{x})* |P(\mathbf{y}) - \pi(\alpha^k)|\big]_n(\alpha^k) \\
																		   & \quad + \pi(\alpha^k) \sum_{i \geq n+1} \,_{\alpha^k}P^{i}(\mathbf{x}^u,\alpha^k),
\end{align*}
for any $n \in \mathbb{N}$. Using Equation \eqref{Eq:PropCProd}, we get
\begin{align*}
\sum_{n \geq 1}^{\bar{n}} \big[|\,_{\alpha^k}P(\mathbf{x}) * P(\alpha^k) -  \pi(\alpha^k)&|\big]_n(\alpha^k)r^n  \leq \sum_{n \geq 1}^{\bar{n}}\big[\, _{\alpha^k} P(\mathbf{x})* |P^{\mathbf{y}}(\alpha^k) - P(\alpha^k)|\big]_n(\alpha^k) r^n \\
						& + \sum_{n \geq 1}^{\bar{n}}\big[\, _{\alpha^k} P(\mathbf{x})* |P(\mathbf{y}) - \pi(\alpha^k)|\big]_n(\alpha^k) r^n + \pi(\alpha^k) \sum_{\substack{n \geq 1\\i \geq n+1}} \,_{\alpha^k}P^{i}(\mathbf{x},\alpha^k)r^n \\
						& \leq \big(\sum_{n \geq 1}\, _{\alpha^k}P^{n}(\mathbf{x},\alpha^k) r^n \big) \big( \sum_{n \geq 1}\sup_{\mathbf{w} \in \alpha^k}|P^{n}(\mathbf{y},\alpha^k) -  P^n(\mathbf{w},\alpha^k)|r^n \big) \\ 
						&  + \big(\sum_{n \geq 1}\, _{\alpha^k}P^{n}(\mathbf{x},\alpha^k) r^n \big) \big( \sum_{n \geq 1}|P^{n}(\mathbf{y},\alpha^k) -  \pi(\alpha^k)|r^n \big) \\
						& + \pi(\alpha^k) \sum_{n \geq 1}\sum_{i \geq n+1} \,_{\alpha^k}P^{i}(\mathbf{x},\alpha^k)r^n = (1) + (2) + (3).
\end{align*}
The term $(2)$ is bounded by 
\begin{align*}				
(2)			& \leq \, \Esp_{\mathbf{x}} \big[r^{\tau_{\alpha}} \big] \sup_{\mathbf{y}}\big( \sum_{n \geq 1}|P^{n}(\mathbf{y},\alpha) -  \pi(\alpha)|r^n \big).
\end{align*}
Since the Kendall theorem ensures that $E_{\mathbf{x}} \big[r^{\tau_{\alpha^k}} \big] < \infty$ and $ \sum_{n \geq 1} |P^{n}(\mathbf{x},\alpha^k) -  \pi(\alpha^k)|r^n < \infty $ are equivalent, the quantity $(1)$ is finite if and only if $\sup_{\mathbf{v}}E_{\mathbf{v}} \big[r^{\tau_{\alpha^k}} \big] < \infty$. The term $(1)$ is majorated by 
\begin{align*}
(1)			& \leq \, \Esp_{\mathbf{x}} \big[r^{\tau_{\alpha^k}} \big] \big( \sum_{n \geq 1}\sup_{\mathbf{w}\in \alpha^k}|P^{n}(\mathbf{w},\alpha^k) -  P^n(\mathbf{y},\alpha^k)|r^n \big).
\end{align*}
To ensure that the sequence $v(\bar{n}) =  \sum_{n \geq 1}^{\bar{n}}\sup_{\mathbf{w}\in \alpha^k}|P^{n}(\mathbf{w},\alpha^k) -  P^n(\mathbf{y},\alpha^k)|r^n$ is bounded, the put a dependence $k$ and $\bar{n}$. Let $\epsilon^1(\bar{n}) > 0 $. By following the same arguments used in the proof of Inequality (\ref{Ineq:Errorbarn}), there exists $\bar{k}^1(\bar{n})$ such that for any $k \geq  \bar{k}^1(\bar{n})$, we have
\begin{align*}
\sum_{n \geq 1}^{\bar{n}}\sup_{\mathbf{w}\in \alpha^k}|P^{n}(\mathbf{y},\alpha^k) -  P^n(\mathbf{w},\alpha^k)|r^n \leq \epsilon^1(\bar{n}) \frac{r^{\bar{n} + 1} - 1}{r - 1}.
\end{align*} 
By taking $\epsilon^1(\bar{n}) \leq c_1\frac{ r - 1}{r^{\bar{n} + 1} - 1} $, we get $(1) \leq c_1\sup_{\mathbf{x}}\Esp_{\mathbf{x}} \big[r^{\tau_{\alpha^k}} \big]$. Furthermore, the term $(3)$ can be dominated by 
$(3) \leq \Esp_{\mathbf{x}} \big[r^{\tau_{\alpha^k}}\big]$. Thus, we deduce that 
\begin{align*}
(ii) \leq 
c_1 \Esp_{\mathbf{x}} \big[r^{\tau_{\alpha}} \big] (1 + \sup_{\mathbf{v}}\sum_{n \geq 1} |P^{n}(\mathbf{v},\alpha) -  \pi(\alpha)|r^n \big)\sup_{\mathbf{v}}\Esp_{\mathbf{v}} \big[\sum_{n = 1}^{\tau_{\alpha^k}} f(U_n) r^n \big]. \numberthis \label{Ineq:Errtermiii}
\end{align*}
By combining Inequalities (\ref{Ineq:ErrTermi}), (\ref{Ineq:ErrTermii}), (\ref{Ineq:ErrTermiii}) and (\ref{Ineq:Errtermiii}), we have (\ref{Eq:GeomErgod}) when $\sup_{\mathbf{x}}E_{\mathbf{x}} \big[\sum_{n = 1}^{\tau_{\alpha^k}} f(U_n) r^n \big]$ and $\sup_{\mathbf{x}}E_{\mathbf{x}} \big[r^{\tau_{\alpha^k}} \big] $ are both finite. Since $E_{\mathbf{x}} \big[\sum_{n = 1}^{\tau_{\alpha^k}} f(U_n) r^n \big] < \infty$ implies $E_{\mathbf{x}} \big[r^{\tau_{\alpha^k}} \big] < \infty $, we only need to prove
\begin{align*}
\Esp_{\mathbf{x}} \big[ \sum_{n = 1}^{\tau_{\alpha^k}} f(U_n)r^n \big] <  \infty.
\end{align*}
This last inequality is satisfied thanks to Lemma \ref{lem:FiniteSumU_n}.
\end{proof}
\begin{lem} Under Assumptions  \ref{Assump:psigrowth}, \ref{Assump:Boundedness} and \ref{Assump:Regularityparams2}, the process $\bar{U}$ is ergodic. 
\label{Lemma:ErgodicityIncreasing}
\end{lem}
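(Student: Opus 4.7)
The plan is to establish ergodicity of $\bar U$ via an intensity-coupling argument. Theorem \ref{lem:ExistenceProcHks} provides a stationary version $N^{\infty}$ of the point process with intensity $\lambda^{\infty}$ and associated order book $U^{\infty}$; let $N$, $\lambda$, $U$ denote a copy started from an arbitrary initial past. By Lemma \ref{Lem:CondIntensity}, coupling of $N$ and $N^{\infty}$, and thence convergence of $\bar U_{t}$ to the invariant law in total variation, is implied by
\begin{equation*}
\int_{0}^{\infty} f(t)\, dt < \infty, \qquad f(t) := \sup_{e \in E} \Esp\big[|\lambda_{t}(e) - \lambda^{\infty}_{t}(e)|\big].
\end{equation*}
Hence the entire argument reduces to establishing the integrability of $f$.

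The core step is the derivation of the nonlinear integral inequality for $f$ already announced in the outline preceding the lemma. Starting from (\ref{Eq:IntensityExpressionHawkes}) and inserting the intermediate term $\psi(e, U^{\infty}_{t^-}, t, z_{t})$, with $z_{t}$ the Hawkes history built from $U_{t^-}$ and past jumps of $N$, I split $|\lambda_{t}(e) - \lambda^{\infty}_{t}(e)|$ into an order-book mismatch piece and a Hawkes-history mismatch piece. Since $\mathbb{U}$ is countable, the first piece collapses to an indicator, whose expectation is bounded by $c_{2}\Esp[\|U_{t} - U^{\infty}_{t}\|]$. For the second piece, Assumption \ref{Assump:Regularityparams2} gives the locally Lipschitz control $|\bar\psi(z_{t}) - \bar\psi(z^{\infty}_{t})| \leq K|z_{t}-z^{\infty}_{t}|(1+z_{t}^{n_{1}}+(z^{\infty}_{t})^{n_{1}})$; applying H\"older with exponents $\beta > 1$ and $p > 1$, then Minkowski on the past-event difference $z_{t}-z^{\infty}_{t}$, and using the uniform moment bounds on $\lambda$ from Assumption \ref{Assump:psigrowth} and the proof of Theorem \ref{lem:ExistenceProcHks}, yields
\begin{equation*}
f(t) \leq u(t) + c_{3}\, G\!\left(\int_{0}^{t} \bar h(t-s)\, f(s)\, ds\right),
\end{equation*}
with $u(t) = c_{2}\Esp[\|U_{t}-U^{\infty}_{t}\|] + c_{1}\Esp[\|U_{t}-U^{\infty}_{t}\|^{\beta p}]^{1/(\beta p)}$, $G(v) = v^{1/\beta}$, and $\bar h(t) = \sup_{e,u,x}\phi(e,u,t,x)$.

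To handle the forcing term $u(t)$, I would invoke the $f$-geometric ergodicity of the event-time chain $(U_{n})_{n \geq 0}$ proved in Lemma \ref{lem:BoundednesSum}. Because the Lyapunov function $V+1$ of Lemma \ref{lem:FiniteSumU_n} dominates the power $\|U_{n}-U^{\infty}_{n}\|^{\beta p}$ for a suitable choice of $z$ in the exponential Lyapunov, this gives constants $r>1$, $C<\infty$ such that $\Esp[\|U_{n}-U^{\infty}_{n}\|^{\beta p}]^{1/(\beta p)} \leq C r^{-n}$. The calendar-time upgrade then relies on the uniform lower bound $\underline{\psi} > 0$ on the total intensity from Assumption \ref{Assump:Boundedness}, which stochastically minorises the jump counting process by a Poisson process of intensity $\underline{\psi}$, converting the geometric decay in event index $n$ into an exponential decay $u(t) \leq C'e^{-\gamma t}$, and in particular $u \in L^{1}(\mathbb{R}_{+})$.

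Since $\bar h \in L^{1}(\mathbb{R}_{+})$ thanks to Assumption \ref{Assump:psigrowth}, the displayed nonlinear Volterra inequality falls within the scope of Theorem 3 of \cite{beesack1984some}, whose conclusion gives $f \in L^{1}(\mathbb{R}_{+})$, and Lemma \ref{Lem:CondIntensity} then completes the proof. The main technical obstacle is the simultaneous calibration of the exponents $(n_{\psi}, n_{1}, \beta, p)$: the H\"older split must be chosen so that the required $L^{\beta p}$-moments of $\|U_{t}-U^{\infty}_{t}\|$ are controlled by the Lyapunov function of Lemma \ref{lem:FiniteSumU_n}, while the iterated Hawkes kernel resulting from the Minkowski step still satisfies the subhomogeneity/integrability conditions required by Beesack's comparison theorem. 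All the other steps are either applications of earlier lemmas or standard manipulations.
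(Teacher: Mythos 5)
Your proposal follows essentially the same route as the paper's proof: reduce ergodicity to the coupling criterion of Lemma \ref{Lem:CondIntensity}, derive the nonlinear Volterra inequality $f(t) \leq u(t) + c_3 G\big(\int_0^t \bar h(t-s) f(s)\, ds\big)$ via the order-book/Hawkes-history split and Assumption \ref{Assump:Regularityparams2}, control $u$ through the $f$-geometric ergodicity of $(U_n)$ (Lemma \ref{lem:BoundednesSum}) together with the lower bound $\underline{\psi}$ to pass from event time to calendar time, and conclude $f \in L^1(\mathbb{R}_+)$ by Theorem 3 of \cite{beesack1984some}. The only differences are cosmetic (you claim exponential decay of $u$ where the paper only needs integrability, and you do not spell out the continuity/density approximation the paper uses before invoking Beesack's theorem), so the argument is correct and matches the paper's.
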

\begin{proof}[Proof of Lemma \ref{Lemma:ErgodicityIncreasing}] For simplicity, we write $c_1$, $c_2$ and $c_3$  for positive constants and forget the dependence of $\Esp_{\mathbf{x}}[X] $ on the initial state $\mathbf{x}$ for any random variable $X$. Let $N^{\infty}= (T^{\infty}_i,X^{\infty}_i)$ be the stationary process constructed in Lemma \ref{lem:ExistenceProcHks} and $N = (T_i,X_i)$  be a point process whose intensity satisfies (\ref{Eq:IntensityExpressionHawkes}). We write $U^{\infty} = ({Q^1}^{\infty},{Q^2}^{\infty},S^{\infty})$ (resp. $U = (Q^1,Q^2,S)$) for the order book state associated to $N^{\infty}$ (resp. $N$). We denote by $\lambda^{\infty}$ (resp. $\lambda$) the intensity of $N^{\infty}$ (resp. $N$). To prove the uniqueness, we need to show that $N$ and $N^{\infty}$ couple which is satisfied when
\begin{equation*}
\int_{0}^{\infty} \sup_e\Esp \big[ |\lambda_t(e) - \lambda_t^{\infty}(e) |\big]\, dt  < \infty,
\end{equation*}
thanks to Lemma \ref{Lem:CondIntensity}. We write $f(t) =  \sup_e\Esp \big[|\lambda_t(e) - \lambda_t^{\infty}(e) | \big]$ for any $t \geq 0$.
\paragraph{Step (i):} For any $ \gamma = \frac{p}{q} >1$ with $p,q \in \mathbb{N}^*$ and $\beta$ such that $\frac{1}{\beta} + \frac{1}{\gamma} = 1$. Let us first prove that 
\begin{equation}
f(t) \leq u(t) + g_{1}(t) G\big(\int_{0}^t \bar{h}(t-s)f(s) \, ds\big),
\label{Eq:IneqTermF3}
\end{equation}
with $u(t) = c_3 \Esp[|| U_{t} - U^{\infty}_t ||] + c_1 \Esp \big[ \|U_{t} - U^{\infty}_{t}\|^{\beta p}\big]^{\frac{1}{\beta p}}[1 + 2B(t)],$  $g_{1}(t) =  c_2(1 + 2B(t))$, $G(t) = t^{\frac{1}{\beta}}$, $\bar{h}(t) = \sup_{e,u,x} \phi\big(e,u,t,x\big)$ and $B(t) = \sup_{0 \leq k \leq n_{\psi}-1}\bigg[ B_{k}(t)\bigg]^{\frac{1}{p\gamma}}$ with $ B_{k}(t)$ defined in Equation (\ref{Ineq:Termii2}). The quantities $c_1 $, $c_2$ and $c_3$ are positive constants. We have
\begin{align*}
 f(t) & = \Esp \big[|\psi(e,U_t,t,r_t) - \psi(e,U^{\infty}_t, t, r^{\infty}_t)| \big] \\
 				& \leq  \Esp \big[|\psi(e,U_t, t,r_t) - \psi(e,U^{\infty}_t, t,r_t) |\big] + \Esp \big[|\psi(e,U^{\infty}_t, t,r_t) - \psi(e,U^{\infty}_t, t,r^{\infty}_t) |\big] \\
 				& = (1) + (2),
\end{align*}
with $r_t = \int_{0}^t \phi(e, U_t, t- s, X_s) \, dN_s $ and $r^{\infty}_t = \int_{0}^t \phi(e, U^{\infty}_t, t- s, X^{\infty}_s) \, dN^{\infty}_s  $. Let us first handle the term (2). Using Assumption \ref{Assump:Regularityparams2}, we have
\begin{align*}
\Esp \big[ |\psi(e,U^{\infty}_t,t,r_t) - \psi(e,U^{\infty}_t, t, r^{\infty}_t) | \big]  & \leq  \Esp  \big[ |\bar{\psi}(r_t) - \bar{\psi}(r^{\infty}_t) | \big] \\
 					& \leq K\Esp \big[ |r_t - r^{\infty}_t | | 1 + r^{n_1}_t + r^{\infty^{n_1}}_t | \big] \\
 					& \leq K\overbrace{\Esp \big[ |r_t - r^{\infty}_t |^{\beta} \big]^{\frac{1}{\beta}} }^{(i)} \overbrace{\Esp \big[ |1 + r^{n_1}_t + r^{\infty^{n_1}}_t |^{\gamma} \big]^{\frac{1}{\gamma}}}^{(ii)}.
\end{align*}
The term (i) can be dominated by 
\begin{align*}
\Esp \big[ |r_t - r^{\infty}_t |^{\beta} \big]^{\frac{1}{\beta}} & \leq  \Esp \big[ |\int_{0}^{t} \phi(e,U_{t},t-s,X_s)dN_s -  \phi(e,U^{\infty}_{t},t-s, X_s) dN^{\infty}_s|^{\beta} \big]^{\frac{1}{\beta}} \\
			& \leq  2^{\frac{\beta - 1}{\beta}}\Esp \big[ |\int_{0}^{t} \phi(e,U_{t},t-s,X_s)dN_s -  \phi(e,U^{\infty}_{t},t-s, X_s) dN_s|^{\beta} \big]^{\frac{1}{\beta}} \\
			& +  2^{\frac{\beta - 1}{\beta}}\Esp \bigg[ \bigg(\int_{0}^{t} \bar{h}(t-s)\big|dN_s - dN^{\infty}_s\big|\bigg)^{\beta} \bigg]^{\frac{1}{\beta}}\\
			& \leq 2^{\frac{\beta - 1}{\beta}}\Esp \big[ \|U_{t} - U^{\infty}_{t}\|^{\beta}|\int_{0}^{t} \tilde{h}(e,t-s,X_s)dN_s|^{\beta} \big]^{\frac{1}{\beta}} + 2^{\frac{\beta - 1}{\beta}}\big[\int_{0}^t \bar{h}(t-s)f(s)\, ds\big]^{\frac{1}{\beta}} \\
			& \leq 2^{\frac{\beta - 1}{\beta}}\Esp \big[ \|U_{t} - U^{\infty}_{t}\|^{\beta p} \big]^{\frac{1}{\beta p}} \Esp \big[|\int_{0}^{t} \tilde{h}(e,t-s,X_s)dN_s|^{\beta q} \big]^{\frac{1}{\beta q}} + 2^{\frac{\beta - 1}{\beta}}\big[\int_{0}^t \bar{h}(t-s)f(s)\, ds\big]^{\frac{1}{\beta}} \\
			&  = c_1 \Esp \big[ \|U_{t} - U^{\infty}_{t}\|^{\beta p}\big]^{\frac{1}{\beta p}} + c_2 \big[\int_{0}^t \bar{h}(t-s)f(s)\, ds\big]^{\frac{1}{\beta}}, \numberthis \label{Ineq:Termi1}
\end{align*}
with $\bar{h}(s) = \sup_{e,u,x} \phi(e,u,s,x)$, $\tilde{h}(e,s,x) = \frac{2}{\min(\alpha_0,1)} \sup_{u} \phi(e,u,s,x)$ and $\min(\alpha_0,1)$ represents the minimum distance between two elements in the countable space $\mathbb{U}$. The quantity $\Esp \big[|\int_{0}^{t} \tilde{h}(e,t-s,X_s)dN_s|^{\beta q} \big]$ is bounded since 
\begin{align*}
\Esp \big[|\int_{0}^{t} \tilde{h}(e,t-s,X_s)dN_s|^{\beta q} \big] &\leq \Esp \big[|\int_{0}^{t} \tilde{h}(e,t-s,X_s)dN_s|^{ q} \big]^{\frac{1}{\beta}} \\
					& \leq \bigg\{\sum_{\mathbf{k}_{m} \in \mathcal{P}(q)} \sum_{\bar{x} \in E^m}{q\choose{\mathbf{k}_m}} \times\int_{(- \infty,t)^{m}}\Esp\big[ \prod_{i=1}^m  \tilde{h}(e,t-s_i,x_{i}) dN_{s_i}\big]\bigg\}^{\frac{1}{\beta}} \\
					& \leq \bigg\{\sum_{\mathbf{k}_{m} \in \mathcal{P}(q)} \sum_{\bar{x} \in E^m}{q\choose{\mathbf{k}_m}} \times\int_{(- \infty,t)^{m}} \prod_{i=1}^m  \tilde{h}(e,t-s_i,x_{i}) \Esp\big[ \lambda_{s_i}\big] ds_i\bigg\}^{\frac{1}{q}} < \infty. 
\end{align*} 
The term (ii) satisfies
\begin{align*}
\Esp \big[|1 + r^{n_1}_t + r^{\infty^{n_1}}|^{\gamma} \big]^{\frac{1}{\gamma}} & \leq  3^{\frac{\gamma - 1}{\gamma}} \big( 1 + \Esp[|r^{n_1}_t|^{\gamma}]^{\frac{1}{\gamma}} + \Esp[|r^{\infty^{n_1}}_t|^{\gamma}]^{\frac{1}{\gamma}}\big) \numberthis \label{Ineq:Termii1},
\end{align*}
with $\gamma = \frac{p}{q} $ and $p, q \in \mathbb{N}^*$. We have 
\begin{align*}
\Esp[|r^{n_1}_t|^{\frac{p}{q}}] & \leq \Esp[|r^{n_1}_t|^{p}]^{\frac{1}{q}} \\
															  & =  \Esp[\big(\int_{0}^{t} \phi(e,U_{t},t-s,X_s)dN_s\big)^{n_1p}]^{\frac{1}{q}} \\
								& = \left\{\sum_{\mathbf{k}_{m} \in \mathcal{P}(\bar{p})} \sum_{\bar{x} \in E^m}{\bar{p}\choose{\mathbf{k}_m}} \times\int_{(- \infty,t)^{m}}\Esp\left[ \prod_{i=1}^m  \bar{\phi}(t-s_i,x_{i}) dN_{s_i}\right]\right\}^{\frac{1}{q}},\numberthis \label{Ineq:Termii1a}
\end{align*}
with $\bar{\phi}(t,x) = \sup_{e,u} \phi(e,u,t,x) $ and $\bar{p} = n_1 p $. Using (\ref{Ineq:Termii1a}) and the Brascamp-Lieb inequality, we have
\begin{align*}
\Esp[|r^{n_1}_t|^{\frac{p}{q}}] & \leq  \left[ \sum_{\mathbf{k}_{m} \in \mathcal{P}(\bar{p})} \sum_{\bar{x} \in E^m}{\bar{p}\choose{\mathbf{k}_m}} \times \int_{(- \infty,t)^{m}}\prod_{i=1}^m  \bar{\phi}(t-s_i,x_{i}) \Esp\left[ \lambda_{s_i}\right]^{\frac{1}{m}}ds_{i} \right]^{\frac{1}{q}}\\
								& = \bigg[ \sum_{\mathbf{k}_{m} \in \mathcal{P}(\bar{p})} \sum_{\bar{x} \in E^m}{\bar{p}\choose{\mathbf{k}_m}} R_{m}(t)\bigg]^{\frac{1}{q}} = B_{k}(t)^{\frac{1}{q}}, \numberthis \label{Ineq:Termii2}
\end{align*}
with $R_{m}(t) = \int_{(- \infty,t)^{m}}\prod_{i=1}^m  \bar{\phi}(t-s_i,x_{i}) \Esp\left[ \lambda_{s_i}\right]^{\frac{1}{m+m'}}ds_{i}$ and $B_{k}(t) = \sum_{\mathbf{k}_{m} \in \mathcal{P}(\bar{p})} \sum_{\bar{x} \in E^m}{\bar{p}\choose{\mathbf{k}_m}} R_{m}(t)$. Similarly, we also have
\begin{align*}
\Esp[|r^{\infty^{n_1}}_t|^{\frac{p}{q}}] & \leq B_{k}(t)^{\frac{1}{q}}. \numberthis \label{Ineq:Termii21}
\end{align*}
Using Inequalities (\ref{Ineq:Termii1}) and (\ref{Ineq:Termii2}), we deduce that (ii) verifies 
\begin{equation}
\Esp \big[|1 + r^{n_1}_t + r^{\infty^{n_1}}|^{\gamma} \big]^{\frac{1}{\gamma}} \leq  3^{\frac{\gamma - 1}{\gamma}} (1 + 2 \sup_{0 \leq k \leq n_{\psi}-1}\bigg[ B_{k}(t)\bigg]^{\frac{1}{q\gamma}}).
\label{Eq:IneqTermiiF}
\end{equation}
By combining inequalities (\ref{Ineq:Termi1}) and (\ref{Eq:IneqTermiiF}), we deduce that 
\begin{equation}
(2) \leq 3^{\frac{\gamma - 1}{\gamma}}  \bigg[ c_1 \Esp \big[ \|U_{t} - U^{\infty}_{t}\|^{\beta p}\big]^{\frac{1}{\beta p}} + c_2 \big[\int_{0}^t g(t-s)f(s)\, ds\big]^{\frac{1}{\beta}}\bigg] \bigg[1  + 2 \sup_{0 \leq k \leq n_{\psi}-1}\bigg[ B_{k}(t)\bigg]^{\frac{1}{q\gamma}}\bigg].
\label{Eq:IneqTermF1}
\end{equation}
Using Theorem \ref{lem:ExistenceProcHks}, we have $\sup_{e,t}\Esp[\sup_{u}\psi(e,u,t,r_t)]$ is finite. Thus, there exists $K$ such that  
\begin{equation}
(1) \leq c_3 \Esp[|| U_{t} - U^{\infty}_t ||].
\label{Eq:IneqTermF2}
\end{equation}
Thus using Equations (\ref{Eq:IneqTermF1}) and (\ref{Eq:IneqTermF2}), we prove (\ref{Eq:IneqTermF3}).
\paragraph{Step (ii):} By a density argument, there exist continous sequences of functions $(u^p)_{p \geq 1}$, $(g^p_1)_{p \geq 1}$ and $(\bar{h}^p)_{p \geq 1} $ such that $u^p (t) \underset{p \rightarrow \infty}{\rightarrow} u(t)$ and $u \leq u^p$, $g_1^p (t) \underset{p \rightarrow \infty}{\rightarrow} g_1(t)$ and $g_1 \leq g_1^p$ and $\bar{h}^p  \underset{p \rightarrow \infty}{\overset{L^1}{\rightarrow}} \bar{h}$ and $\bar{h} \leq \bar{h}^p$. Thus, we have 
\begin{align*}
f(t) \leq u^p(t) + g^p_1(t)G\big(\int_{0}^t \bar{h}^p(s) f(s)\, ds\big).
\end{align*}
Using a density argument again, we can find a sequence of functions $(f^k)_{k \geq 1}$ converges uniformly towards $f$. By affording ourselves to use sub-sequences, we can always consider that 
\begin{align*}
f^p(t) \leq \tilde{u}^p(t) + g^p_1(t)G\big(\int_{0}^t \bar{h}^p(s) f^p(s)\, ds\big),
\end{align*}
with $\tilde{u}^p(t) = u^p(t) + |f-f^p|_{\infty}$. Using Theorem 3 in \cite{beesack1984some} and Inequality (\ref{Eq:IneqTermF3}), we have 
\begin{align*}
f^p(t) \leq v^p(t)F^p(t)\left\{1 + G\bigg[H^{-1}\big(\int_{0}^t \bar{h}^p(s) g^p_1(s)\, ds\big) \bigg] \right\},
\end{align*}
with $H(s) = \int_{0}^s \frac{dt}{1 + G(t)}$, $v^p(t) = \max(G_1(\tilde{u}^p)(t),1)$, $ F^p(t)= \max(G_1(g_1^p)(t),1)$ and 
\begin{align*}
G_1(w)(t) = w(t) \left(1 + \int_0^t w(s) \bar{h}^p(s) e^{\int_{s}^t \bar{h}^p g^p_1\,du}\, ds \right).
\end{align*}
By sending $p$ to infinity, we deduce that  
\begin{align}
f(t) \leq v(t)F(t)\left\{1 + G\bigg[H^{-1}\big(\int_{0}^t \bar{h}(s) g_1(s)\, ds\big) \bigg] \right\},
\label{Ineq:Finalf}
\end{align} 
with  $v(t) = \max(G_1(u)(t),1)$ and $ F(t)= \max(G_1(g_1)(t),1)$.
\paragraph{Step (iii):} Let us prove that $\int_{\mathbb{R}_+} u(t)\, dt < \infty $. Since $B(t)$ is uniformly bounded, we only need to prove that 
\begin{equation*}
\left\{
\begin{array}{l}
\int_{\mathbb{R}_+} \Esp \big[\| U_{t} - U^{\infty}_t\|\big] \, dt < \infty\\
\int_{\mathbb{R}_+} \Esp \big[ \|U_{t} - U^{\infty}_{t}\|^{\beta p}\big]^{\frac{1}{\beta p}}\, dt < \infty.
\end{array}
\right.
\end{equation*}
Since $0 < \underline{\psi} = \inf_{u,t,r}\sup_{e}\psi(e,u,t,r) \leq \lambda_n$, we have 
\begin{align*}
\Esp \big[ \int_{\mathbb{R}_+} \| U_{t} - U^{\infty}_t\| \, dt \big] & = \leq  \Esp \big[ \sum_{n \geq 0} \| U_{n} - U^{\infty}_{n}\|\int_{T_n}^{T_{n+1}} \, dt \big]  \\
													               & = \Esp \big[ \sum_{n \geq 0} \|U_n - U^{\infty}_{n}\| \Esp[T_{n+1}-  T_n | \mathcal{F}_n] \big] \\
                      &  \leq \Esp \big[ \sum_{n \geq 0} \|U_n - U^{\infty}_{n}\| \frac{1}{\underline{\psi}} \big].
\end{align*}
Using Lemma \ref{lem:BoundednesSum}, we have $\Esp \big[ \sum_{n \geq 0} \|U_n - U^{\infty}_{n}\| \big] < \infty$ which ensures that $\Esp \big[ \int_{\mathbb{R}_+} \| U_{t} - U^{\infty}_t\| \, dt \big] < \infty $. By using a similar methodology and the fact that $ \sum_{n \geq 0} \Esp \big[\|U_n - U^{\infty}_{n}\|^{\beta p} \big] r^n < \infty $ with $r> 1$, see Lemma \ref{lem:BoundednesSum}, we also have $\int_{\mathbb{R}_+} \Esp \big[ \|U_{t} - U^{\infty}_{t}\|^{\beta p}\big]^{\frac{1}{\beta p}}\, dt < \infty$.
\paragraph{Step (iv):} Since $g_1$ is bounded and $\int_{0}^t \bar{h}(s) \, ds < \infty$, the functions $F(t)$ and \newline $\left\{1 + G\bigg[H^{-1}\big(\int_{0}^t \bar{h}(s) g_1(s)\, ds\big) \bigg] \right\}$ are bounded as well. Moreover, $\int_{\mathbb{R}_+} u(t)\, dt < \infty  $ thanks to the previous step. Thus, by applying Inequality (\ref{Ineq:Finalf}), we have that $ \int_{\mathbb{R}_+} f(t)\, dt < \infty $ which completes the proof.
\end{proof}
\subsection{Speed of convergence}
\begin{lem}We have the following error estimate:
\begin{equation*}
||P_t(w,.) - \bar{\pi}||_{TV} \leq K_1 e^{- K_2 t}, \quad \forall w \in W,
\end{equation*}
with $K_3 > 0$ and $K_2 > 0$.
\label{Lemma:SpeedofConvergence}
\end{lem}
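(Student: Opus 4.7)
The plan is to quantify the rate of decay of $f(t) = \sup_{e} \Esp[|\lambda_t(e) - \lambda_t^\infty(e)|]$, whose mere integrability was established in Lemma \ref{Lemma:ErgodicityIncreasing}, into an explicit exponential bound, and then to convert this into the advertised total variation estimate via the coupling of Lemma \ref{Lem:CondIntensity}.

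I would first upgrade Lemma \ref{lem:BoundednesSum}, which asserts $\sum_n \Esp_{\mathbf{x}}[\|f(U_n) - f(U_n^\infty)\|] r^n < \infty$ for some $r > 1$, into the pointwise bounds $\Esp[\|U_n - U_n^\infty\|] \leq C r^{-n}$ and $\Esp[\|U_n - U_n^\infty\|^{\beta p}]^{1/(\beta p)} \leq C\,\tilde r^{-n}$ with $\tilde r > 1$ (the latter by applying Lemma \ref{lem:FiniteSumU_n} to an enlarged power of the Lyapunov function $V$ before running the geometric-ergodicity argument, and then invoking Lyapunov's inequality). I would then transfer from jump index $n$ to calendar time $t$ using the two-sided control $\underline{\psi} \leq \sum_e \lambda_t(e)$ a.s.\ from Assumption \ref{Assump:Boundedness} and $\sup_t \Esp[\sum_e \lambda_t(e)] < \infty$ from Step (ii) of the proof of Theorem \ref{lem:ExistenceProcHks}. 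This yields, for some $\alpha_0, C_0 > 0$,
$$\Esp[\|U_t - U_t^\infty\|] + \Esp[\|U_t - U_t^\infty\|^{\beta p}]^{1/(\beta p)} \leq C_0\, e^{-\alpha_0 t}.$$

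Next I would reuse the integral inequality (\ref{Eq:IneqTermF3}) derived in Step (i) of the proof of Lemma \ref{Lemma:ErgodicityIncreasing},
$$f(t) \leq u(t) + g_1(t)\, G\Big(\int_{0}^{t} \bar{h}(t-s) f(s)\,ds\Big),$$
where by the first step the driver $u(t)$ is now exponentially decaying, while $g_1$ and the factor $B$ remain uniformly bounded as established in Step (iv) of that proof. A Beesack-type Gr\"onwall estimate with this exponentially decaying input produces $f(t) \leq K_3 e^{-\alpha_1 t}$ for some $\alpha_1 \in (0,\alpha_0]$. This step is the main obstacle: the nonlinearity $G(r) = r^{1/\beta}$ does not preserve exponential decay for free, so I would work in the weighted space $\{\varphi \geq 0 : \sup_t e^{\alpha_1 t}\varphi(t) < \infty\}$ and choose $\alpha_1 > 0$ small enough that, using a mild exponential moment $\int \bar{h}(s) e^{\eta s}\,ds < \infty$ (compatible with the regime of Assumption \ref{Assump:psigrowth}), the map $\varphi \mapsto u + g_1 \cdot G(\bar{h}\ast \varphi)$ becomes a contraction, or alternatively iterate the inequality finitely many times, absorbing a factor $\beta$ into the rate at each iteration.

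Finally I would close the argument through the canonical coupling of $N$ and $N^\infty$ used in Lemma \ref{Lem:CondIntensity}. Applying Lemma \ref{Lem:ComputPP} to the superposed point process $|N - N^\infty|$ and a union bound over the finitely many event types relevant to $\bar U$ gives
$$\mathbb{P}\big[\bar U_s = \bar U_s^\infty \text{ for all } s \geq t\big] \geq 1 - \int_t^\infty f(s)\,ds.$$
Since $\|P_t(w,\cdot) - \bar{\pi}\|_{TV} \leq \mathbb{P}[\bar U_t \neq \bar U_t^\infty]$, we obtain
$$\|P_t(w,\cdot) - \bar{\pi}\|_{TV} \leq \int_t^\infty f(s)\,ds \leq \frac{K_3}{\alpha_1}\, e^{-\alpha_1 t},$$
yielding the claim with $K_1 = K_3/\alpha_1$ and $K_2 = \alpha_1$.
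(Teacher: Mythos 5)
Your overall skeleton matches the paper's: you reduce the total variation distance to the coupling probability via Lemma \ref{Lem:CondIntensity} and Lemma \ref{Lem:ComputPP}, obtaining $\|P_t(w,\cdot)-\bar\pi\|_{TV}\lesssim \int_t^\infty f(s)\,ds$, and you obtain exponential decay of $u(t)$ by upgrading the geometric series bound of Lemma \ref{lem:BoundednesSum} to $\Esp[\|U_n-U_n^\infty\|]\leq Ae^{-\alpha n}$ and then passing from event index to calendar time (the paper does this with the translated process and Lemma \ref{Lem:TransTimeEvtTimereal}; your intensity-based time change is a cosmetic variant). The divergence, and the genuine gap, is in the middle step. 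You go back to the \emph{implicit} inequality (\ref{Eq:IneqTermF3}), $f(t)\leq u(t)+g_1(t)G\big(\int_0^t\bar h(t-s)f(s)\,ds\big)$, and try to propagate exponential decay through it. With $G(r)=r^{1/\beta}$ and $f$ reappearing under the convolution, any such argument forces $\int_0^t\bar h(t-s)e^{-\alpha_1 s}\,ds$ to decay exponentially, i.e.\ it needs $\int\bar h(s)e^{\eta s}\,ds<\infty$. This is an additional hypothesis: Assumption \ref{Assump:psigrowth} only imposes integrability of powers of $\phi^*$, so heavy-tailed (e.g.\ power-law) kernels are allowed, and for those the convolution of an exponential with $\bar h$ decays no faster than $\bar h$ itself; your fallback of ``iterating the inequality finitely many times'' does not repair this, since each iteration still passes through the same slowly decaying convolution. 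So as written your proof establishes the lemma only under a strictly stronger kernel condition than the paper assumes.

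The paper sidesteps this obstacle by \emph{not} returning to (\ref{Eq:IneqTermF3}): it invokes the already-resolved Beesack bound (\ref{Ineq:Finalf}) from the proof of Lemma \ref{Lemma:ErgodicityIncreasing}, in which $f$ no longer appears on the right-hand side; the factors $F(t)$ and $1+G[H^{-1}(\int_0^t\bar h g_1)]$ are bounded using only $\bar h\in L^1$ and the boundedness of $g_1$, so that $\int_t^\infty f(s)\,ds\lesssim\int_t^\infty u(s)\,ds\lesssim e^{-\alpha t}$ follows directly from the exponential decay of $u$, with no exponential moment on the kernel. If you want to salvage your route without strengthening the assumptions, you should likewise exploit the solved form of the Gr\"onwall estimate (the nonlinearity having been absorbed into bounded multiplicative factors) rather than re-solving the inequality in an exponentially weighted space.
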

\begin{proof}[Proof of Lemma \ref{Lemma:SpeedofConvergence}] We forget the dependence of $\Esp_{\mathbf{x}}[X]$ on the initial state $\mathbf{x}$ for any random variable $X$. We have 
\begin{align*}
||P_t(w,.) - \bar{\pi}||_{TV} & \leq \mathbb{P}[\sup_{e}|N_s - N^{\infty}_s| \ne 0,  \quad \forall s \in (t,\infty)] \\
							 & = \left(1 - \mathbb{P}[\sup_{e}N_s = N^{\infty}_s,  \quad \forall s \in (t,\infty)]\right) = (i).
\end{align*}
Using Lemma \ref{Lem:ComputPP} and Jensen's Inequality, we have 
\begin{align*}
(i) \leq 1 - e^{-\int_{t}^{\infty} f(s) \, ds},
\end{align*}
with $f(t) =  \sup_e\Esp \big[|\lambda_t(e) - \lambda_t^{\infty}(e) | \big]$ for any $t \geq 0$. Using Inequality (\ref{Ineq:Finalf}) and the boundedness of $F$ and $\left\{1 + G\bigg[H^{-1}\big(\int_{0}^t \bar{h}(s) g_1(s)\, ds\big) \bigg] \right\}$, we have 
\begin{align}
(i) \leq c_1\int_{t}^{\infty} u(t) \, dt,
\label{Ineq:SpeedConvergence1}
\end{align}
with $c_1$ a positive constant. Let us now prove that 
\begin{align}
u(t) \leq c_1 e^{- \alpha t},
\label{Ineq:SpeedConvergence1_error1}
\end{align}
with $\alpha$ a positive constant. We have 
\begin{align*}
\Esp[\|U_t - U^{\infty}_t\|] = \Esp[\|U_{N(t)} - U^{\infty}_{N^{\infty}(t)} \|] & \leq \Esp[\|U_{N(t)} - U^{\infty}_{N(t)} \|] +\Esp[\|U^{\infty}_{N(t)} - U^{\infty}_{N^{\infty}(t)}\|]. 
\end{align*}
Using the fact that $\sum_{n \geq 1} \Esp[\|U_n - U^{\infty}_n \|]r^n < \infty$, there exists $\alpha > 0$ such that  $\Esp[\|U_n - U^{\infty}_n \|] \leq Ae^{-\alpha n}$. Let us denote by $U^{\infty, \delta}_t$ the $\delta $-translated process defined such that $U^{\infty, \delta}_t = U^{\infty}_{t + \delta} $. By applying Lemma \ref{Lemma:ErgodicityIncreasing} to the process $U^{\infty, \delta}$, we also have $\sup_{\delta}\big(\sum_{n \geq 1} \Esp[\|U^{\infty,\delta}_n - U^{\infty}_n \|]r^n\big) < \infty$ which ensures that $\Esp[\|U^{\infty, \delta}_n - U^{\infty}_n \|] \leq Ae^{-\alpha n} $. Using Lemma \ref{Lem:TransTimeEvtTimereal} below and the uniqueness of the stationary distribution, we have $\frac{N(t)}{t} \underset{t \rightarrow}{\rightarrow} \frac{1}{\Esp_{\pi}[\Delta T_1]}$ and $\frac{N^{\infty}(t)}{t} \underset{t \rightarrow}{\rightarrow} \frac{1}{\Esp_{\pi}[\Delta T_1]},$ $a.s$. Thus, we deduce that 
\begin{align}
\Esp[\|U_t - U^{\infty}_t\|] \leq c_1 e^{-\alpha t}.
\label{Ineq:SpeedConvergence1_sub}
\end{align}
Using the same lines of argument, we also have 
\begin{align}
\Esp[\|U_t - U^{\infty}_t\|^{\beta p}]^{\frac{1}{\beta p}} \leq c_1 e^{-\alpha t}.
\label{Ineq:SpeedConvergence2_sub}
\end{align}
By combining Inequalities (\ref{Ineq:SpeedConvergence1_sub}) and (\ref{Ineq:SpeedConvergence2_sub}) and using the expression of $u(t)$, we recover Inequality (\ref{Ineq:SpeedConvergence1_error1}) which ensures that 
\begin{align*}
(i) \leq c_1 e^{-\alpha t}.
\end{align*}
This completes the proof.
\end{proof}
\begin{lem} For any initial state $u \in \mathbb{U}$, the process $\Delta T_n$ satisfies
\begin{align*}
\frac{\sum_{i = 1}^n \Delta T_i}{n} \underset{n \rightarrow \infty}{\rightarrow} \Esp_{\mu}[\Delta T_1] \qquad a.s,
\end{align*} 
with $\mu$ the unique stationary distribution of the point process $N$.
\label{Lem:TransTimeEvtTimereal}
\end{lem}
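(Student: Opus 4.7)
The plan is to transfer the ergodic convergence from the stationary version of the point process to an arbitrary initial condition via the coupling mechanism already developed in the proof of Lemma \ref{Lemma:ErgodicityIncreasing}, and to invoke Birkhoff's ergodic theorem on the stationary side.

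First, I would work with the stationary process $N^\infty = (T^\infty_n, X^\infty_n)$ constructed in Theorem \ref{lem:ExistenceProcHks} under the invariant measure $\mu$. Under $\mu$ the inter-arrival sequence $(\Delta T^\infty_i)_{i \geq 1}$ with $\Delta T^\infty_i = T^\infty_i - T^\infty_{i-1}$ is stationary, and the a.s.\ lower bound $\sum_{e \in E} \lambda^\infty_t(e) \geq \underline{\psi} > 0 $ of Assumption \ref{Assump:Boundedness} stochastically dominates each $\Delta T^\infty_i$ by an exponential variable of rate $\underline{\psi}$, so that $\Esp_\mu[\Delta T^\infty_1] \leq 1/\underline{\psi} < \infty$. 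Uniqueness of $\mu$ established in Lemma \ref{Lemma:ErgodicityIncreasing} promotes stationarity to ergodicity of the shift on the canonical path space of $N^\infty$ (any nontrivial shift-invariant set would yield a distinct invariant measure by conditioning). Birkhoff's ergodic theorem then delivers
\begin{equation*}
\frac{1}{n}\sum_{i=1}^n \Delta T^\infty_i \xrightarrow[n \to \infty]{} \Esp_\mu[\Delta T^\infty_1], \quad a.s.
\end{equation*}

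Next, for the process $N = (T_n,X_n)$ started from the initial state $u \in \mathbb{U}$, I would use the thinning representation of Remark \ref{Rem:Rem10} to build $N$ and $N^\infty$ on a common probability space driven by the same Poisson measure $N^*$. The proof of Lemma \ref{Lemma:ErgodicityIncreasing} has already established $\int_0^\infty f(t)\,dt < \infty$ with $f(t) = \sup_e \Esp[|\lambda_t(e) - \lambda^\infty_t(e)|]$; combined with Lemma \ref{Lem:CondIntensity}, this yields coupling: there exists an a.s.\ finite random time $T^\star$ such that $N_s = N^\infty_s$ for every $s > T^\star$.

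To conclude, I would transfer the limit by exploiting that after $T^\star$ the atoms of $N$ and $N^\infty$ coincide exactly. Letting $k$ and $k'$ denote the (a.s.\ finite) numbers of atoms of $N$ and $N^\infty$ respectively in $[0,T^\star]$, we have $T_{k+j} = T^\infty_{k'+j}$ for every $j \geq 1$, hence $\Delta T_{k+j} = \Delta T^\infty_{k'+j}$ for every $j \geq 2$. The decomposition
\begin{equation*}
\frac{1}{n}\sum_{i=1}^n \Delta T_i = \frac{1}{n}\sum_{i=1}^{k+1} \Delta T_i + \frac{1}{n}\sum_{j=2}^{n-k} \Delta T^\infty_{k'+j}
\end{equation*}
shows that the first sum vanishes as $n \to \infty$ (finitely many terms) and the second is a shifted and truncated Cesàro average of $(\Delta T^\infty_i)$ which converges to $\Esp_\mu[\Delta T^\infty_1]$ by the previous step. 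The main obstacle is the ergodicity upgrade in the first paragraph: one must argue carefully that the uniqueness of $\mu$ for $\bar{U}$ translates into ergodicity of the shift on the full path space of the marked point process (rather than just on the state space of $\bar{U}$), which is where the functional-space construction of $W_0$ in Section \ref{subsec:lobdyn} plays its role. A secondary delicate point is ensuring that the coupling remains valid when the initial state $u$ is a fixed element of $\mathbb{U}$ rather than drawn from a measure compatible with $W_0$; this is handled by completing $u$ with a trivial past, noting that the proof of Lemma \ref{Lemma:ErgodicityIncreasing} only uses $\mathbf{x} \in W_0$ and a uniform-in-$\mathbf{x}$ bound.
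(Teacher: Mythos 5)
Your argument is correct in substance, but it follows a genuinely different route from the paper. The paper's own proof is a two-line appeal to Markov-chain machinery: from the uniform lower bound on the total intensity it gets $\Esp[\Delta T_n]\leq 1/\underline{\lambda}$, asserts that $\Delta T_n$ admits a finite stationary distribution, and then invokes the law of large numbers for positive Harris (ergodic) chains, Theorem 17.1.2 of Meyn and Tweedie, applied to the embedded chain whose ergodicity was established earlier. You instead argue path-wise: Birkhoff's ergodic theorem on the stationary version $N^{\infty}$ (using uniqueness of the invariant measure to upgrade stationarity to ergodicity of the shift, which is indeed standard since a unique invariant measure is an extreme point), followed by the coupling of Lemmas \ref{Lem:CondIntensity} and \ref{Lemma:ErgodicityIncreasing} — correctly noting that the monotone events $\{N_s=N^{\infty}_s,\ \forall s>t\}$ give an a.s.\ finite coupling time — and a Ces\`{a}ro transfer of the limit that is insensitive to the finitely many pre-coupling terms. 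What the paper's route buys is brevity, since all the work is delegated to the already-proved $f$-geometric ergodicity and a citable LLN; what your route buys is a more self-contained and constructive argument that makes the transfer from the stationary regime to an arbitrary initial state explicit through the coupling time, rather than hidden inside the Harris-chain theorem. One caveat applies equally to both proofs and you should be aware of it: index-stationarity of the inter-arrival sequence $(\Delta T^{\infty}_i)_{i\geq 1}$ under $\mu$ is a Palm-type statement about the embedded sequence, not an immediate consequence of time-stationarity of $\bar{U}_t$; you assume it implicitly when applying Birkhoff, just as the paper assumes it when writing that ``$\Delta T_n$ admits a finite stationary distribution,'' so this is a shared looseness rather than a defect specific to your proposal.
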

\begin{proof}
Since there exists $\underline{\lambda}> 0$ such that $\inf_{t,u,r} \sum_{e \in E} \lambda_t(e,u,r) > \underline{\lambda}$, we have $\Esp [\Delta T_n ] \leq \frac{1}{\underline{\lambda}}$ for any $n \geq 1$. Thus, $\Delta T_n$ admits a finite stationary distribution. Using the Theorem 17.1.2 in \cite{meyn2012markov}, we complete the proof.
\end{proof}

\section{Proof of Propositions \ref{Prop:LimitThMrkv} and \ref{Prop:LimitThMrkv2_1}}
\label{sec:Prooflth}
\begin{proof}[Proof of Proposition \ref{Prop:LimitThMrkv}]
The proof of Equation (\ref{Eq:CvgResult1}) is a direct application of Theorem 2 in \cite{jldoob}.\\
%Using Theorem \ref{lem:ErgodicityProcHks} and $\lambda^* < \infty$, see Assumption \ref{Assump:Boundedness}, we have $(T_n,X_n)$ is geometrically ergodic. Since $(T_n,X_n)$ is geometrically ergodic, $(Y_n)$ is geometrically ergodic and $U_n$ and $Y_n$ are independent, we have $(T_n,X_n,Y_n)$ is geometrically ergodic as well. Let $g$ and $h$ be two functions such that $g^2, h^2 \leq 1$, $\mu$ the stationary distribution of $Z = (T,X,Y)$ and $\bar{v} = v - \Esp_{\mu} [v]$ for any function $v$. By following the same lines of argument of Lemma 16.1.5 in \cite{meyn2012markov}, we have 
%\begin{align*}
% | \Esp_{\pi} [\bar{h}(Z_{n})\bar{g}(Z_{n+k}) ] | & \leq R r^k,
%% & = M||g||_{Lip}|\Esp_{\pi^*} [\bar{h}(Z_{0})V(U_0)]|\underset{k \rightarrow \infty}{\rightarrow} 0.
%\end{align*}
%with $Z_n =(T_n,X_n,Y_n)$, $r < 1$ and $R$ a positive constant. Thus $Z$ is a geometric mixing and Theorems 19.1 and 19.2 in \cite{billing} give the result.
Since $(U_n)$ is $f$-geometrically ergodic, see Lemma \ref{lem:BoundednesSum}, $(Y_n)$ is $g$-geometrically ergodic and $U_n$ and $Y_n$ are independent, the process $(U_n,Y_n)$ is $\tilde{f}$-geometrically ergodic with $\tilde{f}(u,y) = f(u) + g(y)$. Let $g$ and $h$ be two functions such that $g^2, h^2 \leq \tilde{f}$, $\mu$ the stationary distribution of $(U,Y)$ and $\bar{v} = v - \Esp_{\mu} [v]$ for any function $v$. By following the same lines of argument of Lemma 16.1.5 in \cite{meyn2012markov}, we have 
\begin{align*}
 | \Esp_{\pi} [\bar{h}(Z_{n})\bar{g}(Z_{n+k}) ] | & \leq R \Esp_{\pi} [\bar{\tilde{f}}(Z_0)] r^k,
% & = M||g||_{Lip}|\Esp_{\pi^*} [\bar{h}(Z_{0})V(U_0)]|\underset{k \rightarrow \infty}{\rightarrow} 0.
\end{align*}
with $Z_n =(U_n,Y_n)$, $r < 1$ and $R$ a positive constant. The quantity $\Esp_{\pi} [\bar{\tilde{f}}(Z_0)]$ is bounded by Lemma \ref{Lem:negDriftCondition}. Thus $Z$ is a geometric mixing and Theorems 19.1 and 19.2 in \cite{billing} give the result.
\end{proof}

\begin{proof}[Proof of Proposition \ref{Prop:LimitThMrkv2_1}]
Using Lemma \ref{Lem:TransTimeEvtTimereal} and Proposition \ref{Prop:LimitThMrkv2_1}, the proof of this result is analagous to the proof of Theorem 4.2 in \cite{huang2017ergodicity}.
\end{proof}
\section{Stationary distribution computation}
\label{sec:statdistrib}
\begin{proof}[Proof of Proposition \ref{Prop:GeneralProbStat}] Let $z \in Z$ and $z' \in Z$ such that $z \ne z'$. Since $\zeta$ is stationary under $\mu$, we have
\begin{align}
\sum_{z' \in Z} \int_{A^{z'}}\mu(dw) P_t(w,A^z) = \int_{W_0} \mu(dw) P_t(w,A^z)  =  \mu(A^z), \quad \forall t \geq 0,
\end{align}
with $P_t(w,.)$ the probability distribution of $\zeta^{0,w}_t$ starting from the initial condition $w$ and $A^{z} = \{(w_s)_{s \leq 0} \in W_0; \, \zeta^{0,w}_0 = z\}$. Since $\int_{A^{z'}}\mu(dw) P_t(w,A^z) = \mathbb{P}_{\mu} [\zeta_t = z,\, \zeta_0 = z'] = \mathbb{P}_{\mu} [ \zeta_0 = z']\mathbb{P}_{\mu} [ \zeta_t = z|\zeta_0 = z']$ and $\mu(A^z) =  \mathbb{P}_{\mu} [ \zeta_0 = z]$, the quantity $\pi(z) = \mu(A^z)$ defined in Section \ref{subsec:Stationaryprob} satisfies 
\begin{align*}
\sum_{z' \in Z} \pi(z') \mathbb{P}_{\mu} [ \zeta_t = z|\zeta_0 = z']  =  \pi(z), \quad \forall t \geq 0, 
\end{align*}
which also leads to the following equation:
\begin{align*}
\sum_{z' \in Z} \pi(z') \tilde{Q}(z,z') = 0,\quad \sum_{z' \in Z} \pi(z') = 1,
\end{align*}
with $\tilde{Q}(z,z') = \underset{\delta \rightarrow 0}{\lim} \cfrac{\mathbb{P}_{\mu} [ \zeta_{\delta} = z' | \zeta_{0} = z ]}{\delta}$. The quantity $\tilde{Q}(z,z')$ satisfies
\begin{align*}
\tilde{Q}(z,z') = \underset{\delta \rightarrow 0}{\lim} \cfrac{\mathbb{P}_{\mu} [ U_{\delta} = z' | U_{0} = z ]}{\delta} & = \underset{\delta \rightarrow 0}{\lim} \cfrac{\mathbb{P}_{\mu} [ \{T_{1} \leq \delta, \, e_1 \in E(z,z') \} | \zeta_{0} = z ] + \epsilon}{\delta} \\
		& = \underset{\delta \rightarrow 0}{\lim} \cfrac{\Esp_{\mu} [ \mathbb{P} [ \{T_{1} \leq \delta, \, e_1 \in E(z,z') \} | \mathcal{F}_0]| \zeta_{0} = z ] + \epsilon}{\delta}\\
		& = \Esp_{\mu}[\underset{\delta \rightarrow 0}{\lim}\cfrac{ \mathbb{P} [ \{T_{1} \leq \delta, \, e_1 \in E(z,z') \} | \mathcal{F}_0]}{\delta}| \zeta_{0} = z ] + \underset{\delta \rightarrow 0}{\lim}\cfrac{\epsilon}{\delta}\\
				& = \Esp_{\mu}[ \sum_{e_1 \in E(z,z')} \lambda_{0}(e_1)| \zeta_{0} = z ] + \underset{\delta \rightarrow 0}{\lim}\cfrac{\epsilon}{\delta}\\
		& = \sum_{e_1 \in E(z,z')} \Esp_{\mu}[\lambda_{0}(e_1)| \zeta_{0} = z] + \underset{\delta \rightarrow 0}{\lim}\cfrac{\epsilon}{\delta} ,
%		& = \frac{1}{\mathbb{P}_{\bar{\pi}}[U_{\infty} = u]}\sum_{e_1 \in E(u,u')} \Esp_{\bar{\pi}}[\lambda_{0}(e_1,u)] + \underset{\delta \rightarrow 0}{\lim}\cfrac{\epsilon}{\delta},
\end{align*}
where $\epsilon$ is an error term associated to the cases when at least two events happen in the interval $[0,\delta]$. Since $\sum_{e_1 \in E}\Esp_{\mu}[\lambda_0 (e_1)]$ is finite, we have $\epsilon \leq c_1 \delta^2 $ with $c_1$ a positive constant. We deduce that
\begin{align}
\tilde{Q}(z,z') = \sum_{e_1 \in E(z,z')} \Esp_{\mu}[\lambda_{0}(e_1,u)] = Q(z,z'). \label{Eq:proportQtildeQ}
\end{align}
This completes the proof.
\end{proof}
\section{Proof of Proposition \ref{Prop:Estimateintens}}
\label{sec:AppendixLawLargeNumbersNtdivt}
\begin{proof}[Proof of Proposition \ref{Prop:Estimateintens}] We write $\lambda^{u,u'}_s = \sum_{e \in E(u,u')} \lambda_s(e)$ and $E(u,u')$ the set of events that moves the order book from the state $u$ to $u'$. We have 
\begin{align}
\frac{N^{u,u'}_t}{t} = \frac{\int_{0}^t \lambda_s \delta^{s}_{u,u'} \, ds}{t} + \big(\frac{N^{u,u'}_t - \int_0^t \lambda_s \delta^{s}_{u,u'} \, ds}{t} \big). \label{Eq:Ntdivt1}
\end{align}
Since $(\lambda_s)_{s \geq 0}$ is stationary under $\bar{\pi}$ and $\Esp_{\bar{\pi}}[\lambda_s] < \infty$, the Theorem 2.1-chapter X in \cite{jldoob} ensures that 
\begin{align*}
\frac{\int_{0}^t \lambda_s \delta^{s}_{u,u'}  \, ds}{t} \underset{t \rightarrow \infty}{\rightarrow} \Esp_{\bar{\pi}}[\lambda_0 \delta^{0}_{u,u'} ] = \sum_{e \in E (u,u')} \Esp_{\bar{\pi}}[\lambda_0(e)\delta^{0}_{u,u'}]  & = \sum_{e \in E (u,u')} \Esp_{\bar{\pi}}[\lambda_0(e)\delta^{0}_{u,u'}|U_0 = u] \mathbb{P}_{\bar{\pi}}[U_0 = u]\\
				   & = \mathbb{P}_{\bar{\pi}}[U_0 = u] \sum_{e \in E (u,u')} \Esp_{\bar{\pi}}[\lambda_0(e)|U_0 = u] \\
				   & = \mathbb{P}_{\bar{\pi}}[U_0 = u]  Q(u,u') , \quad a.s. \numberthis \label{Eq:Ntdivt2}
\end{align*}
Moreover, since $N^{u,u'}_t - \int_0^t \lambda^{u,u'}_s \, ds$ is a martingale and $\sup_{s \geq 0,u,u'}\Esp[\lambda^{u,u'}_s ]< \infty$, we have 
\begin{align}
\frac{N^{u,u'}_t - \int_0^t \lambda_s \delta^{s}_{u,u'} \, ds}{t}  \underset{t \rightarrow \infty}{\rightarrow} 0, \quad a.s. \label{Eq:Ntdivt3}
\end{align}
Hence, by combining (\ref{Eq:Ntdivt1}), (\ref{Eq:Ntdivt2})  and (\ref{Eq:Ntdivt3}), we prove $\cfrac{N^{u,u'}_t}{t} \underset{t \rightarrow \infty}{\rightarrow} \mathbb{P}_{\bar{\pi}}[U_0 = u]  Q(u,u') ,$ $a.s.$ On the other hand, we have 
\begin{align}
\frac{t^u}{t} = \frac{\int_{0}^t \delta^{s}_{u} \, ds}{t}. \label{Eq:Ntdivt4_}
\end{align}
Since $(U_s)_{s \geq 0}$ is stationary under $\bar{\pi}$ and $\Esp_{\bar{\pi}}[\delta^{s}_{u}] < \infty$, the Theorem 2.1-chapter X in \cite{jldoob} ensures that 
\begin{align*}
\frac{\int_{0}^t \delta^{s}_{u}  \, ds}{t} \underset{t \rightarrow \infty}{\rightarrow} \Esp_{\bar{\pi}}[ \delta^{0}_{u} ] = \mathbb{P}_{\bar{\pi}}[U_0 = u] \quad a.s. \numberthis \label{Eq:Ntdivt5_}
\end{align*}
Thus, we deduce that 
\begin{align*}
\cfrac{N^{u,u'}_t}{t^u} = \cfrac{\cfrac{N^{u,u'}_t}{t}}{\cfrac{t^u}{t}} \underset{t \rightarrow \infty}{\rightarrow} Q(u,u'), \quad a.s,
\end{align*}
which completes the proof.
\end{proof}

\begin{proof}[Proof of confidence interval computation] By applying Theorem \ref{Prop:LimitThMrkv} to the sequence of $\eta_s = \lambda_s \delta^s_{u,u'}$ and use basic inequalities to approximate $t$ by its integer part $\lfloor t \rfloor $, we have 
\begin{align*}
\sqrt{t}\big(\cfrac{N^{u,u'}_t}{t} - \mathbb{P}_{\bar{\pi}}[U_0 = u]  Q(u,u')\big)\overset{\mathcal{L}}{\rightarrow} \sigma^1 W_1, \numberthis \label{Eq:Ntdivterror_1_}
\end{align*}
with $\sigma^2_1 = \Esp_{\mu} [(\lambda_0\delta^0_{u,u'})^2] + 2 \sum_{k \geq 1} \Esp_{\mu} [\lambda_0\delta^0_{u,u'}\lambda_k\delta^k_{u,u'}]$ and $W_t$ a standard brownian motion. Similarly, by using the same arguments, we also have
\begin{align*}
\sqrt{t}\big(\cfrac{t^u}{t} - \mathbb{P}_{\bar{\pi}}[U_0 = u] \big)\overset{\mathcal{L}}{\rightarrow} \sigma^2 W_1, \numberthis \label{Eq:Ntdivterror_2_}
\end{align*}
with $\sigma^2_2 = \Esp_{\mu} [(\delta^0_{u})^2] + 2 \sum_{k \geq 1} \Esp_{\mu} [\delta^0_{u}\delta^k_{u}]$. Using (\ref{Eq:Ntdivterror_1_}) and (\ref{Eq:Ntdivterror_2_}), we have with asymptotic probability $95\%$ that 
\begin{align*}
\begin{array}{lcl}
 \mathbb{P}_{\bar{\pi}}[U_0 = u]  Q(u,u')& \in & [\cfrac{N^{u,u'}_t}{t} +\cfrac{1.96 \sigma_1}{\sqrt{t}}, \cfrac{N^{u,u'}_t}{t} -\cfrac{1.96 \sigma_1}{\sqrt{t}}  ]\\
\mathbb{P}_{\bar{\pi}}[U_0 = u]^{-1}  & \in & [\cfrac{t}{t^u} +\cfrac{1.96 \sigma_2}{\sqrt{t}}\times \cfrac{t}{t^u}, \cfrac{t}{t^u} -\cfrac{1.96 \sigma_2}{\sqrt{t}} \times \cfrac{t}{t^u} ]. \\
\end{array}
\numberthis \label{Eq:Ntdivterror_3_}
\end{align*}
Equation (\ref{Eq:Ntdivterror_3_}) ensures that we have with probability $90\% $
\begin{align*}
Q(u,u') \in [(\cfrac{N^{u,u'}_t}{t} +\cfrac{1.96 \sigma_1}{\sqrt{t}})(\cfrac{t}{t^u} +\cfrac{1.96 \sigma_2}{\sqrt{t}}\times \cfrac{t}{t^u}), (\cfrac{N^{u,u'}_t}{t} -\cfrac{1.96 \sigma_1}{\sqrt{t}})(\cfrac{t}{t^u} -\cfrac{1.96 \sigma_2}{\sqrt{t}} \times \cfrac{t}{t^u}) ].
\end{align*}
\end{proof}

\section{Proof of Remark \ref{Rem:Vol_ratio_cons_insert}}
\label{sec:proofRemVol_ratio_cons_insert}
\begin{proof}We assume that the insertion (resp. consumption) intensity $\lambda^{+}$ (resp. $\lambda^{-}$) is constant and focus on the best bid limit $Q^1$. The stationary distribution $\pi^{old}$ of $Q^1$ verifies
\begin{equation}
\pi^{old}(q)  =  \pi^{old}(0) (\rho^{old})^q,\quad
\pi^{old}(0)  =  ( 1 + \sum_{q=1}^{\infty}(\rho^{old})^q)^{-1},\quad
\rho^{old}  =  \cfrac{\lambda^{+}  }{\lambda^{-} },
\label{Eq:Ex0}
\end{equation}
with $q \geq 1$ the size of $Q^1$. We add to the market a new agent whose insertion (resp. consumption) intensity $\lambda^{+,a}$ (resp. $\lambda^{-,a}$) is also constant. The stationary distribution $\pi^{new}$  of $Q^1$ in the new market satisfies 
\begin{equation}
\pi^{new}(q)  =  \pi^{new}(0) (\rho^{new})^q ,\quad
\pi^{new}(0)  =  ( 1 + \sum_{q=1}^{\infty}(\rho^{new})^q )^{-1},\quad
\rho^{new}  =  \cfrac{\lambda^{+} + \lambda^{+,a}}{\lambda^{-} + \lambda^{-,a}},
\label{Eq:Ex1}
\end{equation}
with $q \geq 1$ the size of $Q^1$. Using Equations (\ref{Eq:Ex0}) and (\ref{Eq:Ex1}), we can write 
\begin{equation}
\rho^{new}  =  \rho^{old} (1 + R(\mathbf{\lambda},\mathbf{\lambda}^{a})),\quad
\pi^{new}(0)  =  \big( 1 + \sum_{q=1}^{\infty}(\rho^{old})^q(1 + R (\mathbf{\lambda},\mathbf{\lambda}^{a}))^q \big)^{-1},
\label{Eq:Ex00}
\end{equation} 
with $\mathbf{\lambda} = (\lambda^+,\lambda^-)$, $\mathbf{\lambda}^a = (\lambda^{+,a},\lambda^{-,a}) $ and $R(\mathbf{\lambda},\mathbf{\lambda}^{a}) = (1 + \frac{\lambda^{+,a}}{\lambda^{+}})/(1 + \frac{\lambda^{-,a}}{\lambda^{-}}) - 1$. We want the new introduced agent to reduce the volatility of the old market which at the first order reads
\begin{equation}
\Esp_{\pi^{new}} [\eta_0^{2}] \leq \Esp_{\pi^{old}} [\eta_0^{2}].
\label{Eq:IneqIntuition1}
\end{equation}
Using Equation (\ref{Eq:Ex00}), we can reformulate Inequality (\ref{Eq:IneqIntuition1}) in the following way:
\begin{equation}
\sum_q \frac{(\rho^{old})^q\bigg(1 + R(\mathbf{\lambda},\mathbf{\lambda}^{a})\bigg)^q}{\bigg(1 + \sum_{j=1}^{\infty} (\rho^{old})^j (1 + R(\mathbf{\lambda},\mathbf{\lambda}^{a}))^j \bigg)} \eta_0^{2}(q)  \leq \sum_q \frac{(\rho^{old})^q}{\bigg( 1 + \sum_{j=1}^{\infty}(\rho^{old})^j \bigg)}\eta_0^{2}(q), 
\label{Eq:IneqIntuition2}
\end{equation}
for any function $\eta_0$. To satisify Inequality (\ref{Eq:IneqIntuition2}) we need $
R(\mathbf{\lambda},\mathbf{\lambda}^{a}) \geq 0 $ which leads to
\begin{align*}
\cfrac{\lambda^{+,a}}{\lambda^{-,a}} \geq \cfrac{\lambda^{+}}{\lambda^{-}},
\end{align*}
This condition is a well-known result which ensures that the new agent needs to have an insertion/consumption ratio greater than the one of the market.  
\end{proof}

\section{Supplementary numerical results}
\label{sec: appendix stationary measure}

The three next figures show the liquidity consumption and provision intensities at the first limit relative to the whole market according to the queue size, the corresponding stationary measure and the long term volatility, respectively for EssilorLuxottica, Michelin and Orange. 

\begin{figure}[H]

\centering
    \hfill (a) Intensity of the market \hfill (b) Stationary measure $Q^1$ \hfill ~\\
        \includegraphics[width=0.42\linewidth]{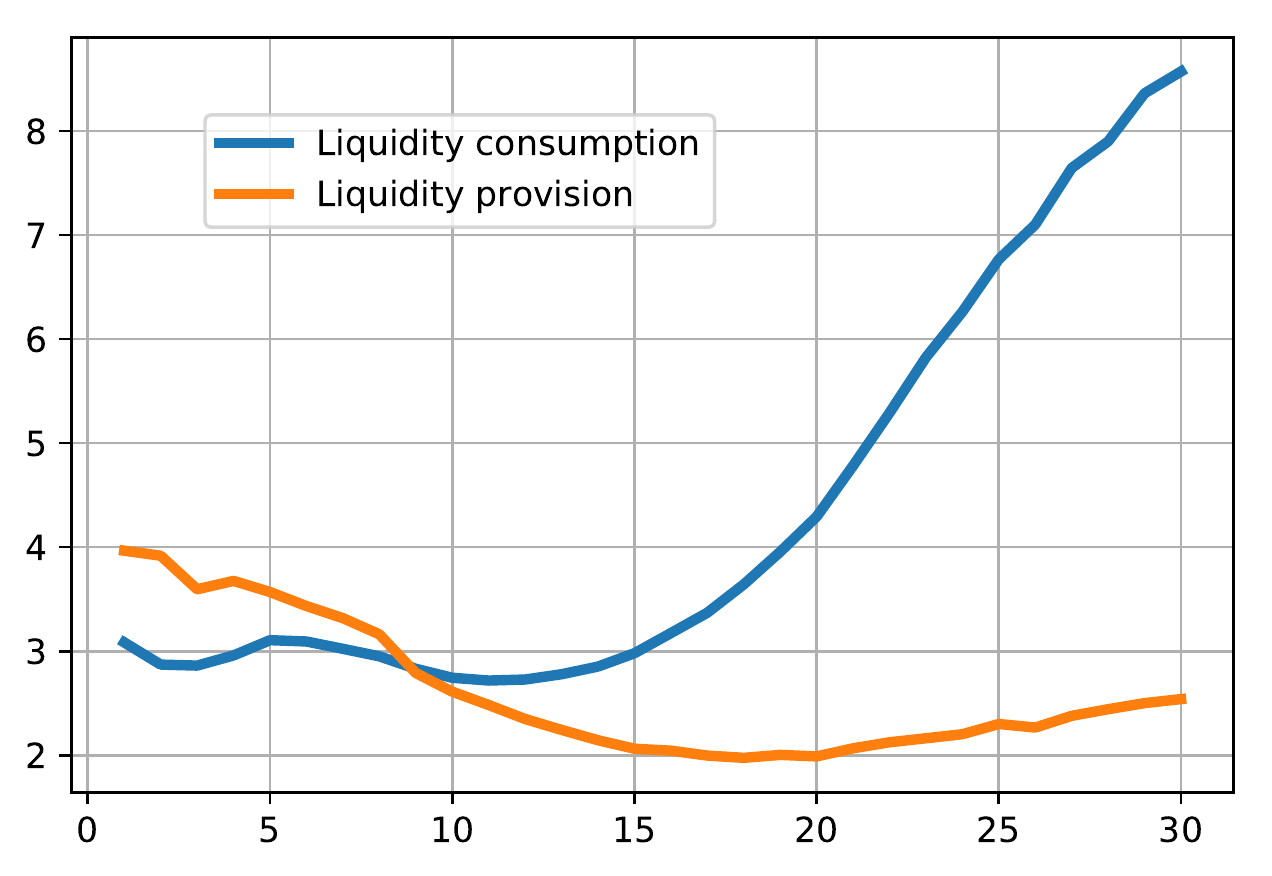}
	    \includegraphics[width=0.47\linewidth]{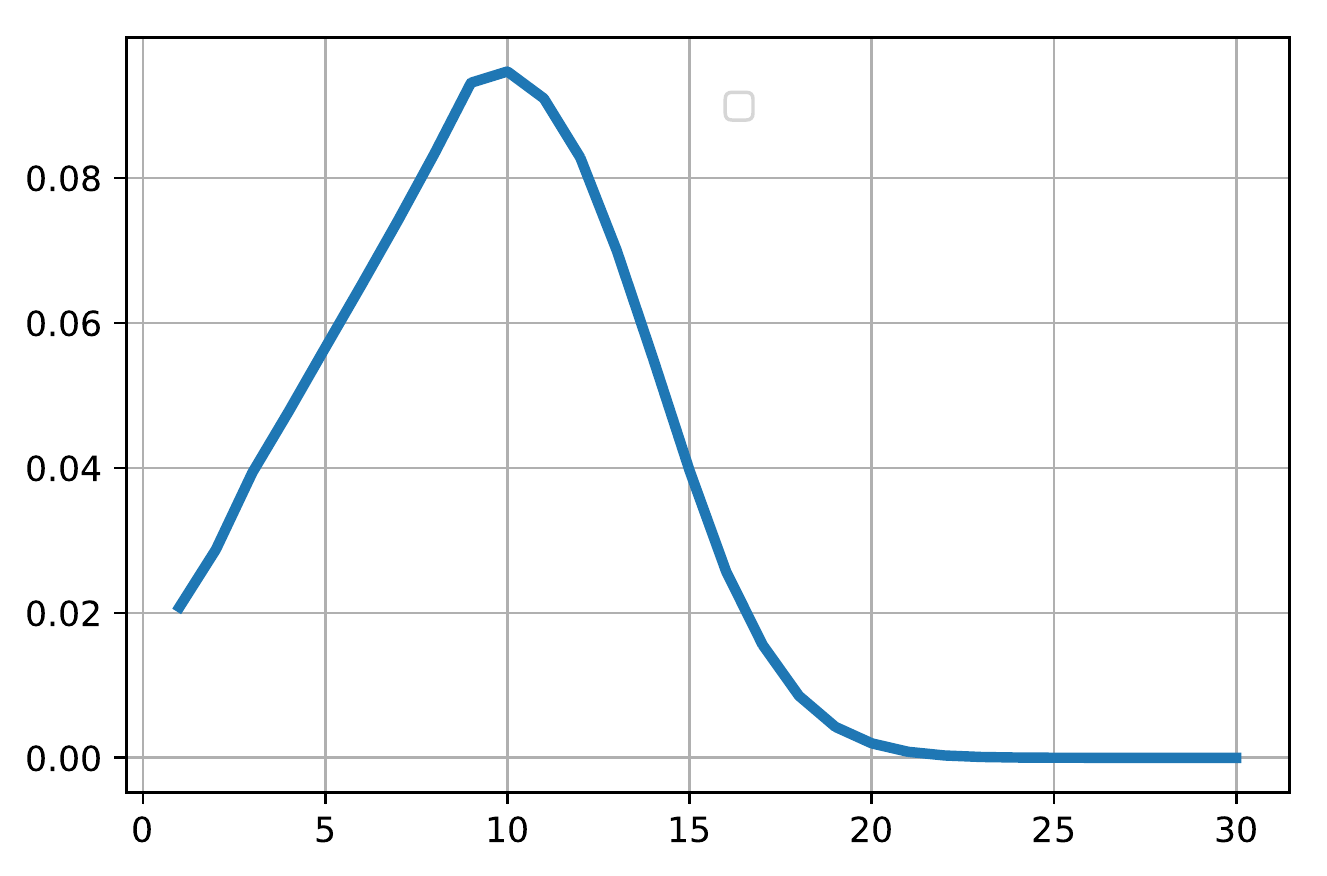}\\
	    \hfill Long term price volatility $ \sigma^{2,G} =  0.038$, $\sigma^{2,M}_{10} = 0.26$. \hfill ~\\
	    \caption{(a) Liquidity insertion and consumption intensities (in orders per second) with respect to the queue size (in AES) and (b) the corresponding stationary distribution of $Q^1$ with respect to the queue size (in AES), proper to ExilorLuxottica.}
	    \label{Exilor}
\end{figure}

\begin{figure}[H]

\centering
    \hfill (a) Intensity of the market \hfill (b) Stationary measure $Q^1$ \hfill ~\\
        \includegraphics[width=0.42\linewidth]{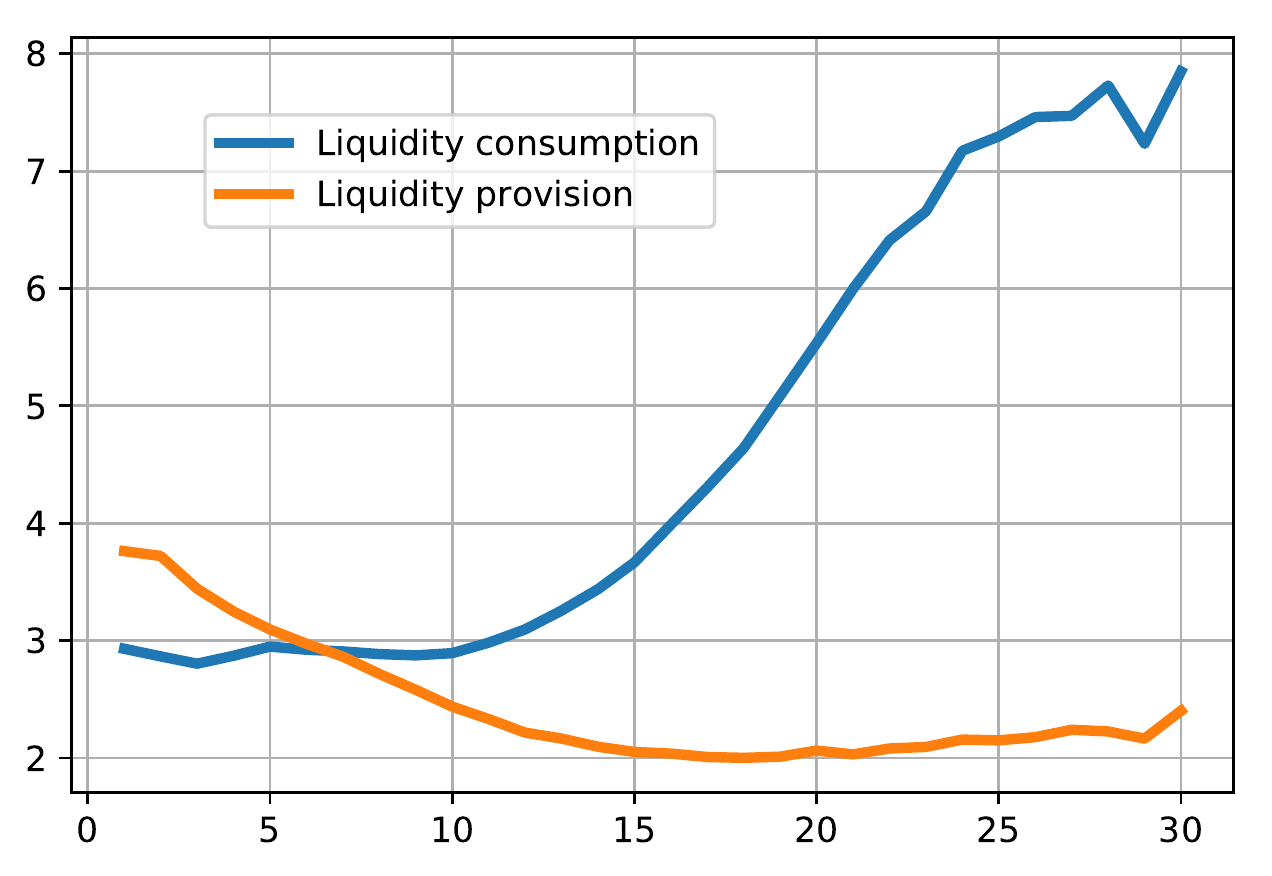}
	    \includegraphics[width=0.47\linewidth]{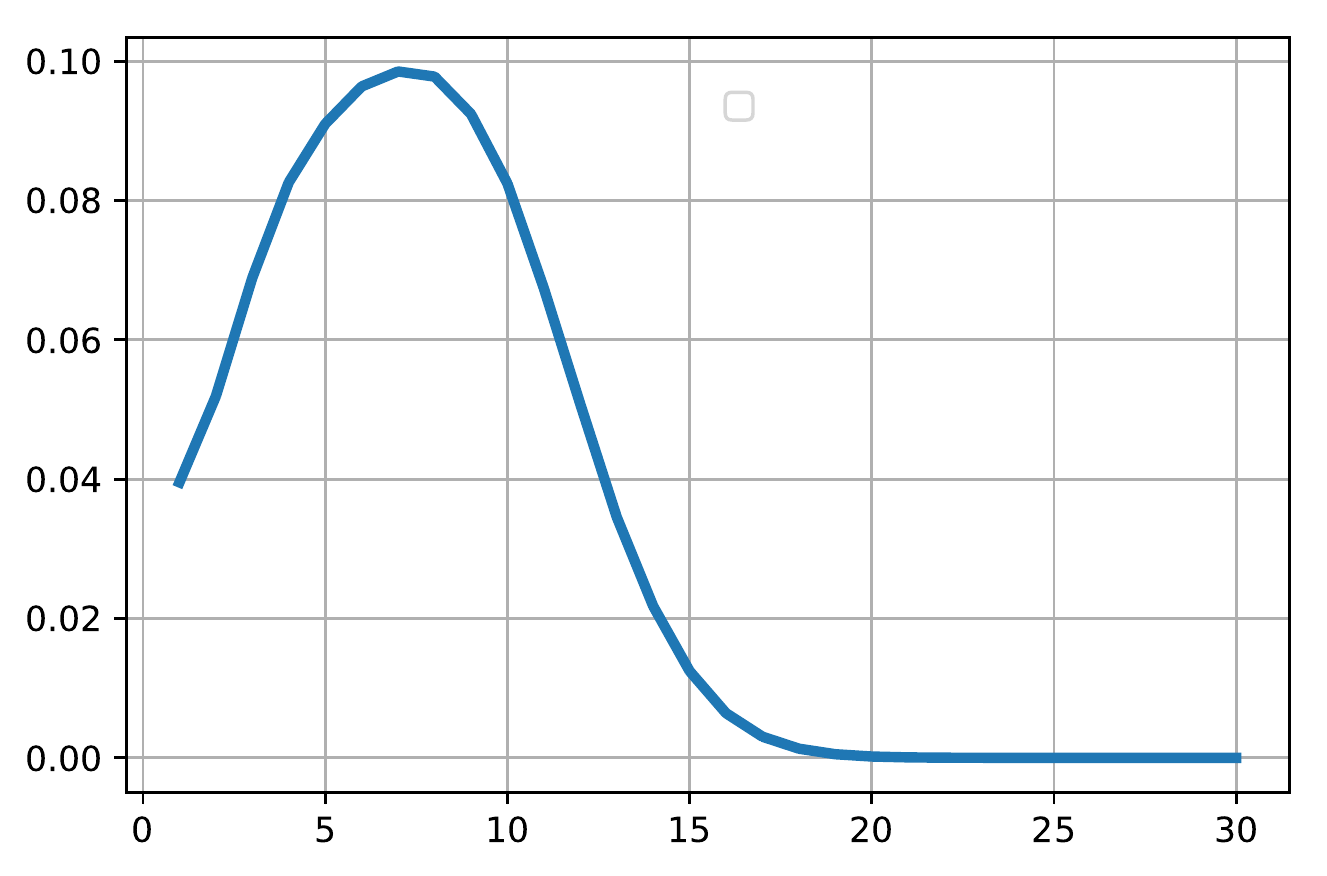}\\
	    \hfill Long term price volatility $\sigma^{2,G} =  0.075$, $\sigma^{2,M}_{10} = 0.490$. \hfill ~\\
	    \caption{(a) Liquidity insertion and consumption intensities (in orders per second) with respect to the queue size (in AES) and (b) the corresponding stationary distribution of $Q^1$ with respect to the queue size (in AES), proper to Michelin.}
	    \label{Michelin}
\end{figure}

\begin{figure}[H]

\centering
    \hfill (a) Intensity of the market \hfill (b) Stationary measure $Q^1$ \hfill ~\\
        \includegraphics[width=0.42\linewidth]{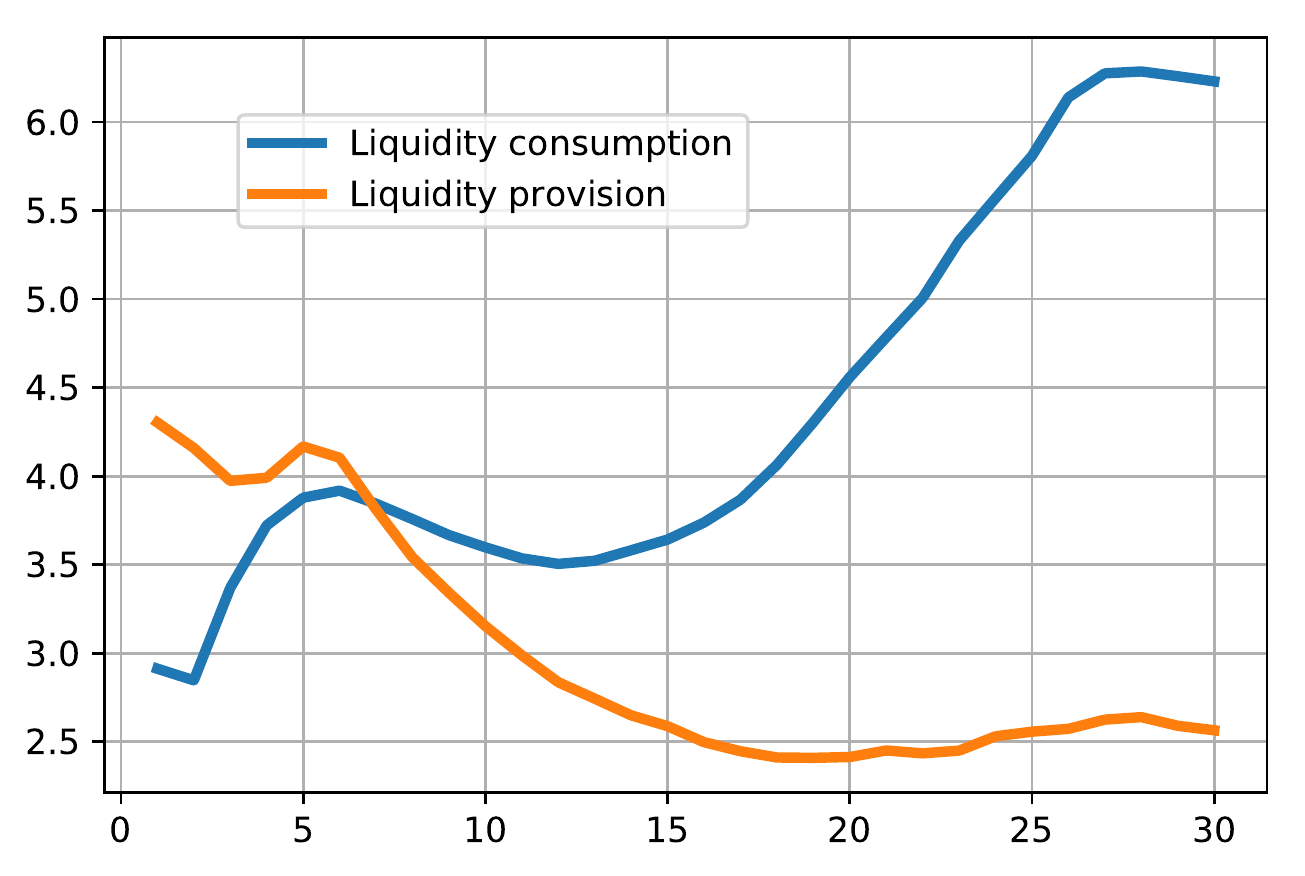}
	    \includegraphics[width=0.47\linewidth]{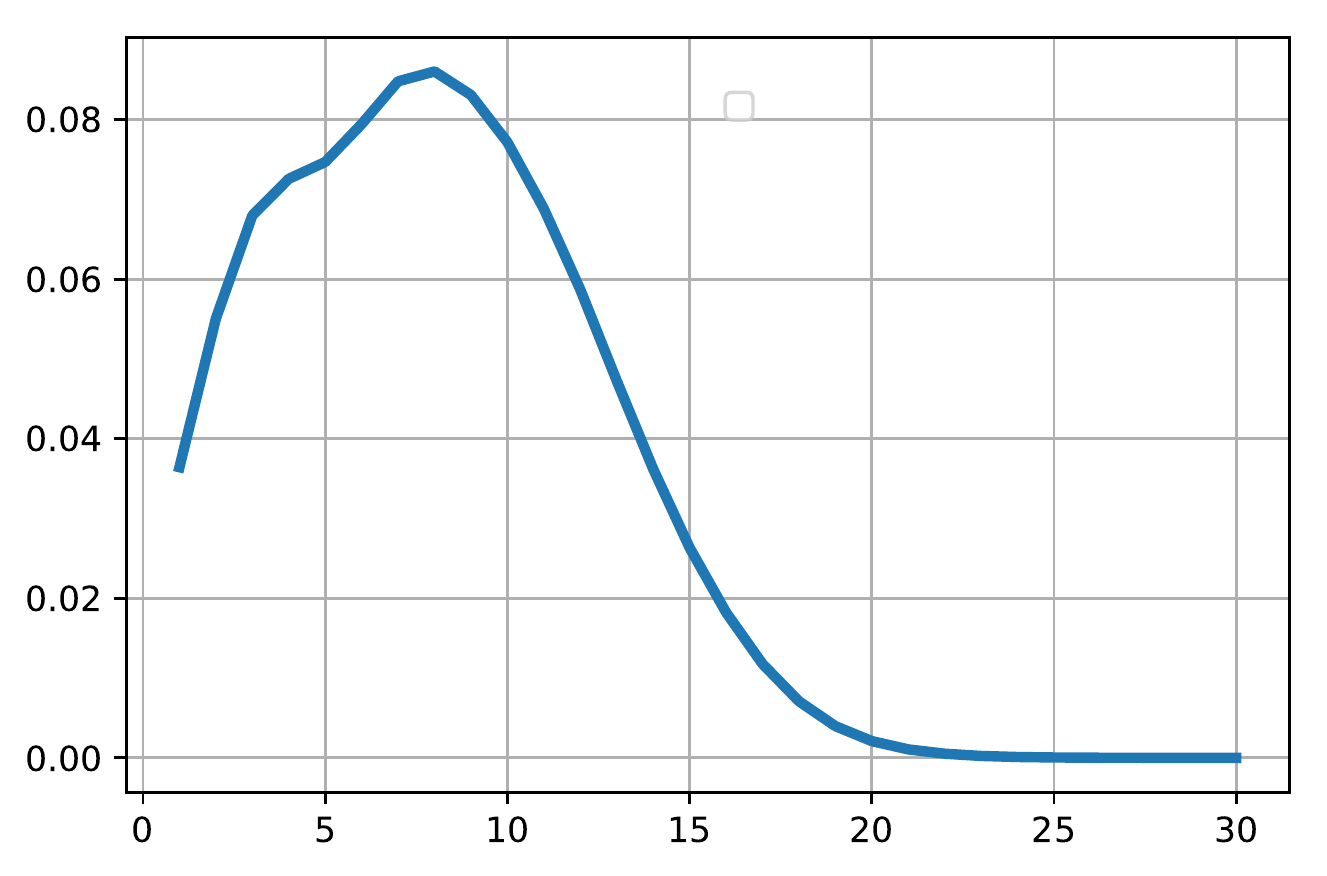}\\
	    \hfill Long term price volatility $\sigma^{2,G} =  0.065$, $\sigma^{2,M}_{10} = 0.453$. \hfill ~\\
	    \caption{(a) Liquidity insertion and consumption intensities (in orders per second) with respect to the queue size (in AES) and (b) the corresponding stationary distribution of $Q^1$ with respect to the queue size (in AES), proper to Orange.}
	    \label{Orange}
\end{figure}
\bibliographystyle{plain}

For each of the market makers, we compute the liquidity consumption and provision intensities, and the corresponding stationary measure that we would obtain in a situation where the studied market maker withdraws from the market and the other market participants do not change their behaviour. We show respectively the results relative to EssilorLuxottica, Michelin and Orange. 

\begin{figure}[H]
\centering
    \hfill Intensities and $\sigma^{2,M}_{10}$ when one market maker leaves the market : stock EssilorLuxottica\hfill ~\\
      \adjustbox{max height=\dimexpr\textheight-.5cm\relax,
           max width=\dimexpr\textwidth-1.cm\relax}{
        \includegraphics[width=1\linewidth]{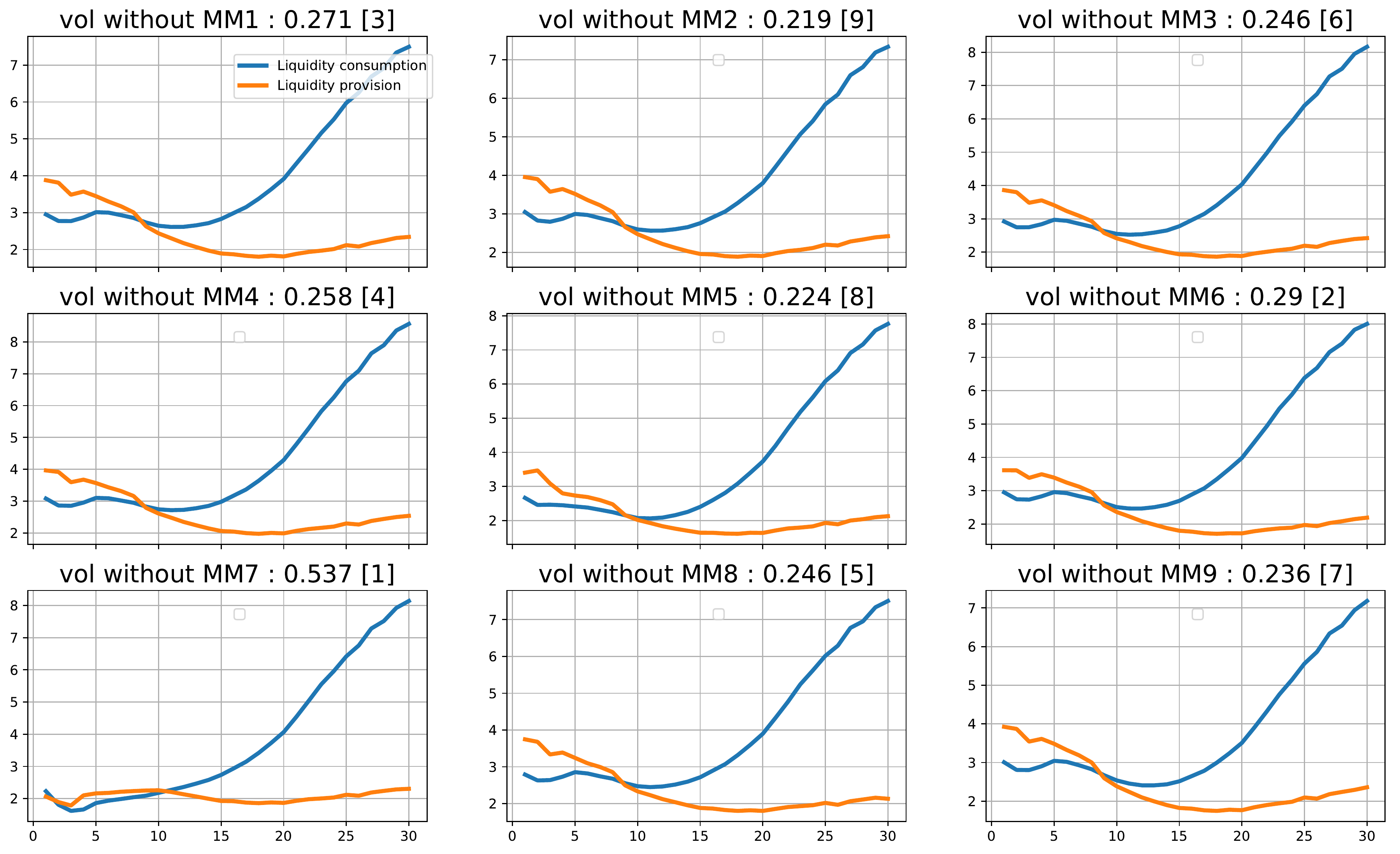}
        }
	    \caption{Liquidity insertion and consumption intensities (in orders per second) with respect to the queue size (in AES) and $\sigma^{2,M}_{10}$ when one market maker is ejected from the market for the stock EssilorLuxottica.}
\end{figure} 

\begin{figure}[H]
\centering
    \hfill Intensities and $\sigma^{2,M}_{10}$ when one market maker leaves the market : stock Michelin\hfill ~\\
      \adjustbox{max height=\dimexpr\textheight-.5cm\relax,
           max width=\dimexpr\textwidth-1.cm\relax}{
        \includegraphics[width=1\linewidth]{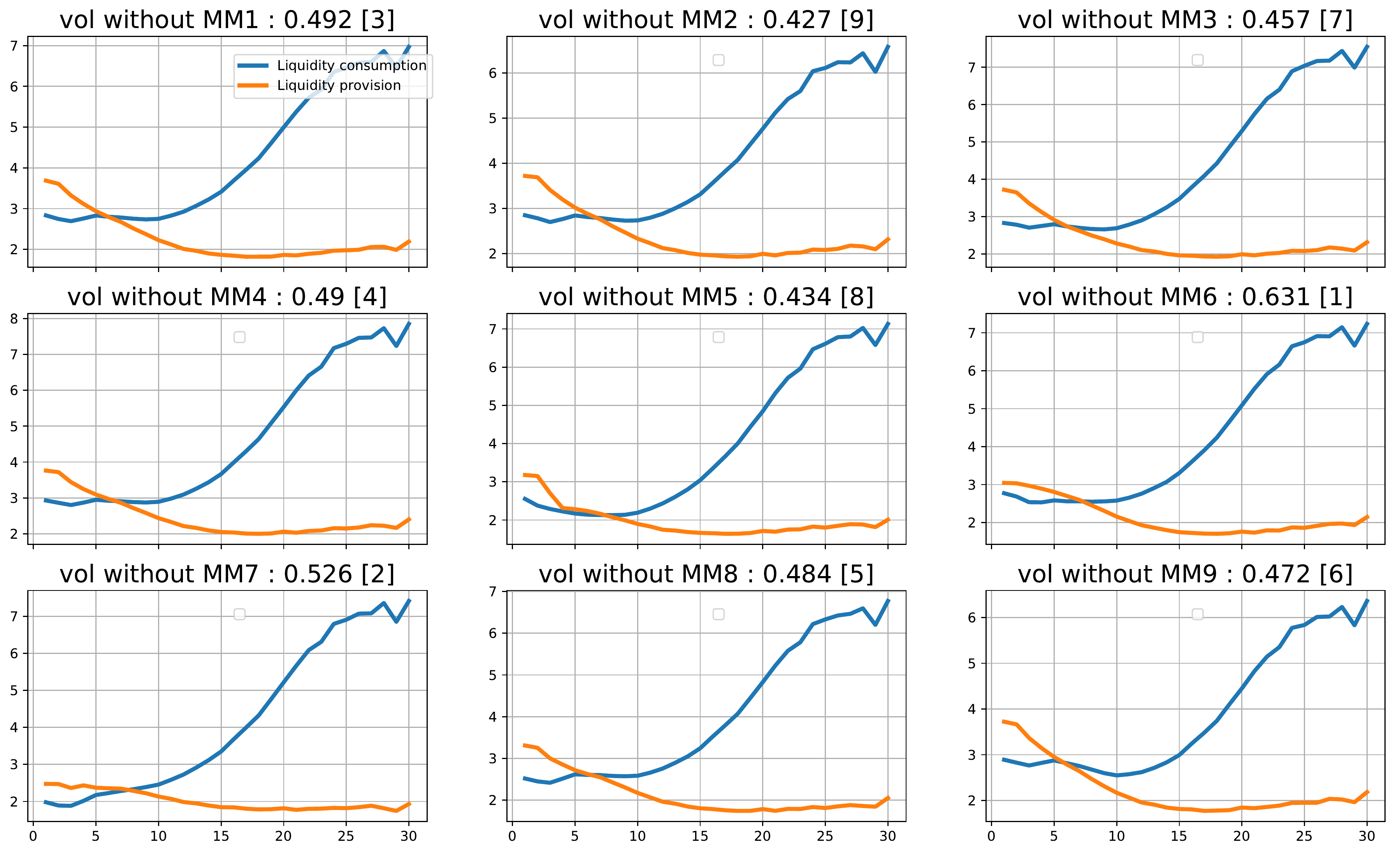}
        }
	    \caption{Liquidity insertion and consumption intensities (in orders per second) with respect to the queue size (in AES) and $\sigma^{2,M}_{10}$ when one market maker is ejected from the market for the stock Michelin.}
\end{figure} 

\begin{figure}[H]
\centering
    \hfill Intensities and $\sigma^{2,M}_{10}$ when one market maker leaves the market : stock Orange\hfill ~\\
      \adjustbox{max height=\dimexpr\textheight-.5cm\relax,
           max width=\dimexpr\textwidth-1.cm\relax}{
        \includegraphics[width=1\linewidth]{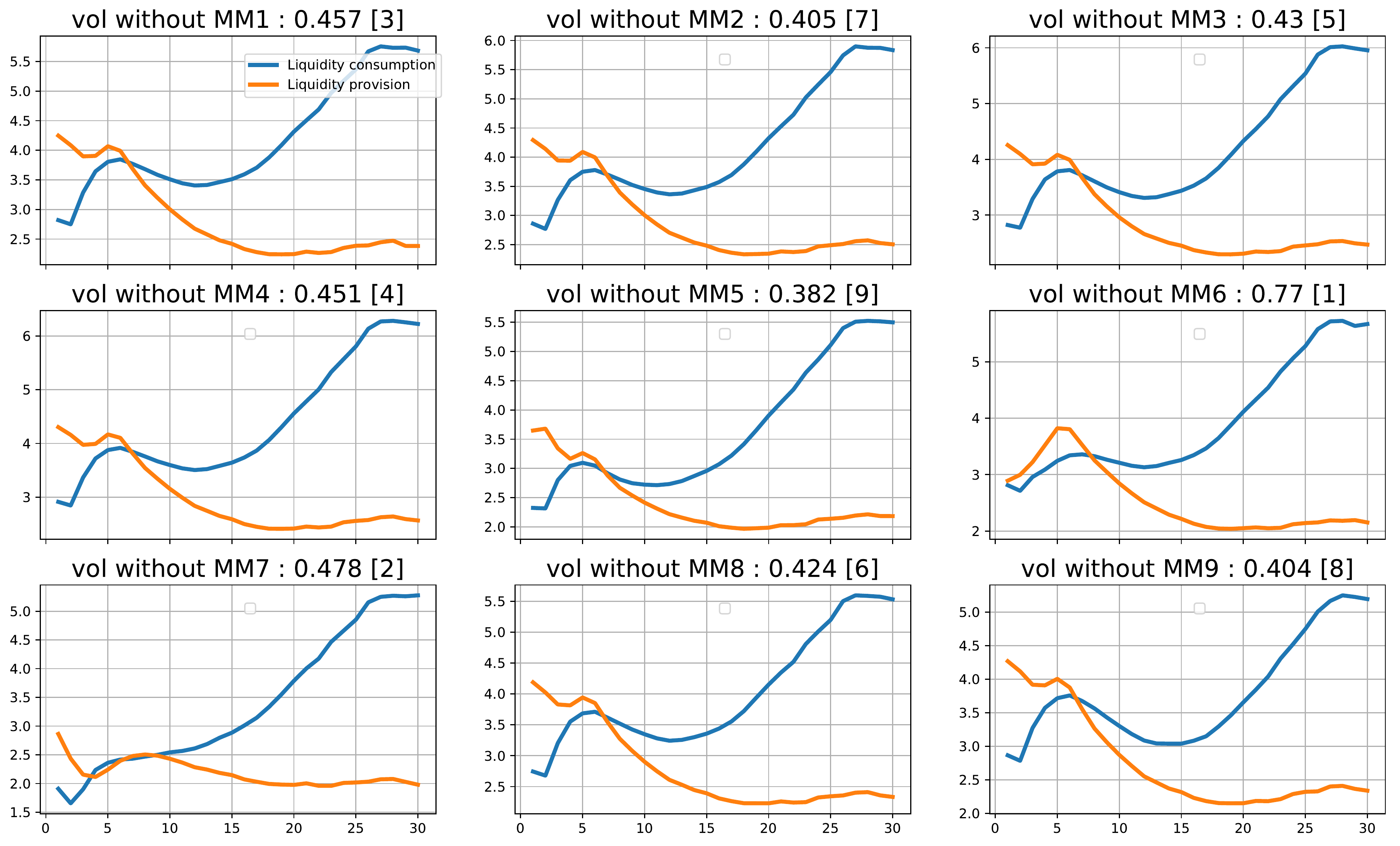}
        }
	    \caption{Liquidity insertion and consumption intensities (in orders per second) with respect to the queue size (in AES) and $\sigma^{2,M}_{10}$ when one market maker is ejected from the market for the stock Orange.}
\end{figure} 
\bibliography{biblio_market_making}

\begin{thebibliography}{10}

\bibitem{abergel2013mathematical}
Fr{\'e}d{\'e}ric Abergel and Aymen Jedidi.
\newblock A mathematical approach to order book modeling.
\newblock {\em International Journal of Theoretical and Applied Finance},
  16(05):1350025, 2013.

\bibitem{LOBModHawkes}
Fr{\'e}d{\'e}ric Abergel and Aymen Jedidi.
\newblock Long-time behavior of a {H}awkes process--based limit order book.
\newblock {\em SIAM Journal on Financial Mathematics}, 6(1):1026--1043, 2015.

\bibitem{ait2017high}
Yacine Ait-Sahalia and Mehmet Sa{\u{g}}lam.
\newblock High frequency market making: Optimal quoting.
\newblock {\em Available at SSRN 2331613}, 2017.

\bibitem{citeulike:13497373}
Emmanuel Bacry, Adrian Iuga, Matthieu Lasnier, and Charles-Albert Lehalle.
\newblock Market impacts and the life cycle of investors orders.
\newblock {\em Market Microstructure and Liquidity}, 1(02):1550009, 2015.

\bibitem{baron2018risk}
Matthew Baron, Jonathan Brogaard, Bj{\"o}rn Hagstr{\"o}mer, and Andrei
  Kirilenko.
\newblock Risk and return in high-frequency trading.
\newblock {\em Journal of {F}inancial and {Q}uantitative {A}nalysis},
  54:993--1024, 2019.

\bibitem{beesack1984some}
Paul~R Beesack.
\newblock On some {G}ronwall-type integral inequalities in n independent
  variables.
\newblock {\em {J}ournal of {M}athematical {A}nalysis and {A}pplications},
  100(2):393--408, 1984.

\bibitem{billing}
Patrick Billingsley.
\newblock {\em Convergence of probability measures}.
\newblock Wiley Series in Probability and Statistics: Probability and
  Statistics. John Wiley \& Sons Inc., New York, second edition, 1999.
\newblock A Wiley-Interscience Publication.

\bibitem{blanc2017quadratic}
Pierre Blanc, Jonathan Donier, and Jean-Philippe Bouchaud.
\newblock Quadratic {H}awkes processes for financial prices.
\newblock {\em Quantitative Finance}, 17(2):171--188, 2017.

\bibitem{bremaud1996stability}
Pierre Br{\'e}maud and Laurent Massouli{\'e}.
\newblock Stability of nonlinear {H}awkes processes.
\newblock {\em The Annals of Probability}, pages 1563--1588, 1996.

\bibitem{cont2013price}
Rama Cont and Adrien De~Larrard.
\newblock Price dynamics in a {M}arkovian limit order market.
\newblock {\em SIAM Journal on Financial Mathematics}, 4(1):1--25, 2013.

\bibitem{daley2007introduction}
Daryl~J Daley and David Vere-Jones.
\newblock {\em An introduction to the theory of point processes: volume II:
  general theory and structure}.
\newblock Springer Science \& Business Media, 2007.

\bibitem{jldoob}
Joseph~L Doob.
\newblock {\em Stochastic processes}.
\newblock Wiley Classics Library. John Wiley \& Sons Inc., New York, 1990.

\bibitem{hairer2009ergodic}
Martin Hairer.
\newblock Ergodic properties of a class of non-{M}arkovian processes trends in
  stochastic analysis ({L}ondon {M}athematical {S}ociety {L}ecture note series
  vol 353), 2009.

\bibitem{hendershott2011does}
Terrence Hendershott, Charles~M Jones, and Albert~J Menkveld.
\newblock Does algorithmic trading improve liquidity?
\newblock {\em The Journal of Finance}, 66(1):1--33, 2011.

\bibitem{huang2015simulating}
Weibing Huang, Charles-Albert Lehalle, and Mathieu Rosenbaum.
\newblock Simulating and analyzing order book data: The queue-reactive model.
\newblock {\em Journal of the American Statistical Association},
  110(509):107--122, 2015.

\bibitem{huang2017ergodicity}
Weibing Huang and Mathieu Rosenbaum.
\newblock Ergodicity and diffusivity of {M}arkovian order book models: a
  general framework.
\newblock {\em SIAM Journal on Financial Mathematics}, 8(1):874--900, 2017.

\bibitem{huang2019glosten}
Weibing Huang, Mathieu Rosenbaum, and Pamela Saliba.
\newblock From {G}losten-{M}ilgrom to the whole limit order book and
  applications to financial regulation.
\newblock {\em Available at SSRN 3343779}, 2019.

\bibitem{jacod1975multivariate}
Jean Jacod.
\newblock Multivariate point processes: predictable projection,
  {R}adon-{N}ikodym derivatives, representation of martingales.
\newblock {\em Zeitschrift f{\"u}r Wahrscheinlichkeitstheorie und verwandte
  Gebiete}, 31(3):235--253, 1975.

\bibitem{jaisson2015market}
Thibault Jaisson.
\newblock Market impact as anticipation of the order flow imbalance.
\newblock {\em Quantitative Finance}, 15(7):1123--1135, 2015.

\bibitem{jones2013we}
Charles~M Jones.
\newblock What do we know about high-frequency trading?
\newblock {\em Available at SSRN 2236201}, 2013.

\bibitem{jovanovic2016middlemen}
Boyan Jovanovic and Albert~J Menkveld.
\newblock Middlemen in limit order markets.
\newblock {\em Available at SSRN 1624329}, 2016.

\bibitem{jusselin2018no}
Paul Jusselin and Mathieu Rosenbaum.
\newblock No-arbitrage implies power-law market impact and rough volatility.
\newblock {\em arXiv preprint arXiv:1805.07134}, 2018.

\bibitem{kirilenko2017flash}
Andrei Kirilenko, Albert~S Kyle, Mehrdad Samadi, and Tugkan Tuzun.
\newblock The flash crash: high frequency trading in an electronic market.
\newblock {\em The Journal of Finance}, 72(3):967--998, 2017.

\bibitem{last1993dependent}
G{\"u}nter Last.
\newblock On dependent marking and thinning of point processes.
\newblock {\em Stochastic processes and their applications}, 45(1):73--94,
  1993.

\bibitem{lehalle2017limit}
Charles-Albert Lehalle and Othmane Mounjid.
\newblock Limit order strategic placement with adverse selection risk and the
  role of latency.
\newblock {\em Market Microstructure and Liquidity}, 3(01):1750009, 2017.

\bibitem{lehalle2018optimal}
Charles-Albert Lehalle, Othmane Mounjid, and Mathieu Rosenbaum.
\newblock Optimal liquidity-based trading tactics.
\newblock {\em arXiv preprint arXiv:1803.05690}, 2018.

\bibitem{lewis1979simulation}
Peter A~W Lewis and Gerald~S Shedler.
\newblock Simulation of nonhomogeneous {P}oisson processes by thinning.
\newblock {\em Naval research logistics quarterly}, 26(3):403--413, 1979.

\bibitem{madhavan2000market}
Ananth Madhavan.
\newblock Market microstructure: A survey.
\newblock {\em Journal of Financial Markets}, 3(3):205--258, 2000.

\bibitem{madhavan2012exchange}
Ananth Madhavan.
\newblock Exchange-traded funds, market structure, and the flash crash.
\newblock {\em Financial Analysts Journal}, 68(4):20--35, 2012.

\bibitem{megarbane2017behavior}
Nicolas Megarbane, Pamela Saliba, Charles-Albert Lehalle, and Mathieu
  Rosenbaum.
\newblock The behavior of high frequency traders under different market stress
  scenarios.
\newblock {\em Market Microstructure and Liquidity}, 3(03n04):1850005, 2017.

\bibitem{menkveld2013high}
Albert~J Menkveld.
\newblock High frequency trading and the new market makers.
\newblock {\em Journal of Financial Markets}, 16(4):712--740, 2013.

\bibitem{meyn1992stability}
Sean~P Meyn and Richard~L Tweedie.
\newblock Stability of {M}arkovian processes i: Criteria for discrete-time
  chains.
\newblock {\em Advances in Applied Probability}, 24(3):542--574, 1992.

\bibitem{meyn2012markov}
Sean~P Meyn and Richard~L Tweedie.
\newblock {\em Markov chains and stochastic stability}.
\newblock Springer Science \& Business Media, 2012.

\bibitem{moallemi2016model}
Ciamac~C Moallemi and Kai Yuan.
\newblock A model for queue position valuation in a limit order book.
\newblock {\em Working paper}, 2016.

\bibitem{morariu2018state}
Maxime Morariu-Patrichi and Mikko~S Pakkanen.
\newblock State-dependent {H}awkes processes and their application to limit
  order book modelling.
\newblock {\em arXiv preprint arXiv:1809.08060}, 2018.

\bibitem{citeulike:1400630}
James~R. Norris.
\newblock {\em Markov Chains (Cambridge Series in Statistical and Probabilistic
  Mathematics)}.
\newblock Cambridge University Press, July 1998.

\bibitem{ogata1981lewis}
Yosihiko Ogata.
\newblock On {L}ewis' simulation method for point processes.
\newblock {\em IEEE Transactions on Information Theory}, 27(1):23--31, 1981.

\bibitem{rambaldi2018disentangling}
Marcello Rambaldi, Emmanuel Bacry, and Jean-Fran{\c{c}}ois Muzy.
\newblock Disentangling and quantifying market participant volatility
  contributions.
\newblock {\em arXiv preprint arXiv:1807.07036}, 2018.

\bibitem{riordan2012latency}
Ryan Riordan and Andreas Storkenmaier.
\newblock Latency, liquidity and price discovery.
\newblock {\em Journal of Financial Markets}, 15(4):416--437, 2012.

\bibitem{saliba2019information}
Pamela Saliba.
\newblock The information content of high frequency traders aggressive orders:
  recent evidences.
\newblock {\em Available at SSRN 3364330}, 2019.

\bibitem{smith2003statistical}
Eric Smith, Doyne~J Farmer, Laszlo Gillemot, and Supriya Krishnamurthy.
\newblock Statistical theory of the continuous double auction.
\newblock {\em Quantitative finance}, 3(6):481--514, 2003.

\bibitem{wu2019queue}
Peng Wu, Marcello Rambaldi, Jean-Fran{\c{c}}ois Muzy, and Emmanuel Bacry.
\newblock Queue-reactive hawkes models for the order flow.
\newblock {\em arXiv preprint arXiv:1901.08938}, 2019.

\end{thebibliography}
\end{document}